\DeclareMathOperator*{\argmin}{argmin}
\DeclareMathOperator*{\argmax}{argmax}
\newtheorem{lemma}{Lemma} 
\newtheorem{example}{Example} 
\newtheorem{definition}{Definition} 
\newtheorem{proposition}{Proposition}
\newtheorem{theorem}{Theorem}
\newtheorem{corollary}{Corollary}
\newcommand{\Var}{\textup{Var}}
\newcommand{\R}{\mathbb{R}}
\newcommand{\X}{\boldsymbol{X}}
\title{Selective inference is easier with p-values}
\author{Anav Sood\\ Stanford University}
\date{\today}
\begin{document}

\maketitle

\begin{abstract}
Selective inference is a subfield of statistics that enables valid inference after selection of a data-dependent question. In this paper, we introduce selectively dominant p-values, a class of p-values that allow practitioners to easily perform inference after arbitrary selection procedures. Unlike a traditional p-value, whose distribution must stochastically dominate the uniform distribution under the null, a selectively dominant p-value must have a post-selection distribution that stochastically dominates that of a uniform having undergone the same selection process; moreover, this property must hold simultaneously for all possible selection processes. Despite the strength of this condition, we show that all commonly used p-values (e.g., p-values from two-sided testing in parametric families, one-sided testing in monotone likelihood ratio and exponential families, $F$-tests for linear regression, and permutation tests) are selectively dominant. By recasting two canonical selective inference problems-inference on winners and rank verification-in our selective dominance framework, we provide simpler derivations, a deeper conceptual understanding, and new generalizations and variations of these methods. Additionally, we use our insights to introduce selective variants of methods that combine p-values, such as Fisher's combination test. 
\end{abstract}

\section{Introduction}

Selective inference is a sub-field of statistics that allows practitioners to make valid inferences even when the underlying statistical question is chosen by a data-driven selection process. However, many selective methods, which operate by conditioning on the selection event, can be difficult to derive, hard to implement, and exhibit counter-intuitive behavior. To statisticians outside of the sub-field, each selective procedure may seem to come from a different argument or approach.

In this paper, we provide a unifying framework for selective inference centered around p-values. So long as a statistician knows how to compute traditional  p-values, our framework enables them to easily deliver valid hypothesis tests and confidence intervals, even when the inferential question at hand arose from an arbitrary, albeit known, selection procedure. Our framework (1) can greatly simplify the process of designing new selective methods and (2) results in more natural and general derivations of some existing selective methods, allowing for a deeper understanding of their behavior as well as new variations and extensions.

\subsection{Motivation}

\begin{figure}
    \centering
    \scalebox{1}{
    \hspace{-0.025 \textwidth}
    \begin{minipage}{0.35\textwidth}
        \centering
        \includegraphics[width=\textwidth]{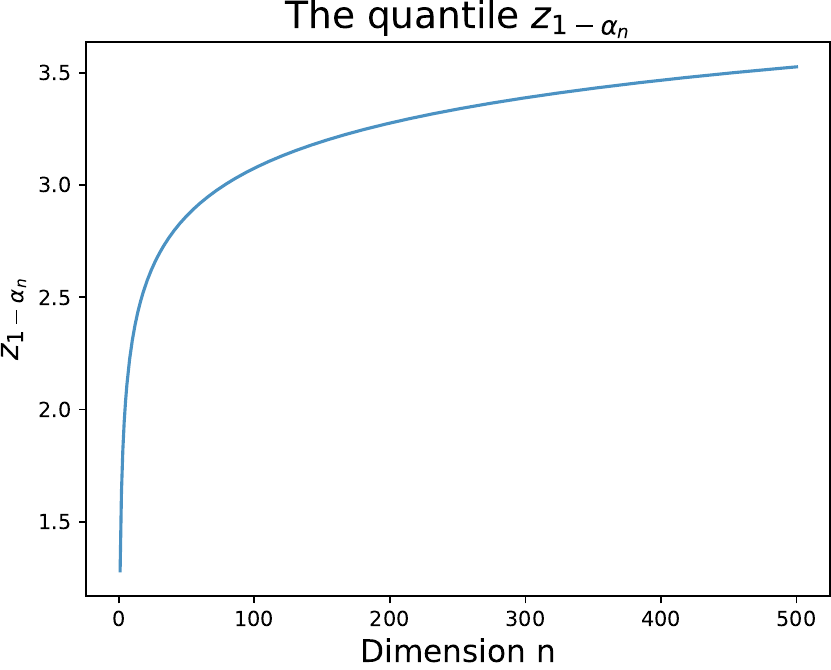}
    \end{minipage}
    \hspace{0.05 \textwidth}
    \begin{minipage}{0.35\textwidth}
        \centering
        \includegraphics[width=\textwidth]{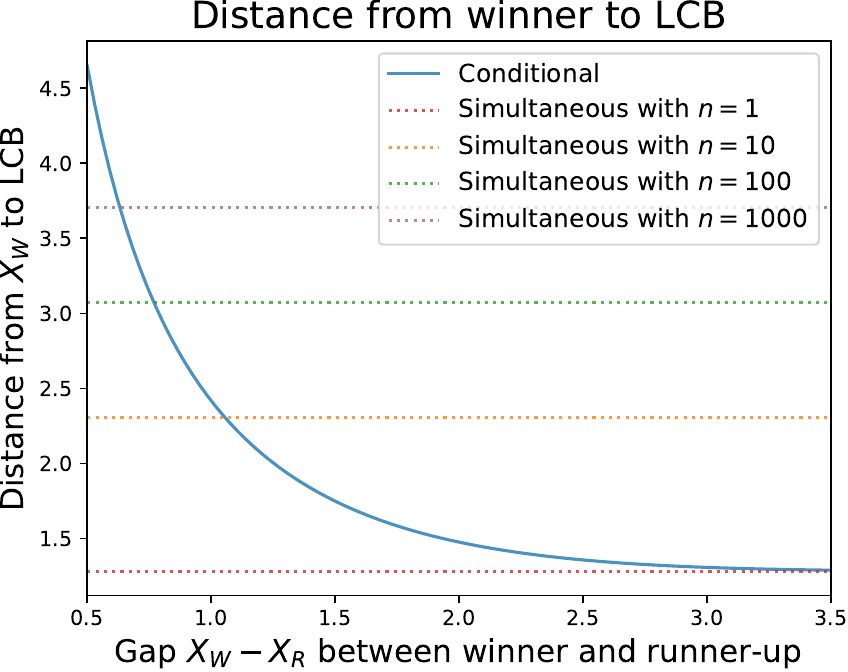}
    \end{minipage}
    }
    \caption{ The first panel (left) shows the growth of the quantile $z_{1 - \alpha_n}$ as a function of the dimension $n$. The second panel (right) gives the distance between the level $\alpha=0.1$ conditional LCB and the winner $X_W$ as a function of the gap $X_W - X_R$ between winner and the runner-up. For dimensions $n=1,10,100,1000$, it also gives the distance from $X_W$ to the level $\alpha=0.1$ simultaneous LCB.}
    \label{fig:winner}
\end{figure}

As motivation, we recall the age-old task of making inferences from the largest observation. Considering Gaussian data $X \sim N(\mu, I_n)$ with unknown mean $\mu$, the largest observation $W = \argmax_{i \in [n]} X_i$ may well correspond to the largest mean. Therefore, a natural way to verify the existence of a large $\mu_i$ is to give a lower confidence bound (LCB) for $\mu_W$, the mean of the winning value. 

Traditionally, we provide such an LCB via Sidak's simultaneous approach. Let $z_{1-\alpha}$ denote the $1-\alpha$ quantile of a standard normal distribution and define $\alpha_n = 1 - (1-\alpha)^{1/n}$. Sidak's approach tells us that the lower bounds $ \mu_i > X_i - z_{1 - \alpha_n}$ all simultaneously hold with probability $1-\alpha$. Therefore, $\hat{\mu}_{simul} = X_W - z_{1 - \alpha_n}$ is a valid lower bound for $\mu_W$ that holds with probability at least $1-\alpha$. Performing simultaneous inference on $n$ means, however, comes at a cost. As $n$ grows, the quantile $z_{1 - \alpha_n}$ grows as well (depicted in the left panel of \Cref{fig:winner}), and the distance from the winning observation to the LCB correspondingly increases. 

The more modern selective approach suggests providing an LCB that is valid conditionally on $W$. This approach provides inferences solely for the winning mean $\mu_W$, so we may hope that it avoids the simultaneous inference's curse of dimensionality. Following the recipe of \cite{Fithian2017} (see \Cref{sec:cond_appdx} for details), one finds that the conditional LCB $\hat{\mu}_{cond}$ is the infimum of the confidence region
\begin{equation}
    \label{eq:motivating_lcb}
    \{\mu_0 : \mu_0 > X_W - \text{Quantile}(1-\alpha, TN(0, 1, X_R - \mu_0, \infty))  \},
\end{equation}
where $X_R$ is the runner-up (second largest) observation and $TN(\mu, \sigma^2, a, b)$ is a $N(\mu, \sigma^2)$ distribution truncated to lie in the interval $[a, b]$. 

The conditional LCB \eqref{eq:motivating_lcb} is near impossible to parse at first glance, but it turns out to have some very interesting properties. As plotted in right panel of \Cref{fig:winner}, the distance $X_W - \hat{\mu}_{cond}$ between the winner and the conditional LCB is purely a function of the gap $X_W - X_R$ between the winner and runner-up. If the gap between $X_W$ and $X_R$ is large, then the conditional LCB for $\mu_W$ will be roughly $X_{W} - z_{1-\alpha}$, i.e., what we expect in a one-dimensional inference problem. But as the runner-up gets close to the winner, the conditional LCB explodes quickly to $-\infty$ and gives much worse inferences than the classical approach. In summary, the conditional method appears to avoid the curse of dimensionality when the winner is well separated, but when it is not, the consequences can be tremendous. 

The motivation for this article comes from the following fact: the conditional LCB becomes shockingly simple to parse once written in terms of p-values. Suppose we wanted just to check whether each LCB was non-negative, i.e., we used each LCB solely to verify the existence of a positive mean. Typically, we verify the existence of a positive mean by using the p-values $p_i = 1 - \Phi(X_i)$ to test the nulls $H_{0, i} : \mu_i \leq 0$. It turns out that the simultaneous LCB verifies the existence of a positive mean when smallest of these p-values $p_{(1)}$ is at most $\alpha_n$. In contrast, the conditional LCB does so when the ratio of the two smallest p-values $p_{(1)}/p_{(2)}$ is at most $\alpha$. 

At least when the p-values $p_i$ are uniform under the null, there appears to be a simple and intuitive proof for the conditional procedure's validity (much simpler than our derivations in \Cref{sec:cond_appdx}): under the null, the smallest p-value $p_{(1)}$ should be uniform on $[0, p_{(2)}]$, so we should control error if we reject when $p_{(1)} \leq \alpha p_{(2)}$. But the question remains, does this argument go through when p-values are not exactly uniform? If not, why does it work in the Gaussian case?

The framework we develop in this article gives a complete answer to these questions. It identifies a broad class of p-values for which the sort of argument we have outlined above goes through. Beyond inference on winners, it turns out to encompass a comprehensive set of selective inference problems, and it provides a clearer view on how selective methods behave and are derived. As we illustrate throughout the article, our framework serves to both deepen our understanding of existing methods in the selective inference literature, as well as enable to us to easily design new selective methods for a wide range of inferential problems. 

\subsection{Our contributions}

In this paper, we introduce the selective dominance framework, which we briefly summarize here. In our framework, we imagine using a valid p-value $p$ to test the null hypothesis $H_0$, but we only decide whether or not to test $H_0$ \underline{after} we have observed $p$. Our decision on whether or not to test $H_0$ is determined by a \textbf{selection function} $s(x)$: after observing that $p=x$, we use $p$ to test $H_0$ with probability $s(x)$. Due to bias from this selection, rejecting when $p \leq \alpha$ is no longer guaranteed to maintain Type I error control. If, however, $p$ is a \textbf{selectively dominant} p-value, then we can still maintain Type I error control if we reject when the adjusted p-value $\int_0^p s(x) dx/\int_0^1 s(x) dx$ is at most $\alpha$. We give a precise definition of selective dominance in \Cref{sec:dominance}, where we also show that all the p-values we commonly use are selectively dominant. For many selective inference problems, $s(x)$ turns out to be simple enough that this adjustment is computable is closed form, and the resulting procedure is correspondingly easy to implement and interpret. 

After presenting a few simple examples of how to apply our framework, the remainder of the article focuses on three specific applications, which we describe below. \newline 

\noindent \textbf{Inference on winners} (\Cref{sec:winner})\textbf{:} We characterize both the earlier conditional inference procedure \eqref{eq:motivating_lcb} and also hybrid inference \citep{Andrews2023}, an inference on winners procedure that interpolates between the conditional and simultaneous approaches, in terms of selectively dominant p-values. Once written in terms of p-values, both methods become significantly easier to derive, interpret, and implement. Doing so also provides a more general and fundamental view of these methods, enabling their application beyond the Gaussian data setting. Via our p-value viewpoint, we are able gain insight on (1) when the conditional LCB may be useful and (2) the extent to which the conditional LCB explodes to $-\infty$. Also, we are able to clearly illustrate that hybrid inference strictly dominates its naive union bound competitor, which splits the Type I error budget between the conditional and simultaneous approaches and then takes the better of the two. Unfortunately, our viewpoint also suggests that, while it results in a strict improvement, hybrid inference cannot do too much better than this naive alternative. \newline 

\noindent \textbf{Rank verification} (\Cref{sec:rank_verification})\textbf{:}  Our selective dominance framework allows us to very easily establish the following claim: if a researcher observes two Gaussian samples $X_i \sim N(\mu_i, 1)$  and rejects the null $H_0: \mu_1 = \mu_2$ via the standard two-sided test, then they can further infer that the winning sample has the strictly larger of the two means. While this result was previously known, we show, perhaps more surprisingly, that if a researcher wants just to verify that the mean of the winning sample is \underline{at least} as large as the other mean, they can run a one-sided test comparing the winner to the loser with \underline{no correction} for selection. Finally, we illustrate how \cite{Hung2019}'s rank verification for exponential families procedure fits nicely in our framework. In doing so, we show that the original method does not appropriately handle settings with ties, but that our framework naturally handles these settings properly. \newline

\noindent \textbf{Combining selective p-values } (\Cref{sec:multiple}) \textbf{:} Considering a meta-analysis problem where we have p-values from an original and replication study, we propose a new data-carving method that combines inferences from both studies while still accounting for potential publication bias in the original study. On the open science collaboration dataset, which contains 92 pairs of original replication psychology studies \citep{OSF}, our method finds that 47 pairs of studies are significant, while using solely the replication p-value results in only 34 significant findings. In the same vein, we provide variants of Fisher's combination test that are more powerful when some null p-values are conservative (i.e., super-uniform), including a version of Fisher's truncated combination test \citep{Zaykin} that allows for super-uniform p-values. Previous versions of this test required p-values to be exactly uniform under the null, preventing the test's application in settings where it is most powerful. \newline 

The code used to run the experiments and generate the figures in this article can be found at \url{github.com/AnavSood/seldom}.

\subsection{Related work}

We aim to provide a unifying view of modern conditional selective inference, a field started by the seminal work \cite{Lee2016} that has been the subject of much academic research over the last decade \citep{Panigrahi, Benjamini, Tian2018Apr, Tian2018Dec,Taylor2014, Taylor2016, Taylor2018, Tibshirani, Fithian2015, Fithian2017, Loftus, Chen2023Apr, Chen2023May, Gao, Lee2014, Hyun}. Alongside this contemporary body of work, there is an earlier line of work that studies drawing inferences from data-dependent questions, particularly as it pertains to rank verification \cite{Bofinger, Fabian, Gutmann, Maymin, Hsu}. Outside of conditional selective inference, there is a class of selective methods that avoid conditioning on the selection event (e.g, \citep{Romano, Zrnic}), but \cite{Perry} suggest empirically that such approaches may be less powerful than their conditional counterparts. We do not discuss such methods in this article. 

We point out a few articles that are most relevant to our work. \cite{Hung2020} mention that p-values coming from $t$- and $z$-tests satisfy our selective dominance condition, but only for thresholding selection functions (i.e., $s(x) = I(x \leq \alpha)$). Both \cite{Markovic, Andrews2019} consider settings where the selection probabilities are unknown, and provide either (1) methods for learning them or (2) theoretical results regarding to what extent they are identified. When we treat our procedures as global null testing procedures and close them, we get sequential selective hypothesis that are similar to those in \cite{Fithian2015}, although the procedures in \cite{Fithian2015} specifically test two sided nulls while our nulls are arbitrary.

\section{Selectively dominant p-values}
\label{sec:dominance}

In this section we define selectively dominant p-values, a class of p-values that enable us to easily do inference after selection. We give a precise characterization of when p-values are selectively dominant and illustrate that the most commonly used p-values are all selectively dominant. Finally, we provide a few examples of how one may apply our selective dominance framework. 

Throughout the remainder of the article, we will often encounter expressions like $P(A|X, B)$, where $P$ is a probability measure, $A$ and $B$ are events, and $X$ is a random variable. If $B$ is a probability zero event, i.e., $P(B) = 0$, we set $P(A|X, B)$ to be identically zero. If not, we can define the conditional probability measure $Q(\cdot) = P(\cdot| B)$ given the event $B$, and we consider $P(A | X, B)$ to be the conditional expectation $E_Q[I_A| X]$ of the indicator $I_A$ under $Q$. Also, for a null hypothesis $H_0$, the notation $P_{H_0}$ refers to an arbitrary distribution under the null. 

\subsection{Selective dominance}

\begin{figure}[]
    \centering
    \scalebox{1}{
    \hspace{-0.03\textwidth}
    \begin{minipage}{0.33\textwidth}
        \centering
        \includegraphics[width=\textwidth]{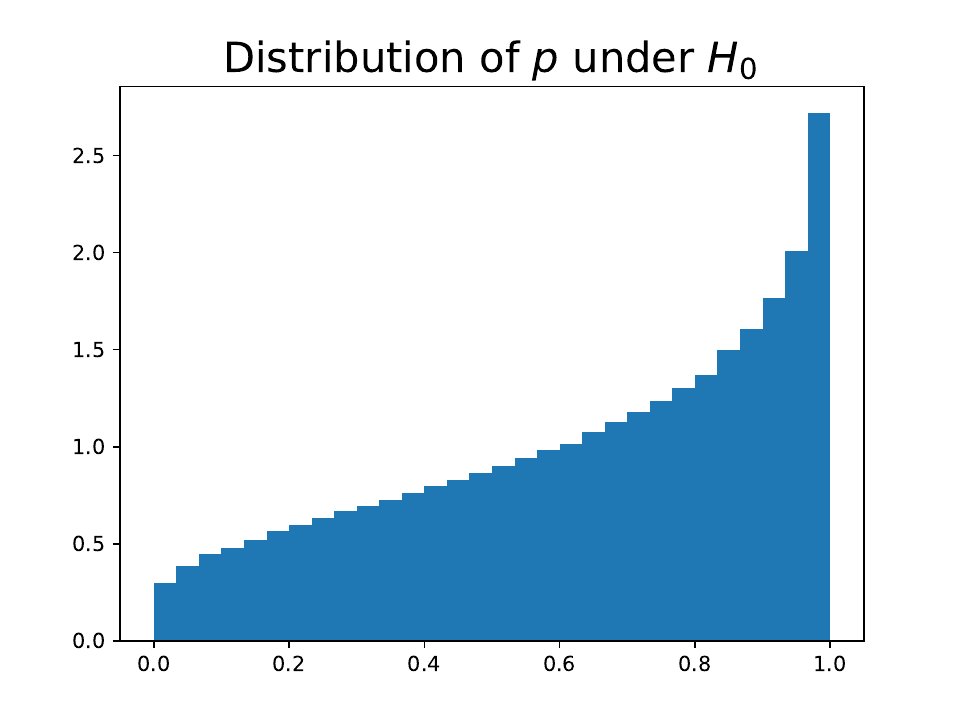}
    \end{minipage}\hfill
    \hspace{0.02\textwidth}
    \begin{minipage}{0.33\textwidth}
        \centering
        \includegraphics[width=\textwidth]{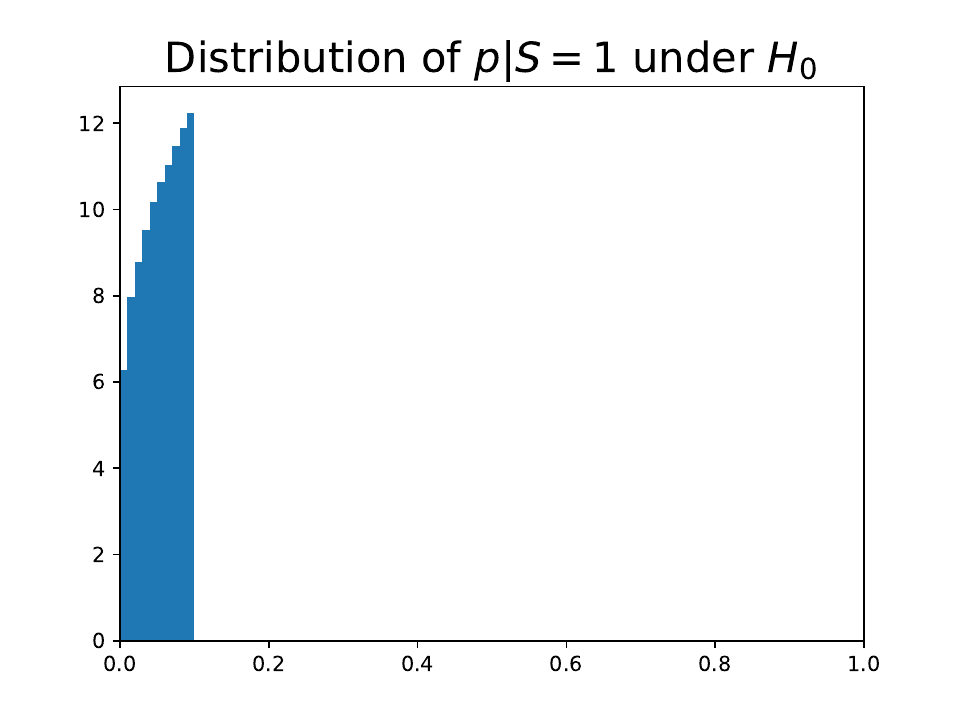}
    \end{minipage}\hfill
    \hspace{0.02\textwidth}
    \begin{minipage}{0.33\textwidth}
        \centering
        \includegraphics[width=\textwidth]{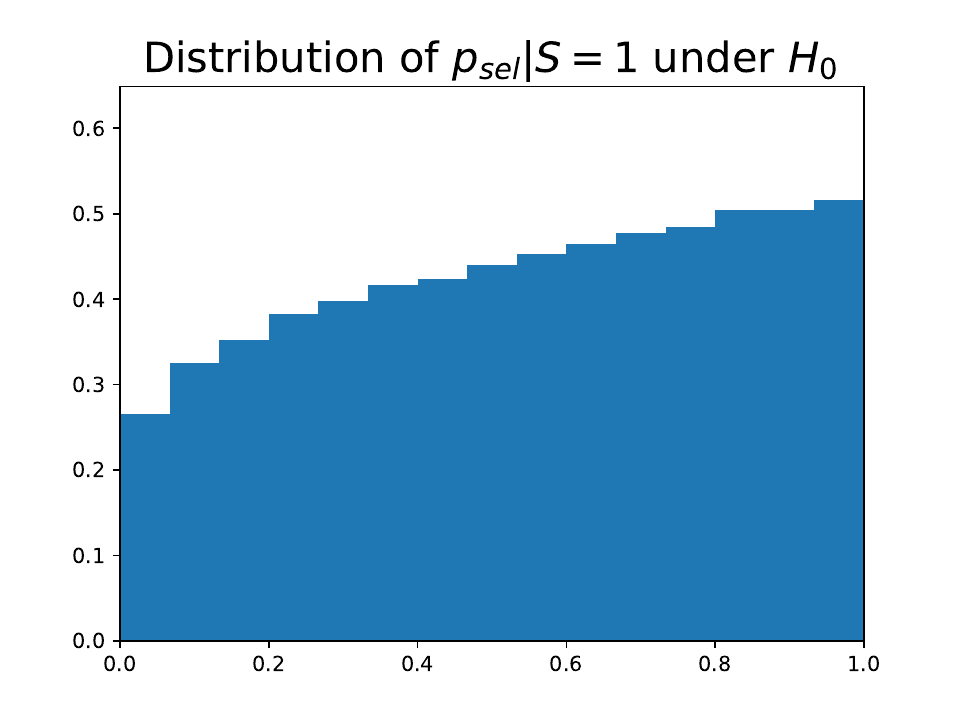}
    \end{minipage}
    }
    \caption{The first panel (left) depicts an example distribution of a p-value $p$ under the null. The distribution stochastically dominates the uniform distribution. The second panel (middle) depicts the distribution $p | S = 1$ of the same null p-value $p$ given that it was selected for being most $\alpha=0.1$. This distribution does \underline{not} stochastically dominate the uniform distribution. The third panel (right) depicts the null distribution $p_{sel}|S=1$ of \Cref{thm:adjustment}'s selective p-value $p/\alpha$ given selection. Thanks to \Cref{thm:adjustment}'s correction, this distribution again is stochastically dominates  the uniform distribution.}
    \label{fig:distributions}
\end{figure}

In classical statistics, a p-value is a random variable $p \in [0, 1]$ that stochastically dominates the uniform distribution $U \sim \text{Unif([0, 1])}$ under the null $H_0$:

\begin{equation*}
    P_{H_0}(p \leq t) \leq P(U \leq t)= t \text{ for all } t \in [0, 1].
\end{equation*}
The left-most panel of \Cref{fig:distributions} illustrates what the distribution of a null p-value may look like. When working with p-values, we maintain Type I error control if we reject $H_0$ when $p$ is small:
\begin{equation*}
    P_{H_0}(\text{reject null}) = P_{H_0}(p \leq \alpha) \leq P(U \leq \alpha)  = \alpha.
\end{equation*}
In essence, stochastic dominance allows us to use the uniform distribution as a reference distribution. We control the probability of $p$ being small under the null by comparing it to the probability of a uniform being small. 

In the problems we consider, we use $p$ to test the null $H_0$ only after it has been ``selected''. In full generality, we consider a p-value $p$ for testing the null $H_0$ that is conditionally valid given some random vector $Z$:
\begin{equation}
    \label{eq:valid_given_z}
    P_{H_0}(p \leq t | Z) \leq P(U \leq t)  = t \text{ for all } t \in [0, 1],  \text{ a.e. under } P_{H_0}.
\end{equation}
The random vector $Z$ plays an integral role in determining the relationship between $p$ and a binary selection random variable $S \in \{0, 1\}$, which takes value one when $p$ is selected and zero otherwise. This relationship is governed by a \textbf{selection function} \footnote{Formally, the selection function $s(\cdot, \cdot)$ such is an instantiation of the conditional expectation $s(p, Z) = E[S|p , Z ]$. },
\begin{equation*}
    s(x, z) = P(S = 1 \mid p = x, Z = z).
\end{equation*}
Intuitively, we imagine observing $p=x$ and $Z=z$ and then flipping a biased coin that comes up heads with probability $s(x, z)$. We only use $p$ to test the null $H_0$ when the coin comes up heads, and otherwise do not perform inference. This process turns out to capture what happens in a wide range of selective inference problems. In the problems we consider, we decide the selection process, so $s(x, z)$ is known. In cases where no $Z$ is present, we can imagine $Z=0$ everywhere and write the selection function $s(x)$ purely in terms of $x$. 

Because we only perform inference when the coin comes up heads, our goal should be to design a procedure that controls Type I error conditional on this selection event:
\begin{equation}
    \label{eq:selective_error_control}
    P_{H_0}(\text{reject } H_0 \mid S = 1) \leq \alpha 
\end{equation}
As illustrated by our next example, the classical approach of rejecting when $p \leq \alpha$ does not maintain selective Type I error control as in \eqref{eq:selective_error_control}. 

\begin{example}[Publication bias and the failure of classical inference]
\label{exm:publication_bias}
Consider a p-value $p$ that is uniform on $[0, 1]$ under the null $H_0$. If we use the selection function $s(x) = I(p \leq \alpha)$, i.e., we select $p$ when it is at most $\alpha$, then $p \mid S=1 \sim \text{Unif}([0, \alpha])$. Our classical procedure will clearly fail to control Type I error conditional on selection:
\begin{equation*}
    P_{H_0}(\text{reject } H_0 \mid S=1) = P_{H_0}(p \leq \alpha \mid S=1) = 1 > \alpha. 
\end{equation*}
\end{example}

\Cref{exm:publication_bias} is a standard example in the literature on publication bias. If researchers publish studies only when they observe a p-value below $\alpha$, readers always see significant findings regardless of the true data generating distribution. As a consequence, the reader's observed Type I error rate can be as high as one. \Cref{fig:distributions}'s middle panel displays the distribution of the left panel's null p-value, but after it has been selected for inference via \Cref{exm:publication_bias}'s selection process. It is clear from the picture that the null p-value distribution no longer stochastically dominates the uniform distribution after selection. 

Essentially, after selection, the uniform distribution no longer suffices as a reference distribution. Naturally, we may instead try and use the distribution of a uniform after it has been selected by the same selection process. Formally, let $U \sim \text{Unif}([0, 1])$ be a uniform random variable that exists on the same probability space as $p$ and is independent of $Z$, and let $S' \in \{0, 1\}$ be a different binary selection random variable whose joint distribution with $U$ and $Z$ is governed by the same selection function \footnote{Again, formally $E[S'| U, Z]= s(U, Z)$. }
\begin{equation*}
     P(S' = 1 \mid U = x, Z=z ) = s(x, z).
\end{equation*}
Instead of the uniform distribution, we can use the the conditional distribution $U \mid Z,  S' = 1$ of $U$ given selection as our reference distribution. This approach is valid exactly when our p-value is selectively dominant, as we define below. 

\begin{definition}[Selective dominance]
    \label{def:selective_dominance}
    Consider a p-value $p$ for the null $H_0$ that is valid given $Z$ as in \eqref{eq:valid_given_z}. We say that $p$ is \textbf{selectively dominant given $Z$} if, under the null $H_0$, it has a conditional probability density function (PDF) given $Z$, and it satisfies 
    \begin{equation}
    \label{eq:selective_dominance}
    P_{H_0}(p \leq t | Z, S=1) \leq P(U \leq t | Z, S'=1) \text{ for all } t \in [0, 1], \text{ a.e. under } P_{H_0}(\cdot |S=1)
    \end{equation}
    for every selection function $s(x, z)$.  
\end{definition}

Essentially selective dominance requires that, under the null, the conditional distribution of $p$ given selection should stochastically dominate the conditional distribution of $U$ the same selection. Moreover, this should be true for \underline{any} selection process. Again, we mention that if $P(S=1) = 0$ (resp. $P(S'=1) = 0$) for some selection function, then we define the left-hand side (resp. right-hand side) of \eqref{eq:selective_dominance} to be zero. We show in \Cref{sec:adjustment_proof} that $P(S' = 1) = 0 \implies P(S=1) = 0$, so this convention does not cause any issues.  

As we will soon see, all the p-values that practitioners commonly use are selectively dominant as described in \eqref{eq:selective_dominance}. In \Cref{def:selective_dominance}, we restrict to p-values with conditional PDFs under the null because it makes our theory and methods simpler to state. Because we can always make a p-value both have a conditional PDF and be more powerful via randomization, this restriction is never a practical issue. Also, after applying our machinery with randomized p-values, the user can always de-randomize the resulting method if they would like.

To perform valid post-selection inference using a selectively dominant p-value, we can transform it so that it remains a p-value after selection. As \Cref{thm:adjustment} displays, we can ``undo'' the effects of selection by applying the conditional cumulative distribution function (CDF) $F_{U \mid Z, S' = 1}(\cdot)$ of $U$ given $Z$ and selection to $p$. In line with prior literature, we refer to this transformed p-value as a \textbf{selective p-value}. For simple selection functions, this selective p-value is often computable in closed form. 

\begin{theorem}[Selective dominance and error control]
    \label{thm:adjustment}
    Consider a p-value $p$ for the null $H_0$ that is selectively dominant given $Z$ as in \Cref{def:selective_dominance}. Then, for any selection function $s(x, z)$, the selective p-value 
    \begin{equation}
    \label{eq:adjustment}
        p_{sel} = F_{U \mid Z, S' = 1}(p) = \frac{\int_0^p s(x, Z) dx}{\int_0^1 s(x, Z) dx}
    \end{equation}
    has a null distribution that stochastically dominates the uniform conditional on $Z$ and selection:
    \begin{align}
        &P_{H_0}(p_{sel} \leq t \mid Z, S= 1) \leq t  \text{ for all } t \in [0, 1], \text{ a.e. under } P_{H_0}(\cdot |S=1), \label{eq:adjusted_error_control_cond} \\
        &P_{H_0}(p_{sel} \leq t\mid S= 1) \leq t \text{ for all } t \in [0, 1]. \label{eq:adjusted_error_control_marg}
    \end{align}
    Further, if $P$ is a distribution in $H_0$ under which $P(S=1) > 0$ and $p$ has an exact uniform distribution given $Z$, i.e.,  
    \begin{equation}
        \label{eq:adjusted_error_control_equality}
        P(p \leq t | Z) = t \text{ for all } t \in [0, 1],  \text{ a.e. under } P(\cdot |S=1), 
    \end{equation}
    then \eqref{eq:adjusted_error_control_cond} and \eqref{eq:adjusted_error_control_marg} hold with equality. 
\end{theorem}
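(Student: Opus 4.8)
The plan is to read the selective p-value $p_{sel}$ in \eqref{eq:adjustment} as the composition of $p$ with the conditional CDF of the reference distribution, and then to invoke two elementary facts: monotone maps preserve stochastic dominance, and a random variable pushed through its own \emph{continuous} CDF is uniform. First I would make the reference CDF explicit. Since $U\sim\mathrm{Unif}([0,1])$ is independent of $Z$ and $P(S'=1\mid U=x,Z=z)=s(x,z)$, the joint (sub-)density of $U$ and the event $\{S'=1\}$ given $Z=z$ is $x\mapsto s(x,z)$ on $[0,1]$, so on $\{z:\int_0^1 s(x,z)\,dx>0\}$,
\begin{equation*}
  F_{U\mid Z=z,\,S'=1}(y)=\frac{\int_0^y s(x,z)\,dx}{\int_0^1 s(x,z)\,dx},\qquad y\in[0,1],
\end{equation*}
which matches \eqref{eq:adjustment} and, since $s$ is bounded, is continuous (indeed Lipschitz) in $y$. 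I would then check that this CDF is defined $P_{H_0}(\cdot\mid S=1)$-a.e.: because $p$ has a conditional PDF under the null (part of \Cref{def:selective_dominance}), $\int_0^1 s(x,z)\,dx=0$ forces $s(\cdot,z)=0$ Lebesgue-a.e.\ and hence $p$-a.e., so $P_{H_0}(S=1\mid Z=z)=0$; thus the denominator is positive, and $p_{sel}=F_{U\mid Z,S'=1}(p)$ is well defined, off a $P_{H_0}(\cdot\mid S=1)$-null set. This is the only place the conditional-PDF hypothesis is used.

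Next I would establish the conditional bound \eqref{eq:adjusted_error_control_cond}. Fix such a $z$ and write $G_z$ for the CDF above. By \Cref{def:selective_dominance}, the conditional law of $p$ given $Z=z, S=1$ stochastically dominates the conditional law $\nu_z$ of $U$ given $Z=z, S'=1$, which has CDF $G_z$. Because $G_z$ is nondecreasing, for each $t$ the set $\{y:G_z(y)\le t\}$ is a down-set — an interval with left endpoint $0$ — and stochastic dominance is exactly the assertion that the dominating law gives such intervals no more mass (a half-open interval is handled by a one-sided limit); hence $G_z(p)$ given $Z=z, S=1$ stochastically dominates the pushforward of $\nu_z$ under $G_z$. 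Since $G_z$ is a \emph{continuous} CDF, the probability integral transform identifies that pushforward as $\mathrm{Unif}([0,1])$. As $p_{sel}=G_Z(p)$, this yields $P_{H_0}(p_{sel}\le t\mid Z=z, S=1)\le t$ for all $t\in[0,1]$, i.e.\ \eqref{eq:adjusted_error_control_cond}. Integrating against the conditional law of $Z$ given $S=1$ and using the tower property then gives the marginal bound \eqref{eq:adjusted_error_control_marg}, with the degenerate case $P_{H_0}(S=1)=0$ covered by the stated convention.

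For the equality claim I would argue that exact uniformity of $p$ turns the only inequality in the chain into an equality. If $P\in H_0$ with $P(S=1)>0$ and $P(p\le t\mid Z)=t$ a.e., then the conditional density of $p$ given $Z=z$ is $1$ on $[0,1]$, so the selection-function relation forces the conditional density of $p$ given $Z=z, S=1$ to be $x\mapsto s(x,z)/\int_0^1 s(x',z)\,dx'$ — precisely the density of $\nu_z$. Thus the stochastic dominance invoked above is an equality, $P(p_{sel}\le t\mid Z=z, S=1)=P(G_z(p)\le t\mid Z=z, S=1)=t$ exactly for $P(\cdot\mid S=1)$-a.e.\ $z$, and integrating recovers \eqref{eq:adjusted_error_control_marg} with equality as well.

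I expect the genuine mathematical content to be light — the two facts used are standard — so the main obstacle is bookkeeping rather than substance: confirming that $G_z$ is a bona fide \emph{continuous} CDF even when $s(\cdot,z)$ vanishes on subintervals (flat pieces of $G_z$ are harmless for both facts), verifying that the normalizer $\int_0^1 s(x,Z)\,dx$ is positive $P_{H_0}(\cdot\mid S=1)$-a.e.\ (which is exactly where the conditional-PDF requirement in \Cref{def:selective_dominance} is needed, and also underlies $P(S'=1)=0\Rightarrow P(S=1)=0$), and carrying the ``a.e.\ under $P_{H_0}(\cdot\mid S=1)$'' qualifiers and the $P(S=1)=0$ convention consistently through every conditioning on $Z$.
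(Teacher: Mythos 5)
Your proposal is correct and follows essentially the same route as the paper's proof in \Cref{sec:adjustment_proof}: identify $p_{sel}$ as $p$ pushed through the continuous reference CDF $F_{U\mid Z,S'=1}$, verify the normalizer is positive a.e.\ under $P_{H_0}(\cdot\mid S=1)$ using the conditional-PDF hypothesis, apply selective dominance at the preimage of $[0,t]$ together with $F(F^{-1}(t))=t$, and in the exact-uniform case note that the conditional law of $p$ given $Z,S=1$ coincides with the reference law. The only differences are presentational: you phrase the key step via down-sets and the probability integral transform rather than the paper's explicit generalized inverse, and the paper's measure-theoretic transfer lemmas (its Lemmas \ref{lem:ce}--\ref{lem:switch}) appear in your sketch as the acknowledged bookkeeping about versions of conditional probabilities and a.e.\ qualifiers.
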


Essentially, \Cref{thm:adjustment} tells us that if we want selective Type I error control as in \eqref{eq:selective_error_control}, then we should reject $H_0$ when $p$ is less than the $\alpha$ quantile of $U \mid Z, S' = 1$ rather than the $\alpha$ quantile of $U$. We show in \Cref{sec:adjustment_proof} that the denominator of \eqref{eq:adjustment} is positive a.e. under $P_{H_0}(\cdot | S=1)$, so $p_{sel}$ is well defined whenever we actually use it to make an inferential statement. 

The right-most panel of \Cref{fig:distributions} depicts what happens when we apply \Cref{thm:adjustment}'s correction to the left panel's p-value (we will derive this correction later). Unlike the null distribution of $p$ given selection (middle panel), the null distribution of $p_{sel}$ given selection again stochastically dominates the uniform distribution. 

\subsection{Characterizing selectively dominant p-values and examples}

\Cref{thm:density} tells us that p-values are selectively dominant precisely when their conditional PDF is non-decreasing under the null. 

\begin{theorem}[Selective dominance and increasing density]
    \label{thm:density}
    If, under the null $H_0$, the conditional PDF of the p-value $p$ given $Z=z$ is non-decreasing on $[0, 1]$ for every $z$, then $p$ is selectively dominant given $Z$ as described in \Cref{def:selective_dominance}. Conversely, if the conditional PDF of $p$ given $Z=z$ is everywhere continuous and not non-decreasing on $[0, 1]$ for a set of $z$ that have positive probability under some distribution in $H_0$, then $p$ is not selectively dominant given $Z$.  
\end{theorem}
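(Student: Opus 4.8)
The plan is to prove the two implications separately: the forward direction reduces to a one-dimensional Chebyshev-type correlation inequality, while the converse reduces to constructing an explicit ``two-bump'' selection function that exploits a dip in the density.

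For the forward direction, I would fix $z$, write $f(x) = f(x\mid z)$ for the conditional null density and $s(x) = s(x,z)$. Invoking the fact (established in \Cref{sec:adjustment_proof}) that $P(S'=1)=0$ forces $P(S=1)=0$, it is enough to check \eqref{eq:selective_dominance} at those $z$ with $\int_0^1 s > 0$, where $p \mid Z=z,S=1$ has density proportional to $s(x)f(x)$ and $U\mid Z=z,S'=1$ has density proportional to $s(x)$. Then \eqref{eq:selective_dominance} says $\big(\int_0^t s f\big)\big(\int_0^1 s\big) \le \big(\int_0^t s\big)\big(\int_0^1 s f\big)$, which after writing $\int_0^1 = \int_0^t + \int_t^1$ and cancelling the common term is equivalent to
\[
\Big(\int_0^t s f\Big)\Big(\int_t^1 s\Big) \;\le\; \Big(\int_0^t s\Big)\Big(\int_t^1 s f\Big).
\]
Since $f$ is non-decreasing and $s\ge 0$, one has $\int_0^t s f \le f(t)\int_0^t s$ and $\int_t^1 s f \ge f(t)\int_t^1 s$; multiplying the first by $\int_t^1 s$ and the second by $\int_0^t s$ and chaining yields the claim. (Equivalently: the likelihood ratio between the two conditional densities is proportional to $f$, hence non-decreasing, and monotone likelihood ratio implies stochastic dominance.) As this holds for every selection function and a.e.\ $z$ under $P_{H_0}(\cdot\mid S=1)$, $p$ is selectively dominant given $Z$.

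For the converse, let $P_{H_0}$ be a null distribution and $B$ a positive-probability set on which $f(\cdot\mid z)$ is continuous and not non-decreasing. For $z\in B$ there are $a<b$ with $f(a\mid z)>f(b\mid z)$, and by continuity of $f(\cdot\mid z)$ this gap persists on genuine sub-intervals: I would extract \emph{rational} $q_1<q_2$ in $[0,1]$ and rational $\delta>0$ with $\delta<(q_2-q_1)/2$, the intervals $[q_1,q_1+\delta]$ and $[q_2-\delta,q_2]$ disjoint and inside $[0,1]$, and $\int_{q_1}^{q_1+\delta} f(x\mid z)\,dx > \int_{q_2-\delta}^{q_2} f(x\mid z)\,dx$. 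Letting $B_{q_1,q_2,\delta}$ be the measurable set of $z$ for which a given rational triple works, $B$ is a countable union of such sets, so some $B^\star = B_{q_1^\star,q_2^\star,\delta^\star}$ has positive $P_{H_0}$-probability. Fixing that triple and taking the $z$-independent selection function $s(x,z)=I\big(x\in[q_1^\star,q_1^\star+\delta^\star]\cup[q_2^\star-\delta^\star,q_2^\star]\big)$, evaluating \eqref{eq:selective_dominance} at $t=(q_1^\star+q_2^\star)/2$ gives right-hand side exactly $1/2$ (a selected uniform splits evenly between the two equal-length intervals) and left-hand side strictly greater than $1/2$ for every $z\in B^\star$. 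Since the selection probability is positive on $B^\star$, this failure occurs on a set of positive probability under $P_{H_0}(\cdot\mid S=1)$, so $p$ is not selectively dominant.

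The main obstacle will be entirely on the converse side: (i) using continuity — rather than just the a.e.\ behavior of $f$ — to upgrade a single pair of points $a<b$ to intervals on which a selection function can actually ``see'' the density dip; and (ii) the measurable-selection bookkeeping, since the witnessing intervals depend on $z$, which I handle by discretizing over rational triples and extracting one positive-probability piece on which a single fixed selection function works. I would also need to mind the boundary case $q_2$ near $1$ (place the second bump to the left of $q_2$, as above) and the degenerate cases where $P(S=1)=0$ (covered by the stated conventions).
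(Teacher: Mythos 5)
Your proposal is correct, and the forward direction is essentially the paper's argument: the same reduction to the inequality $\bigl(\int_0^t s f\bigr)\bigl(\int_t^1 s\bigr) \le \bigl(\int_0^t s\bigr)\bigl(\int_t^1 s f\bigr)$, proved by bounding $f$ by $f(t)$ on either side of $t$ (the paper writes it as $1 + \int_t^1 sf/\int_0^t sf \ge 1 + \int_t^1 s/\int_0^t s$, which is the same estimate). The converse uses the same core idea as the paper — a two-bump indicator selection that weights a high-density region against a lower-density region — but your implementation is genuinely different in how it handles the $z$-dependence. The paper builds $z$-dependent neighborhoods $N_{\epsilon(z)}(y_1(z))$, $N_{\epsilon(z)}(y_2(z))$ and a selection function and threshold $t(Z)$ that vary with $z$, which implicitly requires choosing $y_1(z), y_2(z), \epsilon(z), \eta(z)$ measurably; your discretization over rational triples $(q_1, q_2, \delta)$ extracts a single positive-probability set $B^\star$ on which one fixed, $z$-free selection function and one fixed evaluation point $t=(q_1^\star+q_2^\star)/2$ witness the violation, which neatly sidesteps that measurable-selection bookkeeping and makes the failure of \eqref{eq:selective_dominance} a statement about a single $t$. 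The price is a slightly longer setup, and note that both routes still lean on the identification of the post-selection conditional laws — that $p \mid Z, S=1$ has conditional density proportional to $s(\cdot,Z)f_Z(\cdot)$ and $U \mid Z, S'=1$ proportional to $s(\cdot,Z)$, holding a.e. under $P_{H_0}(\cdot \mid S=1)$ — which you assert by citing \Cref{sec:adjustment_proof}; in the paper this is exactly the content of \Cref{lem:ce}, \Cref{lem:pos_selection}, and \Cref{lem:switch}, so your citation is legitimate rather than a gap, and your check that $\int_{q_1^\star}^{q_1^\star+\delta^\star} f_z > 0$ on $B^\star$ (so that $B^\star$ has positive probability under $P_{H_0}(\cdot\mid S=1)$) closes the argument correctly.
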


In what follows, we give a number of examples of selectively dominant p-values. Our examples include all the common p-values that practitioners use. We recommend that the unfamiliar reader review uniformly most powerful (UMP) and uniformly most powerful unbiased (UMPU) testing \cite[Chapter 3 and Chapter 4]{Lehmann} prior to proceeding.

\begin{example}[Two-sided testing in parametric families]
\label{exm:two-sided}
Consider observing data from a parametric family $P_{\theta}$ and testing the null $H_0 : \theta = \theta_0$. Because the null is a point null, most p-values we construct will have an exact $\text{Unif}([0, 1])$ distribution under the null and are therefore trivially selectively dominant. 
\end{example}

\begin{example}[One-sided testing in monotone likelihood ratio families]
\label{exm:mlr}
Consider observing one-dimensional data from a parametric family $X \sim P_{\theta}$ that admits density $p_{\theta}(x)$ with respect to some base measure $\mu$. We say that $P_{\theta}$ has a monotone likelihood ratio (MLR) in the real valued function $T(x)$ if, the densities $p_{\theta}(x)$ share a common support and, for any $\theta \leq \theta'$, the ratio $p_{\theta'}(x)/p_{\theta}(x)$ is a non-decreasing function of $T(x)$. In this case, the UMP test for the null $H_0: \theta \leq \theta_0$ rejects when $T(X)$ is large. The associated randomized p-value for this test (see \Cref{sec:one_sided_appdx}) is selectively dominant. 
\end{example}

\begin{example}[Testing in in exponential families]
\label{exm:exp_fam}
Suppose we observe data $X \in \R^m$ from an exponential family $P_{\theta}$ parameterized by $\theta \in \R^n$ i.e., under $P_{\theta}$ the data $X$ has density  
\begin{equation*}
    g_{\theta}(x) = \exp( \theta_1 T_1(X) + \dots + \theta_n T_n(X) - \psi(\theta) ) g(x) 
\end{equation*}
with respect to some base measure $\mu$. In both the case of testing the two-sided null $H_0: \theta_1 \neq \theta_{0, i}$ or one-sided null $H_0: \theta_i \leq \theta_{0, i}$, the UMPU test conditions on the nuisance statistics $T_{-i}(X)$. The p-value associated with the UMPU test for $H_0: \theta_1 \neq \theta_{0, i}$ has an exact $\text{Unif}([0, 1])$ distribution conditional on $T_{-i}(X)$, so it is trivially selectively dominant given $Z = T_{-i}(X)$. For testing $H_0: \theta_1 \leq \theta_{0, i}$, we are in the setting of an MLR family once we condition on $T_{-i}(X)$, so \Cref{exm:mlr} implies that the p-value associated with the UMPU test is also selectively dominant given $Z = T_{-i}(X)$.
\end{example}

\begin{example}[Permutation testing]
In a permutation test we observe data $X \in \mathcal{X}$ and compute a test statistic $T(X)$ that, under the null $H_0$, has a distribution that is invariant under a finite group of transformations $G : \mathcal{X} \rightarrow \mathcal{X}$. That is, $T(X) \overset{d}{=}_{H_0} T(g(X))$ for all $ g \in G$. To run the test, we consider a collection of group elements $g_1, g_2, \dots, g_w$ where $g_1 = id$ is fixed to be the identity transformation and $g_2, \dots, g_w$ are either a random sample from $G$ with replacement or a random sample from $G \setminus \{id \}$ without replacement. The test then rejects when $T(X)$ is large compared to the $T(g_j(X))$. In particular, the randomized permutation test from \cite{Hemerik} uses a p-value that always has an exact $\text{Unif}([0, 1])$ distribution under $H_0$ and is therefore is trivially selectively dominant. Details about this p-value can be found in \Cref{sec:perm_test_appdx}
\end{example}

\begin{example}[$F$-tests for regression]
Suppose we observe data $Y \in \R^n$ from a linear model $Y = \X \beta + \epsilon$ parameterized by $\beta \in \R^d$,  where $\X \in \R^{n \times d }$ has full column rank with probability one and is independent of $\epsilon \sim N(0, \sigma^2 I_n)$. We can run an $F$-test to try to reject the null $H_0 : \beta_{q+1} = \dots = \beta_d = 0$ that the last $d-q$ coefficients are zero. The p-value associated with the $F$-test has an exact $\text{Unif}([0, 1])$ distribution given $\X$, so it is trivially selectively dominant given $Z = \X$. We describe this test and its associated p-value more explicitly in \Cref{sec:f_test_appdx}. 
\end{example}

Establishing \Cref{exm:mlr} and \Cref{exm:exp_fam} is non-trivial, and the bulk of \Cref{sec:one_sided_appdx} is spent doing so. The majority of the article focuses on examples from these two non-trivial settings. 

\subsection{Example applications of selective dominance}

Having developed our machinery, we provide a few examples that illustrate how to use it.

As an introductory example, we show how to correct for \Cref{exm:publication_bias}'s publication bias. Using our selective dominance machinery, we can provide a one-line derivation of the p-value adjustment from \cite{Hung2020}. \cite{Hung2020} derive this correction specifically for p-values coming from z- and t-tests, but our machinery applies for all selectively dominant p-values. 

\begin{example}[Correcting for publication bias]
\label{exm:correction}
Suppose we have a selectively dominant p-value $p$ for the null hypothesis $H_0$, and we choose to test $H_0$ only after observing that $p \leq \alpha$. We can apply our framework with $s(x) = I(x \leq \alpha)$. The selective p-value from \eqref{eq:adjustment} is $p/\alpha$, so \Cref{thm:adjustment} tells us that rejecting when $p \leq \alpha^2$ controls selective Type I error:
\begin{equation*}
    P_{H_0}(p \leq \alpha^2 | S= 1) =  P_{H_0}(p/\alpha \leq \alpha | S= 1) \leq \alpha  
\end{equation*}
\end{example}

In \Cref{exm:publication_bias}, if one computes the selective p-value $p_{sel}$ from \eqref{eq:adjustment} carefully, they will find it equals $\min(p/\alpha, 1)$, rather than $p/\alpha$ as we have claimed. However, these two expressions are almost surely equal under $P(\cdot|S=1)$, the conditional probability measure given selection. \Cref{thm:adjustment}'s result therefore applies regardless of which one we choose to use. Throughout the remainder of our examples, we will continue to give expressions for $p_{sel}$ that, at the very least, are equal to \eqref{eq:adjustment} a.e. under $P(\cdot|S=1)$.

Since essentially all the p-values researchers use are selectively dominant, \Cref{exm:correction} gives a simple way for readers to make valid inferences in the presence of publication bias: declare a studies' result significant when the associated p-value is at most $\alpha^2$.

Our rule of thumb of rejecting when $p \leq \alpha^2$ should also deliver valid inferences in the presence of p-hacking. Rather than discarding an experiment after observing a p-value larger than $\alpha$, researches more often tweak their analysis until the p-value crosses the significance threshold. This process, known as p-hacking, is difficult to study theoretically (which is why \cite{Hung2020} do not study it). But it has been empirically well-established that, under the null, p-values resulting from p-hacking have left-skewed distributions, i.e., null p-hacked p-values can be reasonably modeled as having an increasing density on $[0,\alpha]$ \citep{Simonsohn}. The transformed quantity $p/\alpha$ then has a null density that is increasing on $[0, 1]$, so \Cref{thm:density} guarantees that it is a valid p-value. Thus, we again find it safe to reject when $p/\alpha \leq \alpha \iff p \leq \alpha^2$. 

Our second example shows how to use \Cref{thm:adjustment} to perform inference using the ``winning'' p-value. It illustrates how our selective dominance machinery enables us to easily test data-dependent hypotheses. \Cref{exm:winner} uses the fact that, if $p$ is selectively dominant and $Z$ is independent of $p$, then $p$ is selectively dominant given $Z$. Although this is intuitively obvious, we show it carefully in \Cref{sec:sel_dom_independence}. 

\begin{example}[Inference on the winning p-value]
    \label{exm:winner} Suppose we have $n$ independent and selectively dominant p-values $p_i$ for the null hypotheses $H_{0, i}$, and we choose to test only the $j$th null $H_{0, j}$ after observing that $p_j$ is the smallest of the bunch. Applying our framework with $p =p_j$, $Z = p_{-j}$, and the selection function $s(x, z) = I(x < \min_{k} z_k)$, it is straightforward to compute that the selective p-value $p_{sel}$ from \eqref{eq:adjustment} is $p_j/\min_{i \neq j} p_i$, so \Cref{thm:adjustment} tells us that rejecting when $p_j \leq \alpha \min_{i \neq j} p_i$ controls selective Type I error:
    \begin{equation}
        \label{eq:winner_error_control}
        P_{H_{0, j}}(p_j \leq \alpha \min_{i \neq j} p_i \mid S = 1) = P_{H_{0, j}}(p_j/\min_{i \neq j} p_i \leq \alpha  \mid S = 1) \leq \alpha.
    \end{equation} 

    If we let $W$ be the index of the smallest p-value (with ties broken randomly), it is now easy to see that rejecting the data-dependent ``winning'' null $H_{0, W}$ when $p_{(1)} \leq \alpha p_{(2)}$ controls Type I error both conditionally on $W$ and marginally. Conditional error control is immediate: If $H_{0, j}$ is not true, then trivially $P(\text{falsely reject } H_{0, W} \mid W = j) = 0 \leq \alpha$. For the case that $H_{0, j}$ is true, the event $W=j$ is the same event as selecting $p_j$ for inference in \eqref{eq:winner_error_control}, so 
    \begin{align*}
        P(\text{falsely reject } H_{0, W} \mid W = j) &= P(p_{(1)} \leq \alpha p_{(2)} \mid W = j)\\
        &= P(p_j \leq \alpha \min_{i \neq j} p_i \mid W = j)\\
        &\leq \alpha.
    \end{align*}
    Marginal error control then follows from the law of total probability. 
    \begin{align*}
        P(\text{falsely reject } H_{0, W}) &= \sum_{j=1}^n P(W=j)P(\text{falsely reject } H_{0, j} \mid W = j) \\
                                          &\leq \alpha \sum_{j=1}^n P(W=j)\\
                                          &\leq \alpha. 
    \end{align*}
    If the nulls are all true and the $p_i$ are exactly uniform, then the inequalities become equalities and our error control is tight. 
\end{example}

Rejecting the null $H_{0, W}$ when $p_{(1)} \leq \alpha p_{(2)}$ may seem like a strange procedure, but we will see in \Cref{sec:winner} that doing so is central to the conditional inference for winners method that arises from \cite{Fithian2017} and appears often in the contemporary inference on winners literature.

Lastly, we show how our framework also applies to data-carving. Specifically, we consider \cite{Fithian2017}'s variant of the file-drawer problem. \cite{Fithian2017} argue that data-splitting, which involves using one chunk of the data for selection and the remaining independent chunk for inference, is often an inadmissible approach in selective inference problems. In such settings, data-carving, as we describe below, results in strictly more powerful procedures. Although it initially appears that data-carving does not involve selecting a p-value via the mechanism we described earlier, we use a coupling argument to  show that it can be viewed in this way. This both serves to illustrate the breadth of our framework's applicability, as well as provide a new perspective on data carving. 

\begin{example}[Data carving and the file-drawer problem]
    \label{exm:carve}
    In the file-drawer problem, we observe two independent samples $X_1, X_2 \sim N(\mu, 2)$ (e.g., $X_1$ comes from the first half of the data and $X_2$ from the second). We test the null $H_0: \mu \leq 0$, but only when we observe that $X_1 > t$ for some $t \in \R$. \Cref{sec:carve_appdx} provides details for the computations done in this example. 
    
    Data splitting ignores the first observation, which was used for selection, and simply tests the $H_0$ using the p-value $p_{split} = 1 - \Phi(X_2/\sqrt{2}) $. Intuitively, because this p-value is independent of the selection process, we should maintain Type I error control without any correction. Applying our framework with $p = p_{split}$, $Z = X_1$, and the selection function $s(x, z) = I(z > t)$, we indeed find that the selective p-value is the same as $p_{split}$. 
    
   Data-carving attempts to use the more powerful p-value $p_{full} = 1 - \Phi( (X_1 + X_2)/2 )$. This p-value leverages information from both samples, despite one of them being used for selection. How can we apply our framework to this data-carving problem? In how we have stated the problem, it is \underline{not} the case that we observe $p_{full}$ and then decide whether or not to use it for inference. Instead, we decide based on $X_1$, and unlike in data-splitting, $p_{full}$ is not a valid p-value given $X_1$. We can compute, however, the probability that selection happened given that $p_{full}$ took a particular value:
    \begin{equation*}
        P(X_1 > t | p_{full} = x) = 1 - \Phi(t - \Phi^{-1}(1-x)). 
    \end{equation*}
    We may as well imagine that we observed that $p_{full} = x$, and then decided to use it to test the null $H_0$ with probability $1 - \Phi(t - \Phi^{-1}(1-x))$. Although this is not what happens in the original problem (in our new characterization, we may test $H_0$ even when $X_1 \leq 3$), the conditional distribution of $p_{full}$ given selection is the same in both cases. Hence, we can apply our framework with $p = p_{full}$ and $s(x) = 1 - \Phi(t - \Phi^{-1}(1-x))$. Theorem 9 of \cite{Fithian2017} tells us that \Cref{thm:adjustment}'s resulting selective p-value,
    \begin{equation*}
        p_{carve} = \frac{\int_0^{p_{full}}  1 - \Phi(t - \Phi^{-1}(1-x)) dx}{\int_0^1  1 - \Phi(t - \Phi^{-1}(1-x)) dx} = \frac{\int_0^{p_{full}}  1 - \Phi(t - \Phi^{-1}(1-x))dx }{1 - \Phi(t/\sqrt{2})}, 
    \end{equation*}
    will result in strictly more rejections than $p_{split}$, while still maintaining selective Type I error control. 
\end{example}

Crucially, in \Cref{exm:carve}, the conditional distribution of the random variable $X_1$ used for selection given the p-value $p_{full}$ did not depend on the unknown parameter $\mu$. Hence, the selection function $s(x)$ had no dependence on $\mu$, and we were able to construct our selective p-value without any issues. This did not happen by accident, and is actually a consequence of more general and interesting fact regarding the relationship between between data splitting, data carving, data fission \citep{Leiner}, and data thinning \citep{Dharamshi, Neufeld}.

In the most basic version of data fission, we add and subtract independent normal noise $Z \sim N(0, 1)$ to a normal sample $X \sim N(\mu, 1)$ to get two independent samples $X_1, X_2 \sim N(\mu, 2)$ centered at the same mean. This is meant to mimic data splitting: the first sample can be used for selection and the second for inference. Data thinning generalizes this idea by considering a random vector $X$ from a parametric family and adding noise to make $k$ new random vectors $X_1, \dots, X_k$ that (1) are independent and (2) can be used to recover $X$ via a deterministic function $X = T(X_1, \dots, X_k)$. Vanilla data thinning involves using some of the $X_i$ to perform selection and then the rest to do inference. Data carving, however, suggests using a p-value $p = p(T(X_1, \dots, X_n))$ that leverages all of the data, despite some of the $X_i$ having been used for selection. Because the noise we add to $X$ to get the $X_i$ has no dependence on the unknown parameter, the joint distribution of the $X_i$ given $p$ also has no dependence on the unknown parameter. Therefore, contrary to the what \cite{Leiner} suggest, the selection function $s(x)$ is always \underline{known}, and we can always apply our framework to data carve and get more power. If the selection process is highly complicated, it is true that $s(x)$ may be very difficult to compute, but in theory it is always accessible to us via extensive simulations. 

Our framework also applies to regression problems, including \cite{Lee2016}'s foundational problem of doing inference on LASSO selected regression coefficients. This example is a bit long, so we have deferred it to \Cref{sec:lasso_appdx}.  

\Cref{exm:correction,exm:winner,exm:carve} all share a common theme. In all three examples, the practitioner cheats. They peek at the p-value and, to varying degrees, they only test the null when the p-value looks promising. The purpose of selective procedures is to adjust the p-value in a way that accounts for this cheating. The harsher the cheating is, the more this adjustment inflates the original p-value.

\section{Inference on winners}
\label{sec:winner}

In this section we use our framework to study the inference on winners problem. Along with discussing the implications of \Cref{exm:winner}, we also show how hybrid inference \citep{Andrews2023}, which offers a solution to the exploding interval problem, arises naturally in our framework. For both the conditional and hybrid approaches, our discussion results in novel interpretations, generalizations, and methods. To be concrete, we will imagine performing inference at level $\alpha = 0.1$ throughout the section.

For now, we focus on the independent data setting. One can apply the tools we develop in the next section, however, to perform inference on winners when data is generated from a multi-parameter exponential family, which encompasses many correlated data settings.

\subsection{Conditional inference}

\begin{figure}[]
    \centering
    \scalebox{1}{
    \hspace{-0.03\textwidth}
    \begin{minipage}{0.33\textwidth}
        \centering
        \includegraphics[width=\textwidth]{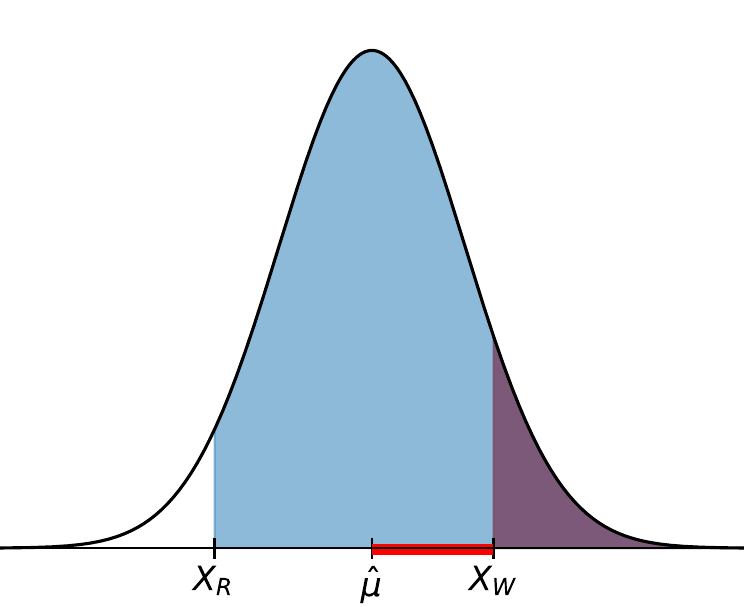}
    \end{minipage}\hfill
    \hspace{0.02\textwidth}
    \begin{minipage}{0.33\textwidth}
        \centering
        \includegraphics[width=\textwidth]{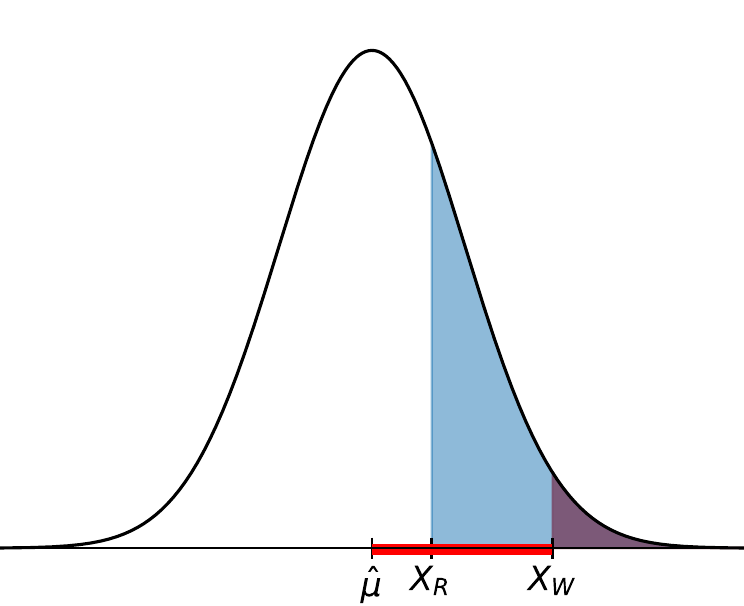}
    \end{minipage}\hfill
    \hspace{0.02\textwidth}
    \begin{minipage}{0.33\textwidth}
        \centering
        \includegraphics[width=\textwidth]{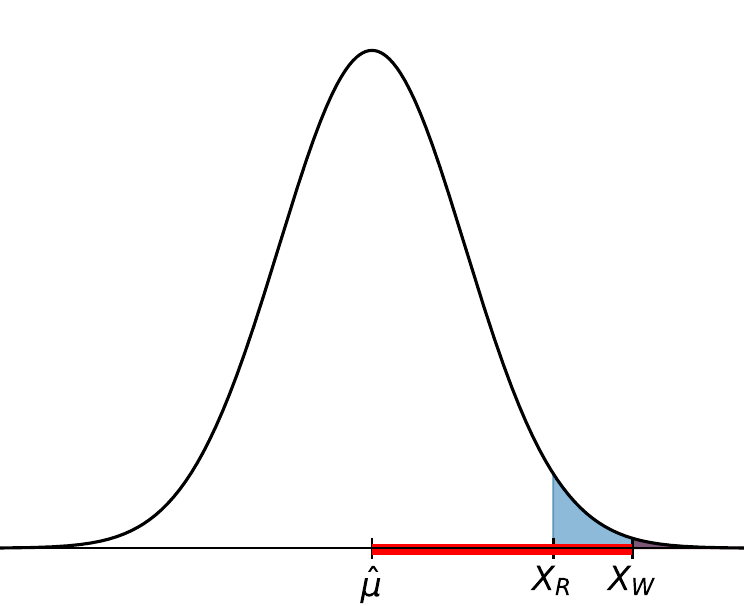}
    \end{minipage}
    }
    \caption{We plot the level $\alpha=0.1$ conditional LCB $\hat{\mu}$ for different gaps between the winning value $X_W$ and the runner up value $X_R$ and highlight the distance between $\hat{\mu}$ and $X_W$ in red. The LCB $\hat{\mu}$ is chosen exactly so that the tail probability $P(N(\hat{\mu}, 1) > X_R)$, shaded in blue, is $1/\alpha=10$ times the tail probability $P(N(\hat{\mu}, 1) > X_W)$, shaded in red (the overlap appears purple). As $X_W$ and $X_R$ get closer, we need to take $\hat{\mu}$ further back for this condition to be satisfied.}
    \label{fig:tail_prob}
\end{figure}

In this sub-section we discuss \cite{Fithian2017}'s conditional approach for performing inference on winners. This conditional approach turns out to be highly related to the testing procedure that we derived in \Cref{exm:winner}. 

\begin{corollary}[Testing the winner]
    \label{cor:cond}
    Suppose that $p_i$ are $n$ independent and selectively dominant p-values for the nulls $H_{0, i}$, and let $W$ be the index of the smallest p-value (with ties broken randomly). Rejecting $H_{0, W}$ when $p_{(1)} \leq \alpha p_{(2)}$ controls Type I error at level $\alpha$ conditionally on $W$, and therefore also marginally. 
\end{corollary}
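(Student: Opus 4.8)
The plan is to reduce the claim to the machinery already assembled in \Cref{exm:winner}, i.e., to the statement that for each \emph{fixed} index $j$, rejecting when $p_j \le \alpha \min_{i \ne j} p_i$ controls Type I error conditional on the event that $p_j$ is the smallest p-value. First I would fix $j$ and instantiate \Cref{thm:adjustment} with $p = p_j$, $Z = p_{-j} = (p_i)_{i \ne j}$, and the selection function $s(x,z) = I(x < \min_k z_k)$. Since $p_j$ is selectively dominant and independent of $p_{-j}$, the independence result shown in \Cref{sec:sel_dom_independence} guarantees that $p_j$ is selectively dominant given $Z = p_{-j}$, so \Cref{thm:adjustment} applies. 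Evaluating the selective p-value from \eqref{eq:adjustment} gives $p_{sel} = \min\!\bigl(p_j,\, \min_{i \ne j} p_i\bigr)\big/\min_{i \ne j} p_i$, which equals $p_j/\min_{i \ne j} p_i$ almost everywhere under $P_{H_{0,j}}(\cdot \mid S = 1)$, because on the selection event $p_j < \min_{i \ne j} p_i$ (and the denominator is positive a.e. under $P_{H_{0,j}}(\cdot \mid S=1)$ by the remark following \Cref{thm:adjustment}). Hence $P_{H_{0,j}}(p_j \le \alpha \min_{i\ne j} p_i \mid S=1) = P_{H_{0,j}}(p_{sel} \le \alpha \mid S = 1) \le \alpha$.

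Next I would match the abstract selection event $\{S = 1\}$ to the data-dependent event $\{W = j\}$. Because a selectively dominant p-value has a conditional PDF given $Z$ under the null, $P_{H_{0,j}}(p_j = p_i) = 0$ for each $i \ne j$ (condition on $p_{-j}$ and use that $p_j$ has a density), so almost surely there are no ties involving $p_j$ and the random tie-breaking is immaterial; thus $\{W = j\}$ and $\{p_j < \min_{i \ne j} p_i\} = \{S=1\}$ agree up to a null set. Combining this with the previous display,
\begin{equation*}
    P_{H_{0,j}}\!\left(p_{(1)} \le \alpha p_{(2)} \,\middle|\, W = j\right) = P_{H_{0,j}}\!\left(p_j \le \alpha \min_{i \ne j} p_i \,\middle|\, W = j\right) \le \alpha .
\end{equation*}

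Finally I would assemble the conditional and marginal statements as in \Cref{exm:winner}. Conditioning on $W = j$: if $H_{0,j}$ fails, then $H_{0,W}$ is a true null on this event only vacuously, so the conditional false-rejection probability is $0 \le \alpha$; if $H_{0,j}$ holds, the display above gives conditional control at level $\alpha$. Either way $P(\text{falsely reject } H_{0,W} \mid W = j) \le \alpha$ for every $j$, so conditional control on $W$ holds, and summing against $P(W = j)$ via the law of total probability yields marginal control; when all nulls hold and the $p_i$ are exactly uniform the inequalities are equalities, giving tightness. There is essentially no difficult step: the only points requiring care are the bookkeeping that identifies the selection event of \Cref{thm:adjustment} with $\{W = j\}$ (including the harmless treatment of ties), and the observation that the truncated numerator in \eqref{eq:adjustment} collapses to $p_j/\min_{i \ne j} p_i$ on that event.
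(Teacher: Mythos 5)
Your proposal is correct and follows essentially the same route as the paper, which establishes \Cref{cor:cond} via \Cref{exm:winner}: instantiate \Cref{thm:adjustment} with $p = p_j$, $Z = p_{-j}$, and $s(x,z) = I(x < \min_k z_k)$, identify the selection event with $\{W = j\}$, and then pass from conditional to marginal control by the law of total probability. Your extra care in noting that ties involving a null $p_j$ occur with probability zero (since it has a conditional density) is a welcome, if minor, sharpening of the paper's identification of $\{W=j\}$ with the selection event.
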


Unlike Sidak's simultaneous approach, which rejects the winning null when the smallest p-value is small in absolute terms, the conditional approach rejects the winning null when the smallest p-value is small relative to the second smallest p-value. This procedure is strange, but fairly easy to interpret: we reject the winning null when the most extreme observation is $1/\alpha = 10$ times more extreme under its null than the second most extreme observation is under its null. This can be quite a stringent requirement! 

Once written in terms of p-values, its easy to mathematically see the merits and pitfalls of the conditional approach. If all the p-values except the smallest provide essentially no evidence against the null,  then $p_{(2)} \approx 1$ and we reject when $p_{(1)} < \alpha$, the same p-value cutoff as a one-dimensional problem. On the other hand, if even just one other p-value provides a similar amount of evidence against the null as the smallest p-value, then then $p_{(2)} \approx p_{(1)}$ and we will never reject because $p_{(1)} \not < \alpha p_{(1)} \approx \alpha p_{(2)}$.

The p-value viewpoint also makes it clear that the conditional approach can only outperform the simultaneous approach when some null p-values are conservative (i.e., they have super-uniform distributions). We give a heuristic argument here, and a more formal statement in \Cref{sec:beta_dist_appdx}. Suppose one of our p-values $p_1$ is a very strong signal (so it is very small with high probability) but the remaining p-values $p_2, \dots, p_n$ are null p-values that are uniform (i.e., they are not conservative). Our conditional procedure will reject when our smallest p-value, likely $p_1$, is less than $\alpha$ times the smallest of $p_2, \dots, p_n$. The minimum of these $n-1$ uniform p-values is $1/n$ on average. Hence, roughly speaking, the conditional approach also rejects when $p_{(1)}$ is less than $\alpha/n$, which is essentially the same as Sidak's simultaneous approach for moderately sized $n$ or small $\alpha$.   

\subsubsection{Confidence regions for the winner}

In parametric settings, we can invert \Cref{cor:cond}'s test to get selective confidence regions for the winning parameter. Consider observing independent data $X_i \sim P_{\theta_i}$ from an MLR family $P_{\theta}$ parametrized by $\theta \in \R$. Let $p_i^{\theta_0}$ (which is a function of solely $X_i$) be the UMP p-value for testing the null $H^{\theta_0}_{0, i} : \theta_i \leq \theta_0$. Details regarding these p-values can be found in \Cref{sec:one_sided_mlr_appdx}. We can define the winner $W = \argmin_{j \in [n]} p_j^{\theta_0}$ to be the index of the smallest and therefore most promising p-value. This winning index will be the same irrespective of $\theta_0$  \footnote{If we use the same auxiliary randomness to compute $p_i^{\theta_0}$ for every $\theta_0$, then one index will result in the smallest p-value for every $\theta_0$ and $W$. The way we construct our UMP p-values in \Cref{sec:one_sided_mlr_appdx} ensures that this is the case. }. By inverting \Cref{cor:cond}'s test, we get an LCB

\begin{equation}
    \label{eq:winning_lcb}
    \{\theta_0 : p^{\theta_0}_{(1)}/p^{\theta_0}_{(2)} > \alpha  \}
\end{equation}
for the winning parameter $\theta_W$ that holds conditionally on $W$ with probability exactly $1-\alpha$:
\begin{equation*}
    P( \theta_W \in\{\theta_0 : p^{\theta_0}_{(1)}/p^{\theta_0}_{(2)} > \alpha  \} |W ) = 1-\alpha.
\end{equation*}
The fact that the confidence region \eqref{eq:winning_lcb} is actually an LCB is a consequence of the selective p-value $p_{(1)}^{\theta_0}/p^{\theta^0}_{(2)}$ being monotone non-decreasing in null parameter $\theta_0$. \Cref{sec:one_sided_monotone_appdx} provides general conditions (which will hold for essentially every selective problem) under which selective p-values like $p_{(1)}^{\theta_0}/p^{\theta^0}_{(2)}$ are monotone in the null parameter. We show these conditions apply here, and also argue that \eqref{eq:winning_lcb} has exact $1-\alpha$ coverage in \Cref{sec:cond_appdx}. We also show in \Cref{sec:cond_appdx} how to invert \Cref{cor:cond}'s test to get a confidence interval (CI) instead of LCB, and that both our CI and LCB match \cite{Fithian2017}'s approach in the Gaussian case. 

Writing the conditional LCB terms of p-values helps us better understand its behavior. In particular, recalling our motivating Gaussian example \eqref{eq:motivating_lcb}, we learn that \Cref{fig:winner}'s LCB stretches back exactly to the $\hat{\mu}$ under which it is $1/\alpha = 10$ times less likely to see something as extreme as the winner than something as extreme as the runner-up. Noting this fact, \Cref{fig:tail_prob} illustrates why the LCB \eqref{eq:motivating_lcb} diverges to $-\infty$ as the winner and runner-up get closer. If the winner and runner-up are very close, we need the LCB $\hat{\mu}$ to be very far back for the winner to be ten times more extreme than the runner-up. Thanks to the decay of the Gaussian tail, however, we can always find such a mean if we go far back enough. 

For non-Gaussian data, the amount the conditional LCB \eqref{eq:winning_lcb} stretches back depends on the right tail decay of $P_{\theta}$. The faster the tail decays as $\theta$ decreases, the larger the first $\hat{\theta}$ for which the winner is $1/\alpha = 10$ times as extreme as the runner-up. Seeing as the Gaussian distribution, which has a rapidly decaying tail, still results in exploding lower bounds, we should expect that the distance from the winning observation to the lower bound can often be quite large in many settings.  

As an example, consider observing independent exponential random variables $X_i \sim \text{Exp}(\lambda_i)$. Crucially, the parameter space $\lambda \in (0, \infty)$ is bounded below. The exponential distribution has an MLR in $T(x) = 1/x$, so the UMP test for $H^{\lambda_0}_{0, i}: \lambda_i \leq \lambda_0$ uses a p-value $p^{\lambda_0}_i$ that rejects when $X_i$ is small. It turns out that 
\begin{equation*}
    \lim_{\lambda^0 \downarrow 0} p^{\lambda_0}_{(1)}/p^{\lambda_0}_{(2)} = X_{(1)}/X_{(2)},
\end{equation*}   
so the conditional LCB for the winning parameter $\lambda_W$,
\begin{equation}
\label{eq:exp_winning_lcb}
 \{\lambda_0 : p^{\lambda_0}_{(1)}/p^{\lambda_0}_{(2)}  > \alpha\},
\end{equation}
is vacuous whenever $X_{(1)}/X_{(2)} > \alpha$, i.e., with positive probability the confidence region \eqref{eq:winning_lcb} spans the whole parameter space $(0, \infty)$. A careful derivation of this test and result can be found in \Cref{sec:exponential_winner_appdx}. 

The failure of the exponential conditional LCB \eqref{eq:exp_winning_lcb} manifests in real data examples. On a dataset of car engine failure times \citep{Molotaliev}, we find that the conditional LCB \eqref{eq:exp_winning_lcb} is always vacuous. The dataset has the failure times of one-hundred car engines, which we model as independent exponential random variables. Over many sub-samples of just $n=2$ failure times, the LCB \eqref{eq:winning_lcb}, which does inference on the worse of the two engines, always gives a vacuous lower bound of zero. In contrast, the simultaneous approach always gives a non-vacuous lower bound. \Cref{fig:car_engine} depicts the results. The result is concerning. Despite conditional LCB having exact $1-\alpha=0.9$ coverage, our empirical coverage is one (the vacuous LCB must always cover the parameter). This suggests that the exponential distribution is likely not an appropriate model for this dataset (the failure times are not spread out enough), even though it is often a natural choice for modeling failure times. 
\begin{figure}
    \centering
    \includegraphics[width=0.5\textwidth]{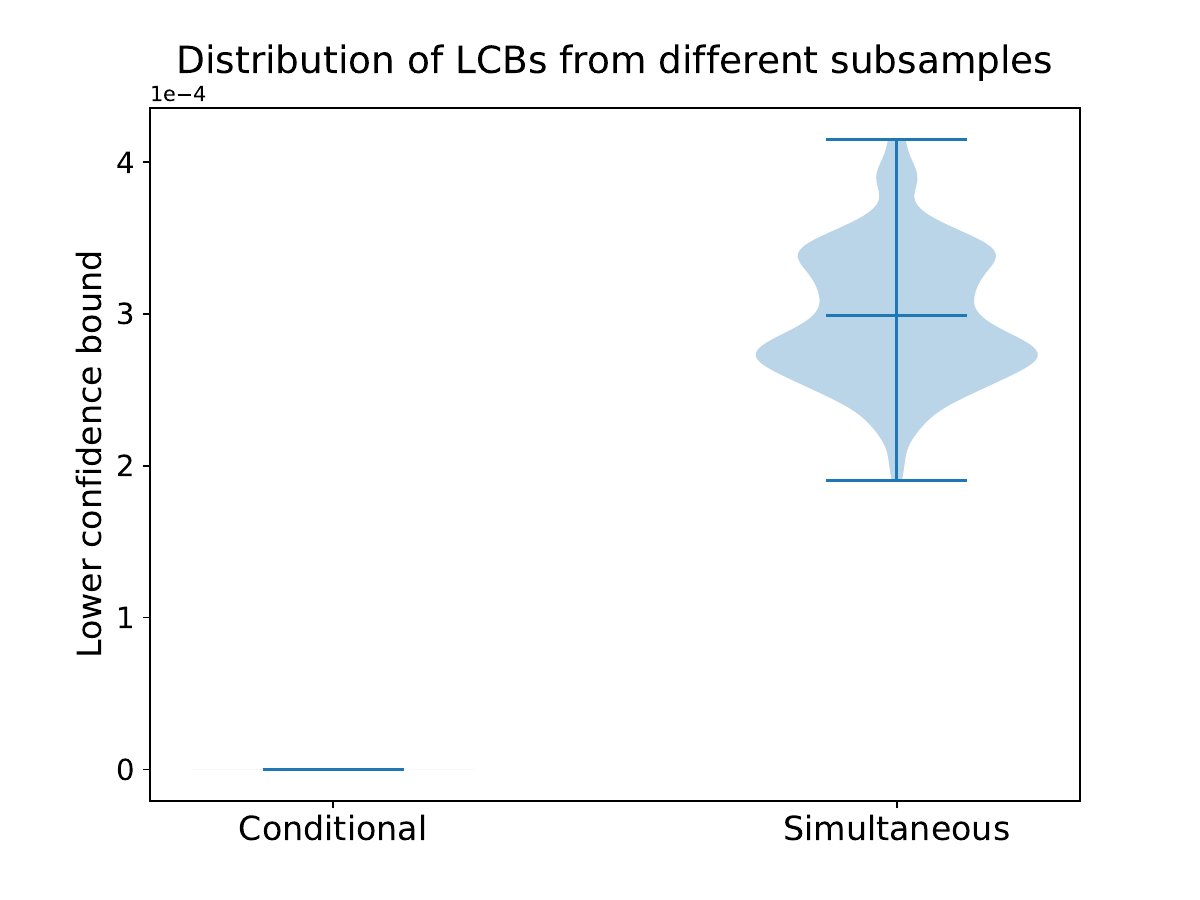} 
    \caption{Over $B=1000$ different sub-samples of $n=2$ failure times from the dataset \cite{Molotaliev}, the distribution of the conditional and simultaneous LCB for the ``winning'' parameter. The conditional LCB is always vacuous.}
    \label{fig:car_engine}
\end{figure}

\subsubsection{More discoveries via the closure principle}

Once written in terms of p-values, it is natural to treat \Cref{cor:cond}'s test as a test of the global null and try to close it (as in \cite{Marcus}). Closing a global null test precludes us from making confidence regions, but it allows us to make more individual discoveries. Closed global null testing procedures are often computationally intractable to implement, so it is interesting that \Cref{cor:cond}'s global null test admits a tractable closure.  

\begin{corollary}[Closed testing for winners]
    \label{cor:cond_closed}
    Suppose that $p_i$ are $n$ independent and selectively dominant p-values for the nulls $H_{0, i}$. Let $H_{0, (j)}$ denote the null corresponding to the $j$th smallest p-value (with ties broken randomly) and define $p_{(n+1)} = 1$.  Rejecting the null hypotheses $H_{0, (k)}$ for which $p_{(j)} \leq \alpha p_{(j+1)} $ for all $j \leq k$ controls the family-wise error rate at level $\alpha$. 
\end{corollary}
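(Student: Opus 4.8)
The plan is to apply the closure principle to the global-null test from Corollary \ref{cor:cond}, restricted to every subset of hypotheses, and then show that the resulting closed procedure simplifies to the stated rule. Recall that the closure principle says: if for every nonempty subset $I \subseteq [n]$ we have a level-$\alpha$ test $\phi_I$ of the intersection null $H_{0,I} = \bigcap_{i \in I} H_{0,i}$, then rejecting each individual $H_{0,k}$ exactly when $\phi_I$ rejects for \emph{all} $I \ni k$ controls the family-wise error rate at level $\alpha$. So the first step is to specify the local tests. For a subset $I$, apply Corollary \ref{cor:cond} to the $|I|$ p-values $\{p_i : i \in I\}$: this gives a valid level-$\alpha$ test of $H_{0,I}$ that rejects when $p_{(1),I} \le \alpha\, p_{(2),I}$, where $p_{(1),I} \le p_{(2),I} \le \cdots$ are the order statistics of $\{p_i : i \in I\}$ (with $p_{(2),I} := 1$ if $|I| = 1$, which is consistent with the convention $p_{(n+1)} = 1$). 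Corollary \ref{cor:cond} guarantees each $\phi_I$ has level $\alpha$ under $H_{0,I}$, since it controls error conditionally on the within-$I$ winner and hence marginally; this is exactly the hypothesis the closure principle needs.

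The second and main step is the combinatorial simplification: showing that ``$\phi_I$ rejects for all $I \ni k$'' is equivalent to ``$p_{(j)} \le \alpha\, p_{(j+1)}$ for all $j \le k$'' (interpreting $(k)$ as the index of the $k$th smallest p-value among all $n$). The key observation is monotonicity of the local test in the subset. Fix the hypothesis with the $k$th smallest global p-value; call its p-value $p_{(k)}$. For any $I$ containing this hypothesis, the smallest p-value in $I$ is at least $p_{(1)}$ and the second smallest in $I$ is at most $p_{(k)}$ — but to make the rejection condition hardest to satisfy we want $p_{(1),I}$ as large as possible and $p_{(2),I}$ as small as possible. The worst case is $I = \{(k), (k-1), \dots\}$-type sets; concretely, among all $I \ni (k)$, the condition $p_{(1),I} \le \alpha\, p_{(2),I}$ is \emph{least} likely to hold when $I$ consists of the hypothesis $(k)$ together with as few smaller-p-value hypotheses as possible, because adding a hypothesis with a very small p-value drags $p_{(1),I}$ down. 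I would argue that $\phi_I$ rejects for every $I \ni (k)$ if and only if it rejects for $I = \{(j), (j+1), \dots, (n)\}$ for each $j \le k$ — equivalently, taking $I = \{(j), \dots, (n)\}$, the order statistics within $I$ are $p_{(j)} \le p_{(j+1)} \le \cdots$, so $\phi_I$ rejects iff $p_{(j)} \le \alpha\, p_{(j+1)}$. Ranging over $j \le k$ gives the stated condition. The forward direction is immediate (these are special cases of $I \ni (k)$); the reverse direction requires checking that if $p_{(j)} \le \alpha\, p_{(j+1)}$ for all $j \le k$, then for an arbitrary $I \ni (k)$ we have $p_{(1),I} \le \alpha\, p_{(2),I}$. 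Writing $j^\ast$ for the smallest global rank present in $I$ and $j^{\ast\ast}$ for the second smallest, we have $p_{(1),I} = p_{(j^\ast)}$ and $p_{(2),I} = p_{(j^{\ast\ast})}$ with $j^\ast < j^{\ast\ast} \le k$; since $p_{(j^\ast)} \le \alpha p_{(j^\ast + 1)} \le p_{(j^\ast+1)} \le p_{(j^{\ast\ast})}$... wait, that only gives $p_{(1),I} \le p_{(2),I}$, not the factor $\alpha$ — so the real argument is $p_{(1),I} = p_{(j^\ast)} \le \alpha\, p_{(j^\ast+1)} \le \alpha\, p_{(j^{\ast\ast})} = \alpha\, p_{(2),I}$, using the chain of inequalities $p_{(j^\ast)} \le \alpha p_{(j^\ast+1)}$ (a hypothesis of our assumption, valid since $j^\ast \le k$) and $p_{(j^\ast+1)} \le p_{(j^{\ast\ast})}$ (monotonicity of order statistics, since $j^\ast + 1 \le j^{\ast\ast}$).

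I expect the combinatorial reduction in the second step to be the main obstacle — not because it is deep, but because one has to be careful about the correct ``worst-case'' subset and about the edge cases ($|I| = 1$, ties, and the boundary index $k = n$ where $p_{(n+1)} = 1$ makes the last condition $p_{(n)} \le \alpha$). The ties are handled by the randomized tie-breaking already baked into Corollary \ref{cor:cond}; I would simply note that the argument is conditional on the realized ordering. Once the reduction is established, family-wise error control is immediate from the closure principle, and no further probabilistic work is needed. I would present the proof as: (i) state the closure principle and the family of local tests; (ii) invoke Corollary \ref{cor:cond} for each $I$; (iii) prove the iff via the order-statistic chain above; (iv) conclude.
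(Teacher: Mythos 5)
Your proposal is correct and takes essentially the same route as the paper, which proves the closed hybrid procedure (\Cref{cor:hyb_closed}, of which this is the $\beta=0$ case) via the closure principle applied to the subset tests from \Cref{cor:cond}, with necessity shown via the tail subsets $\{(j),\dots,(n)\}$ and sufficiency via the same order-statistic chain $p_{(j^\ast)} \le \alpha\, p_{(j^\ast+1)} \le \alpha\, p_{(j^{\ast\ast})}$. One cosmetic slip: your parenthetical claim $j^{\ast\ast} \le k$ is not true in general (e.g.\ $I = \{(k),(n)\}$), but your chain only uses $j^\ast \le k$ and $j^\ast + 1 \le j^{\ast\ast}$, so the argument is unaffected.
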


As is often the case for closed procedures, \Cref{cor:cond_closed} procedure is best understood sequentially. We reject $H_{0, (1)}$ when $p_{(1)} \leq \alpha p_{(2)}$. Then, if we rejected $H_{0, (1)}$, we reject $H_{0, (2)}$ when $p_{(2)} \leq \alpha p_{(3)}$, so on and so forth until we fail to reject. 

\subsection{Hybrid inference}

Hybrid inference, originally proposed by \cite{Andrews2023}, is an inference on winners procedure that attempts to balance the benefits of the simultaneous and conditional approaches. It is a very elegant idea, but it currently only applies to Gaussian data and can be difficult to parse and implement. Using our selective dominance framework, we give a simpler exposition of hybrid inference that enables its application in more general settings, provided that the data is independent. As a bonus, our new procedure is very easy to understand and implement.

\Cref{cor:hyb} presents our hybrid testing procedure. We give the sketch of a proof and defer a detailed proof to \Cref{sec:hyb_proof_appdx}. 

\begin{corollary}[Hybrid test for the winner]
    \label{cor:hyb}
    Suppose that $p_i$ are $n$ independent and selectively dominant p-values for the nulls $H_{0, i}$, and let $W$ be the index of the smallest p-value (with ties broken randomly). Fix some $\beta \leq \alpha$ and define $\beta_n  = 1 - (1-\beta)^{1/n}$. Rejecting $H_{0, W}$ when 
    \begin{equation}
        \label{eq:hybrid_cutoff_thm}
        p_{(1)} \leq \frac{\alpha-\beta}{1-\beta}p_{(2)}  + \left(1 - \frac{\alpha - \beta}{1 - \beta}\right) \beta_n
    \end{equation} 
    controls Type I error at level $\alpha$. 
\end{corollary}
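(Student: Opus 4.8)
The plan is to recognize the hybrid rejection region as the union of the simultaneous rejection region with a \emph{conditional} sub-test run on its complement, and then to control the two pieces with a budget split that is slightly sharper than a naive union bound. Write $\gamma = (\alpha-\beta)/(1-\beta)$, which lies in $[0,1]$ since $\beta \le \alpha \le 1$, and let $R = \{p_{(1)} \le \gamma p_{(2)} + (1-\gamma)\beta_n\}$ be the rejection region. The first observation is that $R \supseteq \{p_{(1)} \le \beta_n\}$: on the latter event $\gamma p_{(2)} + (1-\gamma)\beta_n \ge \gamma p_{(1)} + (1-\gamma)p_{(1)} = p_{(1)}$, using $p_{(2)} \ge p_{(1)}$ and $\beta_n \ge p_{(1)}$. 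So, writing $\mathcal N$ for the set of true null indices, I would split
\[
P(\text{falsely reject }H_{0,W}) \;=\; \underbrace{P\big(p_{(1)}\le\beta_n,\ W\in\mathcal N\big)}_{=:\,a} \;+\; \underbrace{P\big(R,\ p_{(1)}>\beta_n,\ W\in\mathcal N\big)}_{=:\,b'}.
\]

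To bound $a$, I would use a Sidak argument over the nulls only: $\{p_{(1)}\le\beta_n,\ W\in\mathcal N\} \subseteq \bigcup_{j\in\mathcal N}\{p_j\le\beta_n\}$, and by independence and super-uniformity of the null p-values, $P\big(\bigcup_{j\in\mathcal N}\{p_j\le\beta_n\}\big) = 1 - \prod_{j\in\mathcal N}P(p_j>\beta_n) \le 1 - (1-\beta_n)^{|\mathcal N|} \le 1 - (1-\beta_n)^n = \beta$, so $a \le \beta$. To bound $b'$, I would decompose over the value of $W$. For a null index $j$, on the event $\{W=j,\ p_{(1)}>\beta_n\}$ we have $\min_{i\ne j}p_i \ge p_j > \beta_n$, so this event equals the selection event $\{S^{(j)}=1\}$ for the selection function $s^{(j)}(x,z) = I\big(\beta_n < x < \min_i z_i\big)$ applied to $p_j$ with $Z = p_{-j}$; since $p_j$ is selectively dominant and independent of $p_{-j}$, it is selectively dominant given $Z$. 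On this event $R$ is exactly $\{p_{sel}^{(j)} \le \gamma\}$, where $p_{sel}^{(j)} = (p_j - \beta_n)/(\min_{i\ne j}p_i - \beta_n)$ is the selective p-value of \Cref{thm:adjustment}. That theorem gives $P(p_{sel}^{(j)}\le\gamma \mid S^{(j)}=1) \le \gamma$, hence $P(R,\ W=j,\ p_{(1)}>\beta_n) \le \gamma\,P(S^{(j)}=1)$; summing the disjoint events $\{S^{(j)}=1\} = \{W=j,\ p_{(1)}>\beta_n\}$ over $j\in\mathcal N$ yields $b' \le \gamma\, P\big(p_{(1)}>\beta_n,\ W\in\mathcal N\big) =: \gamma b$.

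Finally I would combine the pieces. Since $a \le \beta$, $a + b = P(W\in\mathcal N) \le 1$, and $1-\gamma \ge 0$,
\[
P(\text{falsely reject }H_{0,W}) \;\le\; a + \gamma b \;=\; a(1-\gamma) + \gamma(a+b) \;\le\; \beta(1-\gamma) + \gamma \;=\; \beta + (1-\beta)\gamma \;=\; \alpha.
\]
The step I expect to be the crux is this last combination: a bare union bound over the two disjoint pieces only gives $\beta + \gamma$, which \emph{exceeds} $\alpha$, so the argument must exploit the complementarity $a + b \le 1$ together with $a \le \beta$ — and this is precisely what forces the cutoff to use $\gamma = (\alpha-\beta)/(1-\beta)$ rather than the naive $\alpha-\beta$. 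The other nontrivial ingredient is guessing the right selection function $I(\beta_n < x < \min_i z_i)$ so that \Cref{thm:adjustment} directly yields the hybrid threshold. Ties need only a brief remark: each null $p_j$ has a density and is independent of the others, so $P(p_j = p_i) = 0$ for all $i$, and on the events that matter $\{W=j\}$ coincides a.s.\ with $\{p_j < \min_{i\ne j}p_i\}$, making the tie-breaking randomization immaterial.
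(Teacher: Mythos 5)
Your proof is correct and takes essentially the same route as the paper's: the same selection function $I(\beta_n < x < \min_k z_k)$ with $Z = p_{-j}$, the same selective p-value $(p_j-\beta_n)/(\min_{i\ne j}p_i-\beta_n)$ from \Cref{thm:adjustment}, the same Sidak bound of $\beta$ on the event that a null p-value falls below $\beta_n$, and the same convex-combination algebra $\beta(1-\gamma)+\gamma=\alpha$ with $\gamma=(\alpha-\beta)/(1-\beta)$. The differences are only presentational — you sum unconditional probabilities of disjoint selection events over the null indices and treat ties explicitly, whereas the paper conditions on the good event and $W=j$ and invokes the law of total probability.
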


\begin{proof}[Proof sketch]    
    Let $B$ be the event that the smallest p-value comes from a null and is at most $\beta_n$. We know from Sidak's procedure that $P(B) \leq \beta$. Hence, on the complementary event $B^c$, which has probability $\geq 1-\beta$, it suffices to ensure that we fail to falsely reject $H_{0, W}$ with probability at least $(1-\alpha)/(1-\beta)$. Supposing $H_{0, j}$ is true, imagine testing $H_{0, j}$ using $p_j$ only when $B^c$ happens and $W=j$. This is exactly like selecting $p_j$ to use for inference when it is between $\beta_n$ and $\max_{i \neq j} p_i $. For this selection, \Cref{thm:adjustment}'s selective p-value is given by $(p_j - \beta_n)/(\max_{i \neq j} p_i  - \beta_n)$. Thus, we can ensure that we fail to reject $H_{0, j}$ when $B^c$ and $W=j$ happen with probability at least $(1-\alpha)/(1-\beta)$ if we fail to reject whenever
    \begin{equation*}
        \frac{p_j - \beta_n}{\max_{i \neq j} p_i  - \beta_n} > 1 - \frac{1-\alpha}{1-\beta} \iff p_j > \frac{\alpha-\beta}{1-\beta} \max_{i \neq j} p_i + \left(1 - \frac{\alpha-\beta}{1-\beta}\right)\beta_n.
    \end{equation*}
    Our hybrid inference procedure fails to reject in this case.  
\end{proof}

Written in terms of p-values, it is easy to see how the hybrid approach balances the benefits of the simultaneous and conditional approaches. It will reject both when the smallest p-value is small in absolute terms or when it small relative to the second smallest p-value. When the other p-values provide essentially no evidence against the null (i.e., $p_{(2)} \approx 1$), hybrid rejects when $p_{(1)}$ is at most $(\alpha-\beta)/(1-\beta)$, a cutoff that has no dependence on the problem dimension $n$. Thus, hybrid performs at least on par with the conditional procedure run at level $(1-\alpha)/(1-\beta)$. On the other hand, even if some other p-value provides as much evidence against the null as the smallest, hybrid still rejects whenever the level $\beta$ simultaneous approach does. This is because when $p_{(1)} \leq \beta_n$, the hybrid cutoff is a mixture of two things that are at least $p_{(1)}$, so we reject. 

The parameter $\beta$ allows hybrid inference to interpolate between the simultaneous and conditional approaches. When we set $\beta = 0$ then the hybrid cutoff \eqref{eq:hybrid_cutoff_thm} becomes $\alpha p_{(2)}$ and we recover the conditional method, and if we set $\beta=\alpha$ it becomes $\alpha_n = 1 - (1-\alpha)^{1/n}$ and we recover the simultaneous method. 

\subsubsection{Confidence regions}

In parametric settings, we can get hybrid confidence regions for the winning parameter by inverting \Cref{cor:hyb}'s test. Again suppose we have independent data $X_i \sim P_{\theta_i}$ from an MLR family $P_{\theta}$ parametrized by $\theta \in \R$, and let $p_i^{\theta_0}$ be the UMP p-value for testing the null $H^{\theta_0}_{0, i} : \theta_i \leq \theta_0$. By inverting \Cref{cor:hyb}'s test we get a hybrid LCB for the winning parameter $\theta_W$:
\begin{equation}
    \label{eq:hyb_lcb}
    \left\{ \theta_0 \in \R:  p^{\theta_0}_{(1)} > \frac{\alpha-\beta}{1-\beta}p^{\theta_0}_{(2)}  + \left(1 - \frac{\alpha - \beta}{1 - \beta}\right) \beta_n \right\}.
\end{equation}
We argue in \Cref{sec:hybrid_appdx} that the confidence region \eqref{eq:hyb_lcb} indeed gives a LCB. We also show in \Cref{sec:hybrid_appdx} how to invert \Cref{cor:hyb}'s test to get a CI (rather than an LCB), and that both our CI and LCB match the original construction from \cite{Andrews2023} in the Gaussian case. 

\subsubsection{Comparison to the union bound}

\begin{figure}[]
    \centering
    \includegraphics[width=0.45\textwidth]{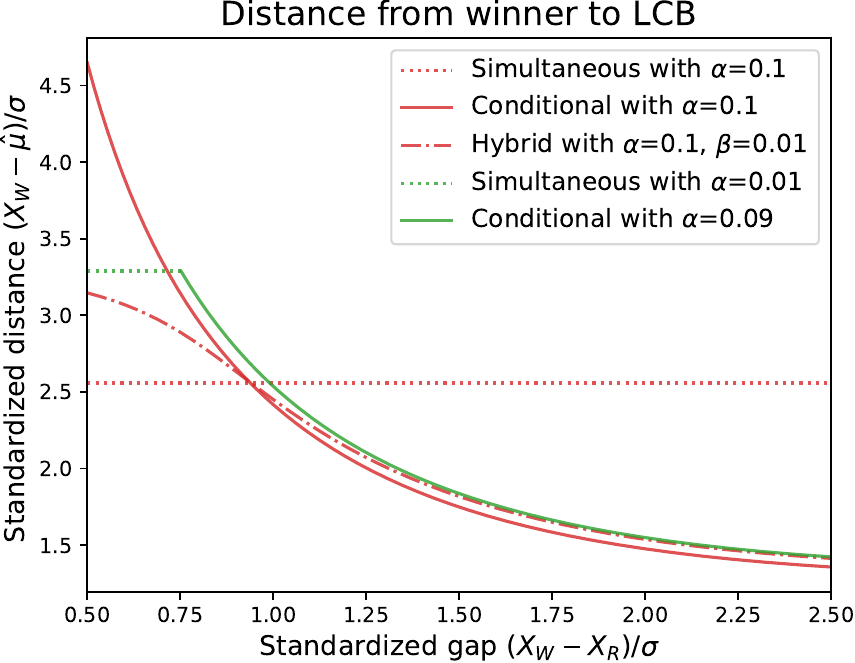}
    \caption{For the $n = 20$ dimensional Gaussian problem $X_i \sim N(\mu_i, \sigma^2)$ with largest observation $X_W$ and second largest observation $X_R$, the standardized distance $(X_W - \hat{\mu})/\sigma$ from $X_W$ to the level $\alpha = 0.1$ hybrid, conditional, and simultaneous LCB $\hat{\mu}$ as a function of the standardized gap $(X_W - X_R)/\sigma$ between the winner and runner-up. For hybrid we take $\beta = 0.01$. The larger of the level $\alpha = 0.09$ conditional LCB and level $\alpha = 0.01$ simultaneous LCB are shown in green (i.e., the union
    bound LCB).}
    \label{fig:hybrid}
\end{figure}

Another way to balance the benefits of the conditional and simultaneous approaches is to apply a union bound. Naively, we can reject the winning null whenever the level $\beta$ simultaneous approach rejects or the level $\alpha -\beta$ conditional approach rejects, i.e., whenever
\begin{equation}
    \label{eq:union_bound_cutoff}
    p_{(1)} \leq \max\{(\alpha - \beta) p_{(2)}, \beta_n\}.
\end{equation}
The union bound harshly switches between the simultaneous and conditional approaches, whereas the hybrid approach smoothly interpolates between them. This is illustrated in \Cref{fig:hybrid}, which compares the LCBs resulting from the hybrid versus union bound approaches in the $n=20$ dimensional Gaussian problem. 

Written in terms of p-values, we easily see that the hybrid approach dominates the union bound approach, which is affirmed by \Cref{fig:hybrid}. Both methods reject when $p_{(1)} \leq \beta_n$. When $p_{(1)} > \beta_n$, it is quick to verify that the hybrid cutoff \eqref{eq:hybrid_cutoff_thm} will be strictly larger than the union bound cutoff \eqref{eq:union_bound_cutoff}, meaning hybrid will reject whenever the union bound does and more \footnote{the authors \cite{Andrews2023} only point out that hybrid dominates the level $\beta$ classical approach, which is weaker than our statement}. 

Practically speaking, however, hybrid inference does not result in much improvement over the union bound, especially as it pertains to making discoveries. This is already somewhat evident in \Cref{fig:hybrid}, where we see that the hybrid LCB, although always larger than the union bound LCB, is still always very close to it. As the variance $\sigma^2$ gets large, the absolute difference $\hat{\mu}_{hyb} - \hat{\mu}_{union} $ between the hybrid and union bound LCBs grows with $\sigma$, but the relative difference $({\mu}_{hyb} - \hat{\mu}_{union})/\sigma$   (in units of standard deviation) remains the same (see \Cref{sec:hybrid_gap_appdx}). Accordingly, even when the hybrid cutoff \eqref{eq:hybrid_cutoff_thm} is larger than that of the union bound \eqref{eq:union_bound_cutoff}, it is provably not much larger. We detail why in \Cref{sec:hybrid_sim_appdx}, where we also run a number of simulations comparing the power of the hybrid and union bound approaches. In our simulations, we are unable to find a setting where the hybrid approach results in a appreciable power gain.  

Overall, we suggest viewing hybrid inference as a procedure that squeezes the remaining power out of the union bound approach. As it is not computationally more expensive and our p-value viewpoint makes it equally easy to implement, it is always worth using in place of the union bound. 

\subsubsection{Applying the closure principal}

As was true in the conditional case, treating \Cref{cor:hyb}'s test as a global null test and closing it allows us to make more discoveries. As we allow $\beta$ to range from $0$ to $\alpha$, this closed procedure interpolates between \Cref{cor:cond_closed}'s closed procedure and the Holm-Sidak procedure, which is the closure of Sidak's global null test. 

\begin{corollary}[Closed hybrid testing for winners]
    \label{cor:hyb_closed}
    Suppose that $p_i$ are $n$ independent and selectively dominant p-values for the nulls $H_{0, i}$. Let $H_{0, (j)}$ denote the null corresponding to the $j$th smallest p-value (with ties broken randomly) and define $p_{(n+1)} = 1$. Fixing some $\beta \leq \alpha$, rejecting the null hypotheses $H_{0, (k)}$ for which
    \begin{equation*}
        p_{(j)} \leq \frac{\alpha - \beta}{1-\beta} p_{(j+1)} + \left(1 - \frac{\alpha - \beta}{1-\beta} \right) \beta_{n - j + 1}   
    \end{equation*}
    for every $j \leq k$ controls FWER error at level $\alpha$. 
\end{corollary}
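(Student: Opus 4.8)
The plan is to obtain this from the closure principle of \cite{Marcus}, with the local intersection tests supplied by \Cref{cor:hyb}. For each nonempty $I \subseteq [n]$, let $H_{0,I} = \bigcap_{i \in I} H_{0,i}$, write $m = |I|$, order the p-values in $I$ as $p_{(1):I} \le \dots \le p_{(m):I}$ (ties broken randomly), set $p_{(m+1):I} = 1$, and let $\phi_I$ reject when $p_{(1):I} \le \tfrac{\alpha-\beta}{1-\beta} p_{(2):I} + \bigl(1 - \tfrac{\alpha-\beta}{1-\beta}\bigr)\beta_m$. A sub-collection of independent selectively dominant p-values is again independent and selectively dominant, so \Cref{cor:hyb} applied with $n$ replaced by $m$ shows that under any distribution in $H_{0,I}$ the index achieving $p_{(1):I}$ is a true null and $\phi_I$ rejects it with probability at most $\alpha$; hence $\phi_I$ is a level-$\alpha$ test of $H_{0,I}$. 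By the closure principle, rejecting $H_{0,i}$ exactly when $\phi_I$ rejects for every $I \ni i$ controls FWER at level $\alpha$, so it suffices to show this closed procedure coincides with the rule in the statement.

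Let $i_1, \dots, i_n$ be the indices sorted by p-value ($p_{i_j} = p_{(j)}$), and for $j \in [n]$ put $I_j = \{i_j, \dots, i_n\}$, which has size $n-j+1$ and satisfies $p_{(1):I_j} = p_{(j)}$, $p_{(2):I_j} = p_{(j+1)}$ (with the conventions above and $p_{(n+1)}=1$); thus $\phi_{I_j}$ rejects precisely when $p_{(j)} \le \tfrac{\alpha-\beta}{1-\beta}p_{(j+1)} + \bigl(1-\tfrac{\alpha-\beta}{1-\beta}\bigr)\beta_{n-j+1}$, the displayed condition. I will show $H_{0,(k)} = H_{0,i_k}$ is rejected by the closure iff $\phi_{I_j}$ rejects for all $j \le k$. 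One direction is immediate, since $i_k \in I_j$ whenever $j \le k$. For the converse, fix any $I \ni i_k$ and set $j^\star = \min\{j : i_j \in I\} \le k$. By minimality $I \subseteq I_{j^\star}$ and $i_{j^\star}$ achieves the minimum p-value in both, so $p_{(1):I} = p_{(j^\star)}$; dropping elements from $I_{j^\star}$ can only raise the second-smallest p-value, so $p_{(2):I} \ge p_{(j^\star+1)}$; and $|I| \le n - j^\star + 1$ together with $m \mapsto \beta_m = 1-(1-\beta)^{1/m}$ being non-increasing gives $\beta_{|I|} \ge \beta_{n-j^\star+1}$. As $\tfrac{\alpha-\beta}{1-\beta} \in [0,1]$, the cutoff of $\phi_I$ is at least that of $\phi_{I_{j^\star}}$, so $\phi_{I_{j^\star}}$ rejecting --- which holds since $j^\star \le k$ --- forces $\phi_I$ to reject, establishing the equivalence.

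Combining the equivalence with the closure principle gives the claimed FWER control; note the equivalence also shows the rejected set is $\{H_{0,(1)}, \dots, H_{0,(K)}\}$ for the largest $K$ with the inequalities holding for all $j \le K$, i.e.\ the procedure is a step-down procedure. I expect the main obstacle to be the ``collapsing'' step in the converse direction --- showing that among all $I \ni i_k$ the nested set $I_{j^\star}$ is the binding one --- which relies on the joint monotonicity of the hybrid cutoff in $p_{(2):I}$ and in $|I|$ (through $\beta_m$). The edge case $|I| = 1$, where $p_{(2):I} = 1$ and the cutoff reduces to $\tfrac{\alpha-\beta}{1-\beta} + \bigl(1-\tfrac{\alpha-\beta}{1-\beta}\bigr)\beta = \alpha$ (consistent with $\beta_1 = \beta$ and the convention $p_{(n+1)}=1$), should be recorded explicitly but causes no trouble.
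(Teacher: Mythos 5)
Your proposal is correct and follows essentially the same route as the paper: invoke the closure principle with \Cref{cor:hyb} supplying a level-$\alpha$ test of each intersection hypothesis, then collapse the closed procedure to the stated step-down rule using monotonicity of the hybrid cutoff in the second-smallest p-value within $I$ and in $|I|$ (via $\beta_m$ non-increasing), exactly as in the paper's necessity/sufficiency argument (you index the binding subset by the minimal rank $j^\star$ rather than by $|I|$, which is an immaterial difference). You are in fact slightly more explicit than the paper about why each local test $\phi_I$ is level $\alpha$ and about the $|I|=1$ edge case, but this is the same proof.
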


This closed procedure is also best understood sequentially. We reject $H_{0, (1)}$ when $p_{(1)} \leq \frac{\alpha - \beta}{1-\beta} p_{(2)} + (1 - \frac{\alpha - \beta}{1-\beta}) \beta_n$. Then, if we rejected $H_{0, (1)}$, we reject $H_{0, (2)}$ when $p_{(2)} \leq  \frac{\alpha - \beta}{1-\beta} p_{(3)} + (1 - \frac{\alpha -\beta }{1-\beta}) \beta_{n-1}$, so on and so forth until we fail to reject.

\section{Rank verification in exponential families}
\label{sec:rank_verification}

In this section we consider the problem of verifying that that the winning parameter is actually larger than the other parameters, i.e., rather than doing inference on the winning parameter, we do inference on the gap between the winning and remaining parameters. Mainly, we illustrate how \cite{Hung2019}'s rank verification procedure fits nicely in our framework. We show, however, that \cite{Hung2019} do not appropriately handle cases where there can be ties for the winner. In contrast, our selective dominance framework naturally handles these cases properly.

Overall, the section serves to illustrate how our selective dominance machinery provides a straightforward way to correctly design intricate and counter-intuitive selective procedures. For examples of how one may apply these methods, we refer the reader to the original article \cite{Hung2019}, which contains many compelling and important examples. 

\subsection{Warm-up: rank verification and Type III error}

To motivate the rank verification problem and shed some light on its relationship with selective dominance, we first consider a seemingly unrelated classical statistical question about Type III errors. 

A researcher wants to test if the unknown means of two univariate Gaussian samples, $X_1 \sim N(\mu_1, 1/\sqrt{2})$ and $X_2 \sim N(\mu_2, 1/\sqrt{2})$, are different. They end up rejecting the null hypothesis $H_0: \mu_1 = \mu_2$ because they observe that the two-sided p-value $2(1 - \Phi(|X_1 - X_2|))$ is below $\alpha$. After rejecting, they note $X_1 > X_2$, and claim ``not only are the two means are different, but they must be different because $\mu_1$ is bigger than $\mu_2$''. The researcher, however, only rejected the null that the means are equal. Can they make a claim about the direction of inequality? This is a question of Type III error, and we can use our selective dominance framework to show that the researcher's claim is actually statistically valid. 

Based on the claim, it seems that what the researcher really wants to do is test the one-sided null $H_{0, 12} : \mu_1 \leq \mu_2$ whenever they observe that $X_1 > X_2$, and test the complementary one-sided null $H_{0, 21} : \mu_2 \leq \mu_1$ whenever they observe that $X_2 < X_1$. To test the null $H_{0, ij} : \mu_i \leq \mu_j$ we normally use the UMP p-value $p_{ij} = 1 - \Phi(X_i - X_j)$. In the researcher's case, however, they only select this p-value to use for inference when they observe that $X_i > X_j$, or equivalently when $p_{ij} < 1/2$. Since the $p_{ij}$ are selectively dominant (by \Cref{exm:mlr}), \Cref{thm:adjustment} tells us that the researcher should correct for this selection and reject $H_{0, ij}$ only when $2p_{ij} \leq \alpha \iff p_{ij} \leq \alpha/2$. Letting $W$ be the index of the winner and $R$ of the runner-up, the final procedure is to reject $H_{0, WR}$ when $p_{WR} \leq \alpha/2$. 

The procedure described above is a rank verification procedure. It affirms not just that the means are different, but that the mean of the winning observation is the strictly larger of the two (i.e., of higher rank). This procedure, which rejects when the smaller of the two one-sided p-values is at most $\alpha/2$, is identical to the procedure that rejects when our earlier two-sided p-value is at most $\alpha$. Hence, for reasons likely unbeknownst to them, the researcher's original claim is indeed statistically valid. We walk through deriving the above procedure more carefully (with as much detail as we did in \Cref{exm:winner}) in \Cref{sec:rank_verification_warm_up_appdx}.

It turns out that, for the $n$-dimensional Gaussian problem (we have considered the 2-dimensional problem up to this point), it has long been known that we can confirm the winning mean as the strict largest via a level $\alpha/2$ one-sided test comparing the winner and runner-up \citep{Gutmann}. \cite{Hung2019} claim further that running the test at level $n/(n-1) \cdot \alpha/2$ maintains marginal Type I error control. This claim, however, is not true. By taking $\mu_1 = \mu_2$ and $\mu_3 = \dots = \mu_n = -\infty$, one can verify that the marginal Type I error of the level $\alpha/2$ one-sided test is exactly $\alpha$. Hence, the error cannot be inflated any further.  

Interestingly, \cite{Hung2019}'s claim becomes true if, rather than wanting to verify that the winning mean is strictly bigger than the other means, we want to verify that it is just at \underline{least as} big. Again let us focus on the 2-dimensional case. If, instead of testing the null $H_{0, WR}: \mu_W \leq \mu_R$, we test the null $H_{0, WR} :\mu_W < \mu_R$, then we can indeed run the one-sided test comparing the winner and runner-up at level $\alpha$ instead of level $\alpha/2$. Supposing (without loss of generality) that $\mu_1 \geq \mu_2$, the proof is straightforward:
\begin{align*}
    &P_{\mu_1, \mu_2}(\text{falsely reject } H_{0, WR}) &\\
    & =P_{\mu_1, \mu_2}(p_{21} < \alpha | W = 2)P(W=2)I(\mu_2 < \mu_1) & \text{(no false rejection when $W=1$ or $\mu_2 =\mu_1$)}\\
    &\leq P_{\mu_1, \mu_2}(2p_{21} < 2\alpha | p_{21} \leq 1/2)\cdot \frac{1}{2} \cdot I(\mu_2 < \mu_1) & \text{($X_1$ wins w.p. at least $1/2$ )}\\
    &\leq 2 \alpha \cdot \frac{1}{2} \cdot I(\mu_2 < \mu_1) \leq \alpha. & \text{(selective dominance)}
\end{align*}

As rejection probabilities are typically continuous functions of the parameters, it is counter-intuitive that excluding the boundary of the null makes a tangible difference. For data-dependent hypotheses, however, the false rejection region can be a highly discontinuous function of the parameters, which elicits this counter-intuitive behavior. For example, under the data-dependent null $H_{0, WR} :\mu_W < \mu_R$, the false rejection region $\emptyset$ is empty when $\mu_1 = \mu_2$, but equal to $\{X_2 - X_1 > z_{1-\alpha}\}$ whenever $\mu_1$ is even just $\epsilon > 0$ larger than $\mu_2$. Under the null, $H_{0, WR} :\mu_W \leq \mu_R$ the false rejection region $\{|X_1 - X_2| > z_{1-\alpha}\}$ is as large as possible when $\mu_1 = \mu_2$, which is exactly what prevents us from inflating the level of the test. For both the nulls $H_{0, WR} :\mu_W < \mu_R$ and $H_{0, WR} :\mu_W \leq \mu_R$, \Cref{fig:warm_up_error_control} fixes $\mu_2=0$ and plots the Type I error of the level $\alpha$ one-sided test comparing the winner and runner up for different values of $\mu_1$. The discontinuity at $\mu_1=0$ illustrates why we have Type I error control for $H_{0, WR} :\mu_W < \mu_R$ but not for $H_{0, WR} :\mu_W \leq \mu_R$. 

A nice implication of our discussion is the following surprising fact: if we want to to verify that the winning mean amongst $X_1 \sim N(\mu_1, 1/\sqrt{2})$ and $X_2 \sim N(\mu_2, 1/\sqrt{2})$ is \underline{at least} as large as the other mean, we can run a one-sided test comparing the winner to the runner up at level $\alpha$, i.e., with \underline{no correction}, and still maintain Type I error control. 

\begin{figure}[]
    \centering
    \scalebox{1}{
    \begin{minipage}{0.4\textwidth}
        \centering
        \includegraphics[width=\textwidth]{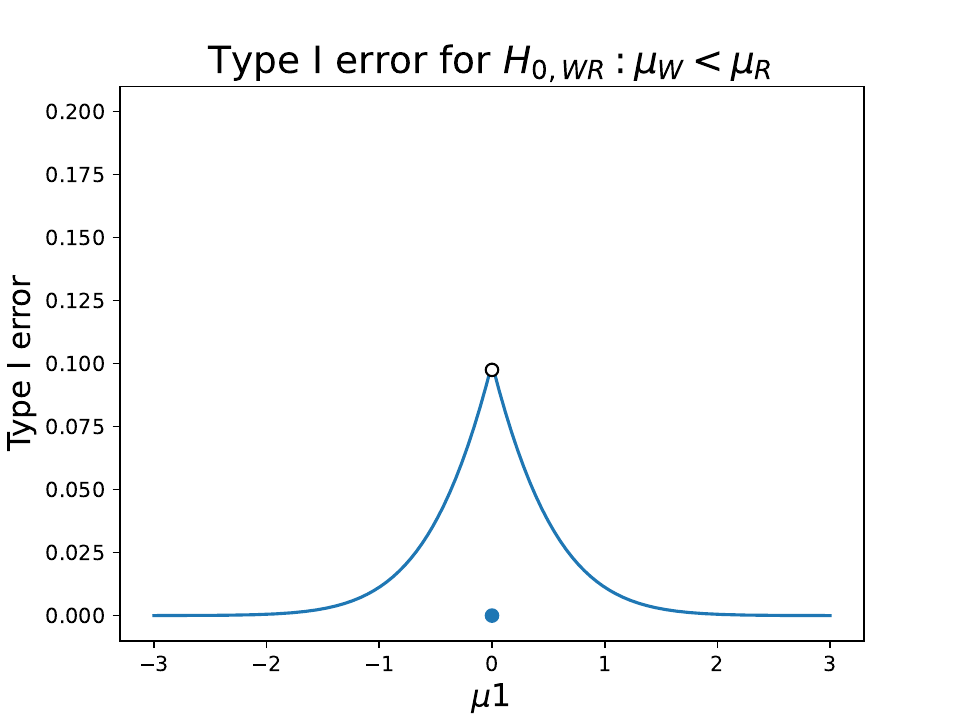}
    \end{minipage}\hfill
    \begin{minipage}{0.4\textwidth}
        \centering
        \includegraphics[width=\textwidth]{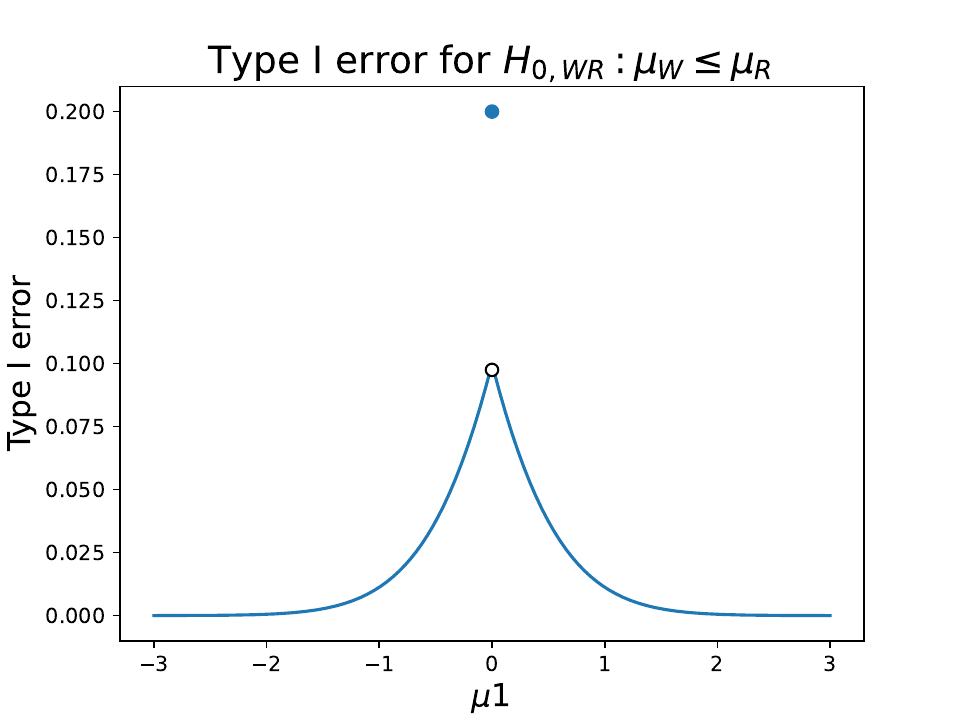}
    \end{minipage}\hfill
    }
    \caption{For $\mu_2=0$ and different $\mu_1$, the Type I error of rejecting $H_{0, WR}: \mu_W - \mu_R < 0$ and  $H_{0, WR}: \mu_W - \mu_R \leq 0$ when the level $\alpha$ one-sided test comparing the winner of $X_1 \sim N(\mu_1, 1/\sqrt{2})$ and $X_2 \sim N(\mu_2, 1/\sqrt{2})$  to the runner-up rejects.}
    \label{fig:warm_up_error_control}
\end{figure}

\subsection{Rank verification in exponential families}

\begin{figure}[]
    \centering
    \scalebox{1}{
    \begin{minipage}{0.4\textwidth}
        \centering
        \includegraphics[width=\textwidth]{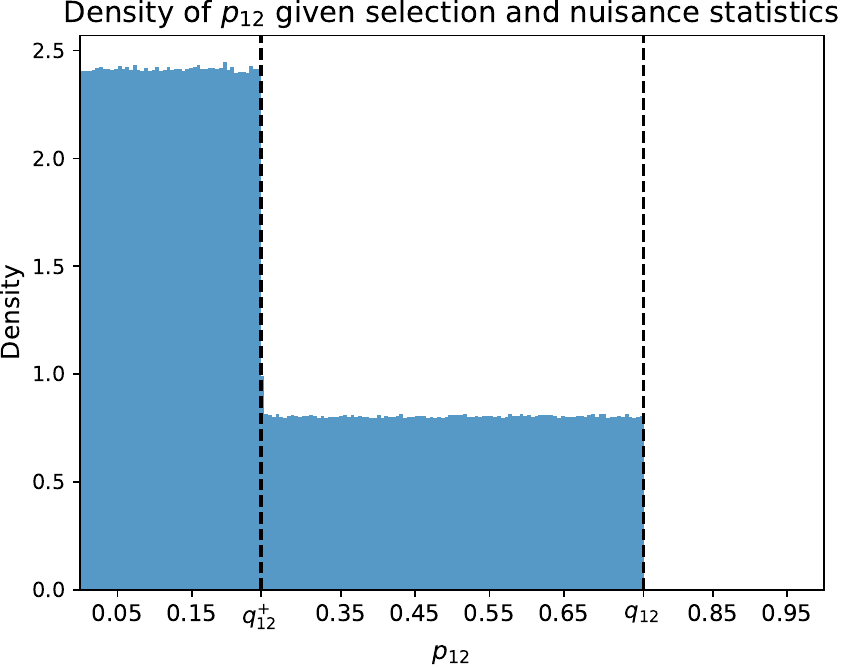}
    \end{minipage}\hfill
    \raisebox{0.18cm}{
    \begin{minipage}{0.4\textwidth}
        \centering
        \includegraphics[width=\textwidth]{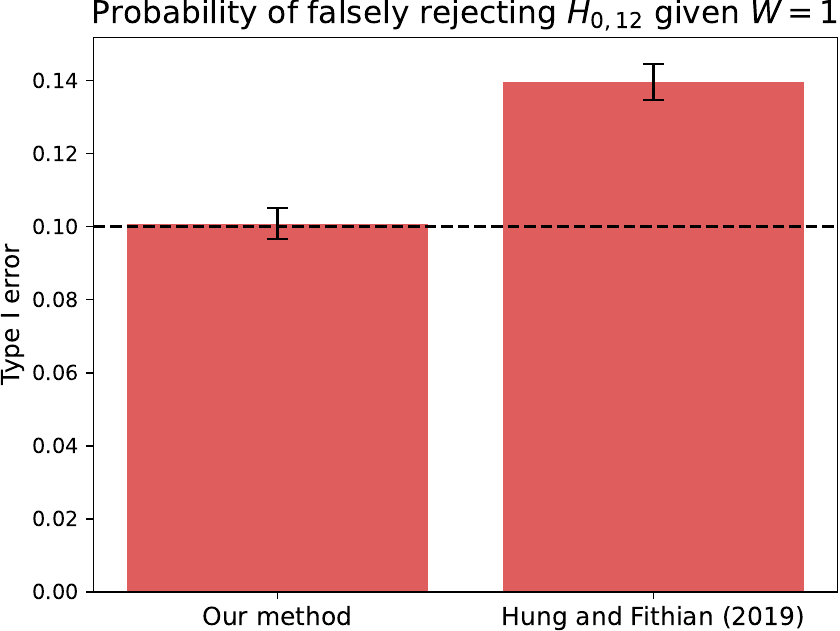}
    \end{minipage}\hfill
    }}
    \caption{Considering three independent binomials $X_i \sim \text{Bin}(b, s_i)$ with $b=4$ and $s_i = 1/2$, the first panel (left) depicts $N=10^6$ draws from the conditional distribution of the p-value $p_{12}$ (used for testing $H_{0, 12}: s_1 \leq s_2$) given $W = 1$ and the nuisance statistics $(X_1 + X_2)/2 = 2$, $X_3=2$. When $p_{12}  < q^+_{12}$, then $X_1$ is the sole winner and the p-value is selected for inference with probability one, but when $p_{12}^0 \in [q^+_{12}, q_{12}]$ there is a three-way tie and it is selected for inference with probability $1/3$. Hence, the p-value's conditional distribution is not uniform on $[0, q_{12}]$ as \cite{Hung2019} implicitly assume. The next panel (right) displays the consequence. Conditional on $W=1$, \cite{Hung2019} do not maintain Type I error control when testing $H^{0}_{W2}$ at level $\alpha=0.1$ (denoted by horizontal dashed line), whereas our method does. Error bars denote a 99\% confidence interval.}
    \label{fig:error_control}
\end{figure}

In this sub-section we illustrate how to do rank verification when we observe data $X \in \R^n$ from a natural multi-parameter exponential family $P_{\theta}$ with density
\begin{equation}
    \label{eq:exp_fam}
    g_{\theta}(x) = \exp(\theta_1 x_1 + \dots + \theta_n x_n - \psi(\theta))g(x),
\end{equation}
with respect to some base measure \footnote{We assume that $T_i(x) = x_i$ so that our discussion more closely mirrors \cite{Hung2019}, although we do not need to. \cite{Hung2019} also assume that $g(x)$ is Schur concave for other purposes, but we leave this assumption out.}. Like \cite{Hung2019}, we want to verify that $\theta_W$ is strictly larger than the remaining $\theta_j$, where $W$ is the index of the largest $X_i$. In case of ties, we follow \cite{Hung2019} and set $W$ to randomly be one of the winning indices. Since $E_{\theta}[X_i] = \theta_i$, the winning parameter $\theta_W$ may reasonably be the largest of the $\theta_i$. 

Considering the nulls $H_{0, ij} : \theta_i \leq \theta_j$, we want to reject the data-dependent null $\cup_{j \neq W} H_{0, Wj} $ and affirm that $\theta_W$ is strictly larger than every other parameter \footnote{If we performed the same analysis for the nulls $H^{\delta}_{0, ij} : \theta_i  - \theta_j \leq \delta$ then we could verify that $\theta_W$ is more than $\delta$ larger than any other $\theta_j$. Inverting these tests would result in a LCB for the difference $\theta_W - \max_{j \neq W } \theta_j$ between the winning and next largest parameter.}. Considering $j \neq W$, our strategy will be to come up with a valid test for $H_{0, Wj}$, and then reject $\cup_{j \neq W} H_{0, Wj}$ whenever we reject $H_{0, Wj}$ for every $j \neq W$. Fixing $i \neq j$, we start by constructing the UMPU p-value $p_{ij}$ for testing $H_{0, ij}: \theta_i \leq \theta_j$. Ultimately, we will only use this p-value to test $H_{0, ij}$ when $i$ is selected as our winning index, and we will correspondingly adjust the p-value to account for this selection. 

Defining the transformed sufficient statistics $Y \in \R^n$ by
\begin{equation}
    \label{eq:reparam}
    Y_i = \frac{X_i - X_j}{2}, \qquad  Y_j = \frac{X_i + X_j}{2}, \qquad  Y_{\ell} = X_{\ell} \text{ for } \ell \neq i, j,
\end{equation}
the random vector $Y$ has an exponential family density given by
\begin{equation}
    \tilde{g}_{\theta}(y) = \exp\left( (\theta_i - \theta_j) y_i + (\theta_i + \theta_j) y_j + \sum_{\ell \neq i, j} \theta_{\ell} y_{\ell} - \psi(\theta)  \right)\tilde{g}(y)
\end{equation}
with respect to some other base measure. It is then well-established (see \Cref{sec:one_sided_mlr_appdx}) that if we denote the conditional left-continuous survival function of $Y_i$ and its right-hand limit as
\begin{equation}
    G_{ij}(y_i | y_{-i}) = P_{\theta_i = \theta_j}(Y_i \geq y_i | Y_{-i} = y_{-i}) \qquad G_{ij}^+(y_i |y_{-i}) = \lim_{u \downarrow y_i } G_{ij}(u | y_{-i}),
\end{equation}
then the UMPU p-value $p_{ij}$ for testing $H_{0, ij}: \theta_i \leq \theta_j$ is given by 
\begin{equation}
    \label{eq:umpu_rank_verification}
    p_{ij} = G^+_{ij}(Y_i | Y_{-i}) + U_{ij, aux}(G_{ij}(Y_{i}|Y_{-i}) - G^+_{ij}(Y_i|Y_{-i})),
\end{equation}
where $U_{ij, aux}$ are $\text{Unif}([0, 1])$ random variables that are independent from each other and the data. By \Cref{exm:exp_fam}, the p-value $p_{ij}$ is selectively dominant given $Y_{-i}$. 

Crucially, we can tell if $X_i$ is a winner by examining the p-value $p_{ij}$. It is straightforward to confirm that $X_i$ is the sole winner exactly when $Y_i > \max_{k \neq i } Y_k - Y_j $. Equivalently, this happens when $p_{ij}$ is strictly smaller than 
\begin{equation}
    \label{eq:rank_verification_lower}
    q^+_{ij}(Y_{-i}) = G^+_{ij}(\max_{k \neq i} Y_k - Y_j | Y_{-i}).
\end{equation}
Likewise, one can confirm that $X_i$ is one of multiple winners exactly when $Y_i = \max_{k \neq i } Y_k - Y_j$, or equivalently when $p_{ij}$ is at least $q^+_{ij}$ but at most 
\begin{equation}
    \label{eq:rank_verification_upper}
    q_{ij}(Y_{-i}) = G_{ij}(\max_{k \neq i} Y_k - Y_j | Y_{-i}).
\end{equation}
Moreover, in the case that there are multiple winners, the number of winners is also a deterministic function of $Y_{-i}$:
\begin{equation}
    \label{eq:rank_verification_num_ties}
    N_{i}(Y_{-i}) = 1 + | \{ \ell \neq i : Y_{\ell} = \max_{k \neq i} Y_k  \} |.
\end{equation}
Note that $N_{i}(Y_{-i})$, which is always at least two, is \underline{not} the same as the number of winners, which can be one. Rather, it is the number of winners there will be if $X_i$ is a winner and at least one other $X_k$ is as well (see \Cref{sec:ties_appdx} for details).  

Leveraging these facts, we can apply our framework to come up with a valid test for $H_{0, Wj}$. Essentially, we use $p_{ij}$ to test $H_{0, ij}$ with probability one when it is less than $q^+_{ij}$, and with probability $1/N_i$ (we randomly select one of the $N_i$ winners) when it is between $q^+_{ij}$ and $q_{ij}$. Explicitly, letting $p = p_{ij}$ and $Z = Y_{-j}$, we can apply our framework with the selection function
\begin{equation*}
    s(x, z) = 
    \begin{cases} 
    1 & \text{if } x < q_{ij}^+(z), \\
    \frac{1}{N_i(z)} & \text{if } x \in [q_{ij}^+(z), q_{ij}(z)] \\
    0 & \text{otherwise}
    \end{cases}.
\end{equation*}
This is a piece-wise linear function that is easy to integrate, and, after some computations detailed in \Cref{sec:rank_verficiation_adj_appdx}, \Cref{thm:adjustment} tells us to reject when the selective p-value
\begin{equation}
    \label{eq:rank_verification_selective_p_val}
     \frac{p_{ij} - \left(1 - \frac{1}{N_i} \right)(p_{ij} - q^+_{ij})_+ }{q^+_{ij} + \frac{1}{N_i}(q_{ij} - q^+_{ij}) }
\end{equation}
is at most $\alpha$. 

The crucial difference between our derivation and \cite{Hung2019}'s is that \cite{Hung2019} use the same selective p-value when there can and cannot be ties amongst the $X_k$. If there cannot be ties amongst the $X_k$, then $q_{ij} = q^{+}_{ij}$ always, and the selective p-value \eqref{eq:rank_verification_selective_p_val} simplifies to $p_{ij}/q_{ij}$. If we use $p_{ij}/q_{ij}$ when ties are possible, however, we will not achieve conditional error control (as is claimed in \cite{Hung2019}). We give an example in \Cref{fig:error_control}, where $X \in \R^3$ is composed of three independent binomials. The left panel of \Cref{fig:error_control} depicts the conditional distribution of $p_{12}$ given $W=1$ for a specific setting of the nuisances statistics $Y_{-j}$. It makes it clear that rejecting when $p_{ij}/q_{ij} \leq \alpha $ does not maintain conditional error control, as is affirmed in \Cref{fig:error_control}'s right panel.

\section{Combining selective p-values}
\label{sec:multiple}

In this section we illustrate how our selective dominance viewpoint allows us to combine inferences across many p-values, even post-selection. Rather than selecting just one p-value to use for inference as in \Cref{sec:dominance}, some of this section's methods select many. To accommodate this, we generalize \Cref{sec:dominance}'s framework to allow us to select and perform post-selection inference with many p-values, not just one. This generalization is intuitive, and for sake of brevity we have deferred a formal account of it to \Cref{sec:multiple_p_vals_appdx}.

\subsection{Publication bias aware meta-analysis}

\begin{figure}
    \centering
    \includegraphics[width=0.55\textwidth]{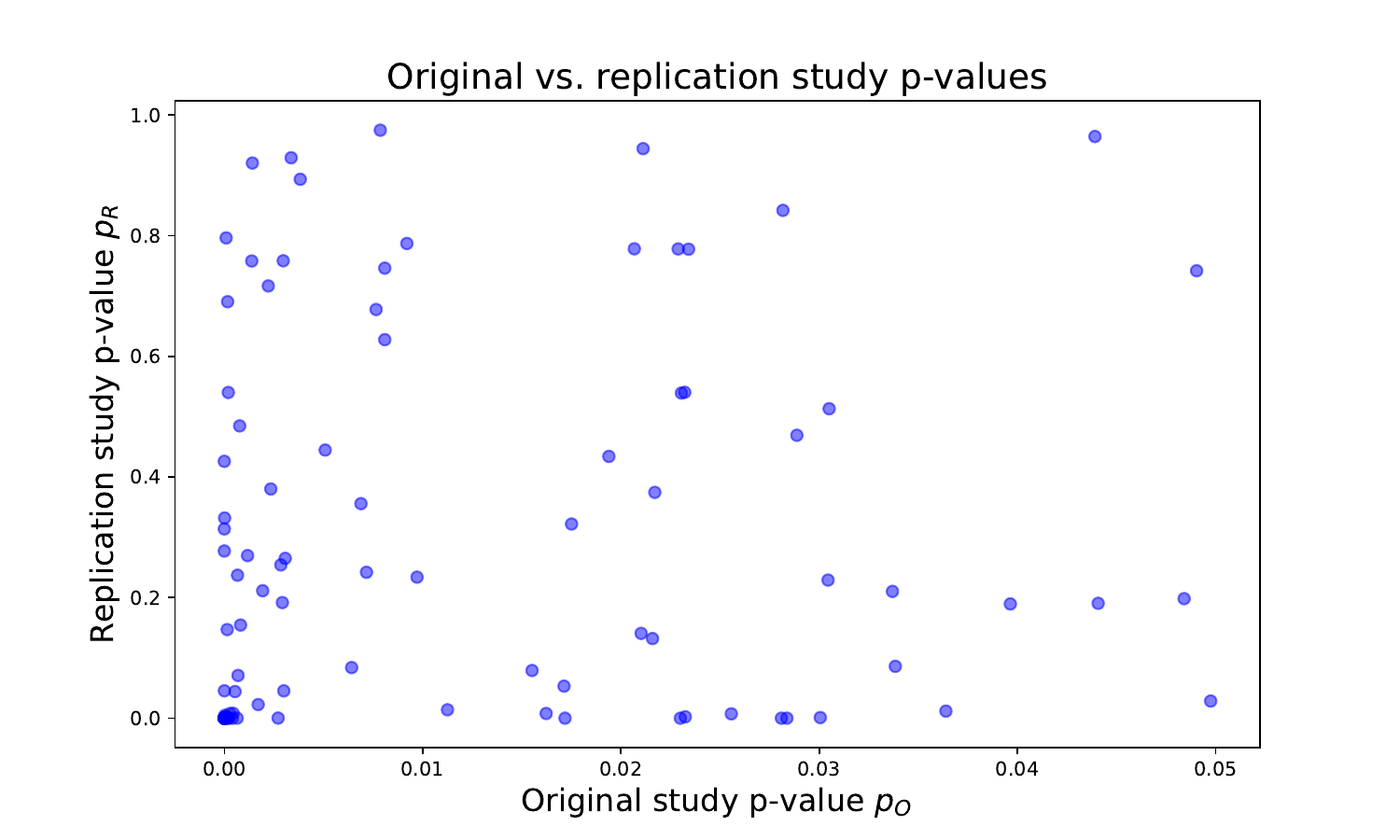}  
    \caption{Scatter plot of the original $p_O$ and replication $p_R$ p-values for 92 psychology studies from the open science collaboration's replication analysis \cite{OSF}. Note that the x-axis, which ranges from $[0, 0.05]$ is on a different scale than the y-axis, which ranges from $[0, 1]$. }
    \label{fig:replication}
\end{figure}

Performing replication studies is a crucial part of the scientific process, especially when one is wary that the original study may suffer from publication bias. To judge the prevalence of publication bias in psychology, the open science collaboration conducted a mass replication analysis of psychology studies \citep{OSF}. Via their efforts, we have access to p-values from 92 pairs of original and replication psychology studies \footnote{We exclude seven studies whose original p-value $p_O$ is larger than $\alpha=0.05$.}, depicted in \Cref{fig:replication}. We refer to p-values from the original study as $p_O$ and p-values from the replication study as $p_R$. The p-values $p_O$ from the original studies are significant at the $\alpha=0.05$ level, while only 34 of the replication p-values $p_R$ are significant. 

Although the original study p-values suffer from publication bias, they still contain valuable and usable information. By \Cref{exm:correction} and its subsequent discussion, $p_O/\alpha$ should be a valid p-value even in the presence of publication bias or p-hacking. Then, via Fisher's combination test, we can use both the corrected original p-value $p_O/\alpha$ and uncorrected replication p-value $p_R$ for inference. As a refresher, Fisher's test considers $n$ independent p-values $p_i$ for the nulls $H_{0, i}$ and rejects the global null $\cap_{i=1}^n H_{0, i}$ when the test statistic $-2 \sum_{i=1}^n \log(p_i) $ is at least as large as the $1-\alpha$ quantile of the $\chi^2_{2n}$ distribution. In our case, we have two independent p-values $p_O/\alpha$ and $p_R$ that test the same null hypothesis, and we can reject this null hypothesis when 
\begin{equation*}
    -2 (\log(p_O/\alpha) + \log(p_R)) \geq \text{Quantile}(1-\alpha, \chi^2_4). 
\end{equation*}
This approach is analogous to data-carving. After using part of our data for selection (the original study) and part for inference (the replication study), we still make use of the information remaining in the first part after selection for inference as well. Unlike existing approaches to data-carving, which are often complex and problem specific, using Fisher's combination test along with our selective dominance framework provides a general and simple way to data-carve. 

Our combination approach allows us to make more powerful inferences on the open science dataset. Our approach finds that 47 study pairs have significant findings, whereas using solely the original study p-value $p_O/\alpha$ or the replication study p-value $p_R$ results in only 39 or 34 significant findings respectively. It is surprising that, even after a harsh adjustment for publication bias, the corrected original study p-values result in more discoveries than replication study p-values. It is hard to gauge if this is due to chance, differences between the original and replication studies (e.g., minor differences in population demographics, devices used for measurement, sample size), or because there is somehow even harsher selection bias in the original studies than what we have accounted for. It may even be the case that some replicators felt incentivized to induce bias in the opposite direction, and tried to ensure that the replication studies were \underline{not} significant. 

\subsection{Adaptive versions of Fisher's combination test}

\begin{figure}
    \centering
    \hspace{-0.035\textwidth}
    \begin{minipage}{0.32\textwidth}
        \centering
        \includegraphics[width=\textwidth]{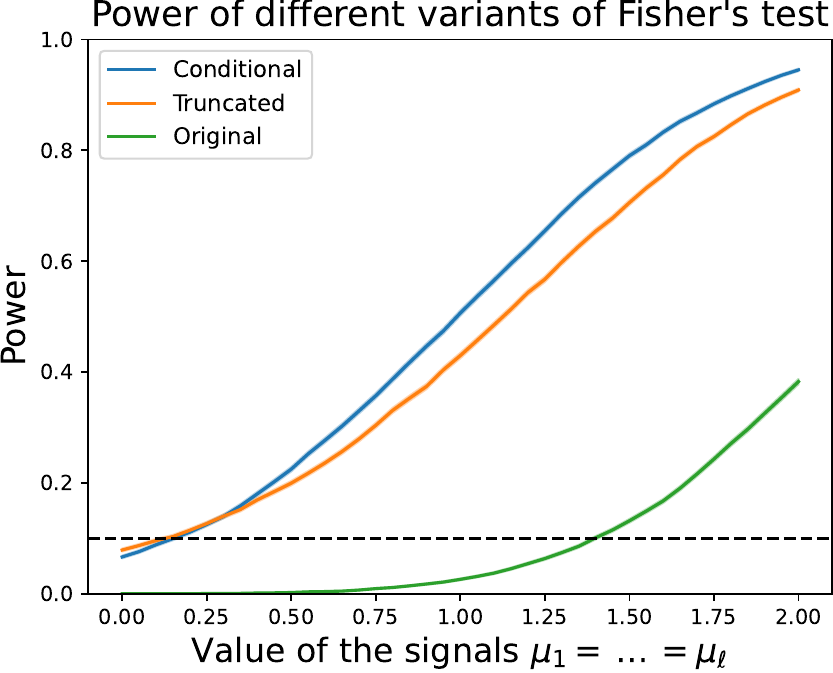}
        \caption*{(a) $\ell=3$}
    \end{minipage}
    \hfill
    \hspace{0.01\textwidth}
    \begin{minipage}{0.32\textwidth}
        \centering
        \includegraphics[width=\textwidth]{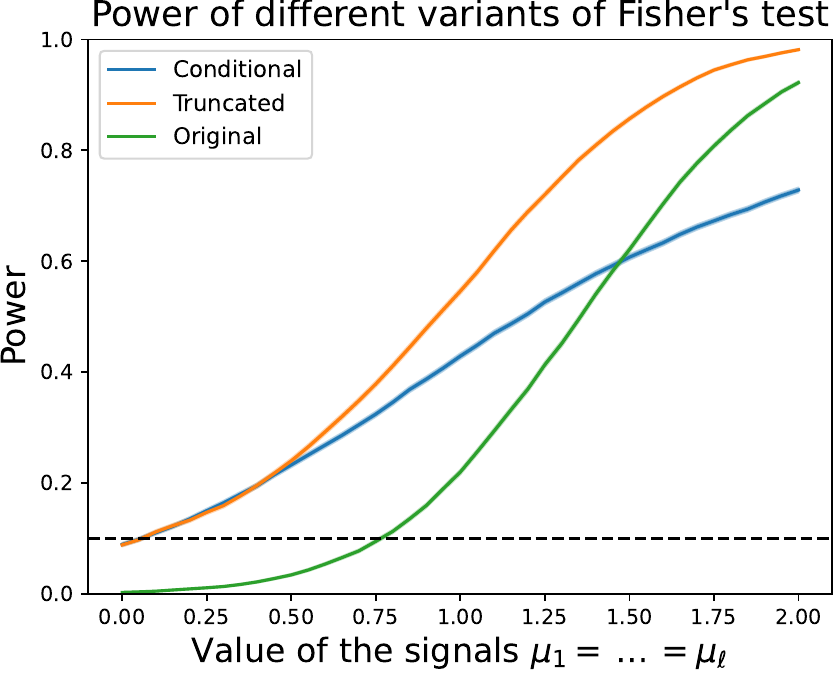}
        \caption*{(b) $\ell=5$}
    \end{minipage}
    \hfill
    \hspace{0.01\textwidth}
    \begin{minipage}{0.32\textwidth}
        \centering
        \includegraphics[width=\textwidth]{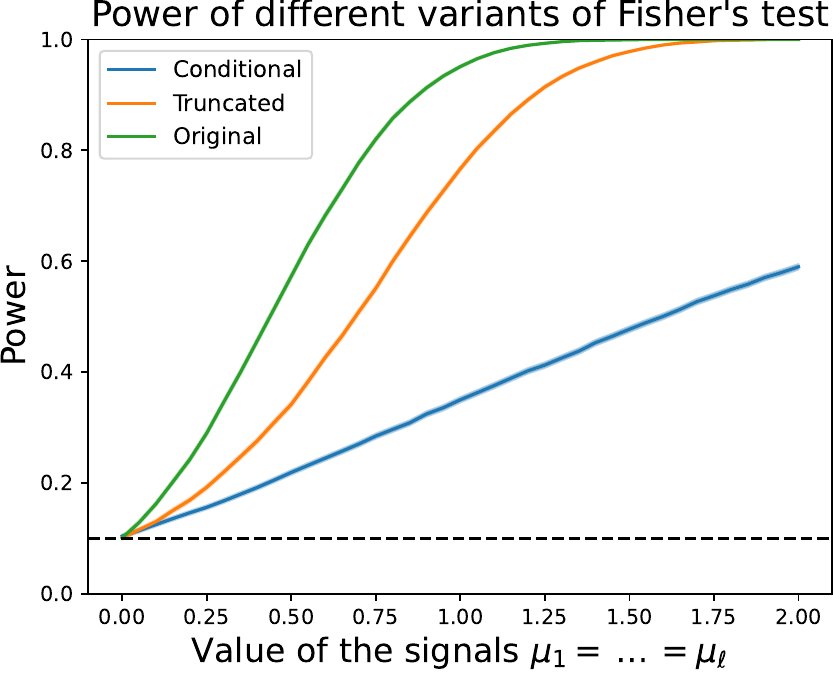}
        \caption*{(c) $\ell=10$}
    \end{minipage}
    \caption{ For $\ell=3$ (left), $\ell=5$ (middle), and $\ell=10$ (right), power of the $k=3$ conditional, $\tau=0.5$ truncated, and original Fisher's combination test for data drawn from $N(\mu, I_{10})$ with $\mu_{1}= \dots = \mu_{\ell}$ varying according to the x-axis and $\mu_{\ell + 1} = \dots = \mu_n = -2$. Power results from an average over $N=10^4$ trials, error bands (barely visible) denote one standard error, and the level $\alpha=0.1$ is denoted by the dashed line.}
    \label{fig:fisher}
\end{figure}

By employing similar ideas to the previous sub-section, we can come up with variants of Fisher's combination test that are more powerful when some null p-values are conservative (i.e., they have super-uniform distributions). \Cref{cor:cfisher}, which is of similar flavor to the conditional inference on winners procedure, gives a conditional version of Fisher's combination test that only uses the bottom $k$ p-values for inference. In \Cref{cor:cfisher}
s procedure, $k$ must be pre-specified. 

\begin{corollary}[Fisher's top-$k$ combination test]
    \label{cor:cfisher}
    Suppose that $p_i$ are $n$ independent and selectively dominant p-values for the nulls $H_{0, i}$, and let $H_{0, (j)}$ denote the null corresponding to the $j$th smallest p-value (with ties broken randomly). For some fixed $k \leq n$,  rejecting the data-dependent global null $\cap_{j=1}^k H_{0, (j)}$ (and therefore also the global null $\cap_{i=1}^n H_{0, i}$) when 
    \begin{equation}
        \label{eq:fisher_conditional}
        -2 \sum_{j=1}^k \log(p_{(j)}/p_{(k+1)}) \geq \text{Quantile}(1-\alpha, \chi^2_{2k})
    \end{equation}
    controls Type I error at level $\alpha$ conditional on the indices of the smallest $k$ p-values (and therefore also marginally). 
\end{corollary}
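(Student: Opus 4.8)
The plan is to mirror the argument of \Cref{exm:winner}, but now selecting the $k$ smallest p-values rather than just the smallest one, and then invoking the validity of Fisher's combination test for independent super-uniform p-values. First I would condition on the (random) set $A$ of indices of the $k$ smallest p-values being equal to a fixed set $\{i_1,\dots,i_k\}$; this is precisely the selection event $S=1$ in which the p-values $(p_i)_{i\in A}$ are ``selected''. If some null $H_{0,i}$, $i\in A$, is false then the data-dependent global null $\cap_{j=1}^k H_{0,(j)}=\cap_{i\in A}H_{0,i}$ is false and there is nothing to control, so I may assume all of these nulls hold; then $(p_i)_{i\in A}$ are valid null p-values. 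Since $\cap_{i=1}^n H_{0,i}\subseteq \cap_{j=1}^k H_{0,(j)}$, controlling error for the data-dependent null automatically controls it for the full global null.

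Next I would condition additionally on $Z=(p_i)_{i\notin A}$ (and on any auxiliary vectors the $p_i$, $i\in A$, are valid given, which are independent of the remaining p-values). Writing $c=p_{(k+1)}=\min_{i\notin A}p_i$, the selection event becomes $\{p_i<c\text{ for all }i\in A\}$; since the $p_i$, $i\in A$, are mutually independent and independent of $Z$, after this conditioning they are independent with $p_i\sim p_i\mid p_i<c$. By \Cref{thm:density} each $p_i$ has a non-decreasing null density on $[0,1]$, so $p_i\mid p_i<c$ has a non-decreasing density on $[0,c]$, and a non-decreasing density on $[0,c]$ places at most mass $t/c$ on $[0,t]$; hence $\tilde p_i:=p_i/c=p_i/p_{(k+1)}$ satisfies $P(\tilde p_i\le u\mid Z,S=1)\le u$ for $u\in[0,1]$, and the $\tilde p_i$, $i\in A$, are conditionally independent. (That $c=p_{(k+1)}>0$ a.e.\ on $S=1$ is the positivity-of-denominator remark following \Cref{thm:adjustment}.) This $\tilde p_i$ is exactly the selective p-value \eqref{eq:adjustment} for the coordinatewise selection function $s(x,z)=I(x<c)$, so this step is really just \Cref{thm:adjustment} applied to each coordinate, augmented by conditional independence from the multi-p-value generalization in \Cref{sec:multiple_p_vals_appdx}.

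Finally I would invoke the standard fact that Fisher's statistic is valid for independent super-uniform inputs: $-2\log\tilde p_i$ is stochastically dominated by $-2\log U_i$ with $U_i\sim\mathrm{Unif}([0,1])$, and stochastic dominance is preserved under sums of independent random variables, so conditionally on $Z$ and $S=1$ the statistic $-2\sum_{i\in A}\log\tilde p_i=-2\sum_{j=1}^k\log(p_{(j)}/p_{(k+1)})$ is stochastically dominated by a $\chi^2_{2k}$ variable. Hence the conditional rejection probability $P\bigl(-2\sum_{j=1}^k\log(p_{(j)}/p_{(k+1)})\ge \mathrm{Quantile}(1-\alpha,\chi^2_{2k})\mid Z,S=1\bigr)$ is at most $\alpha$. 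Averaging out $Z$ gives the same bound conditional on $S=1$, i.e.\ conditional on the set of $k$ smallest indices being $A$; since this holds for every admissible $A$ (and trivially when not all selected nulls hold), the law of total probability yields Type I error control conditional on the indices of the $k$ smallest p-values, and a fortiori marginally.

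The main obstacle I expect is the bookkeeping around the conditioning: checking that conditioning on $Z$ jointly with $S=1$ genuinely renders the selected p-values independent with the truncated laws $p_i\mid p_i<c$ (so \Cref{thm:density} applies coordinatewise and the $\tilde p_i$ are jointly super-uniform), and handling ties in the definition of $A$ and of $p_{(k+1)}$. These are exactly the points the multi-p-value framework of \Cref{sec:multiple_p_vals_appdx} is built to make routine; the probabilistic content — truncating a non-decreasing density, and closure of stochastic dominance under independent sums — is short.
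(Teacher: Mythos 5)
Your proof is correct and takes essentially the same route as the paper: both select the bottom $k$ p-values via thresholding at $p_{(k+1)}$, use the multi-p-value selective dominance machinery of \Cref{sec:multiple_p_vals_appdx} (equivalently, \Cref{thm:adjustment} applied coordinatewise with $s(x,z)=I(x<\min_k z_k)$ together with conditional independence given $Z$) to conclude that the ratios $p_{(j)}/p_{(k+1)}$ are independent super-uniform p-values conditional on the selection, and then apply Fisher's test, treating the case of a false selected null trivially and aggregating over the conditioning exactly as you describe. The only quibble is your appeal to the converse direction of \Cref{thm:density} (``selectively dominant $\Rightarrow$ non-decreasing density''), which the theorem only guarantees for everywhere-continuous densities; this is harmless, since the bound $P(p_i\le t\mid Z, S=1)\le t/p_{(k+1)}$ that you need follows directly from the definition of selective dominance (or from \Cref{thm:adjustment}) with the thresholding selection function, as you yourself note.
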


Examining \eqref{eq:fisher_conditional}, we see that when the $(k+1)$st p-value provides essentially no evidence against the null (so $p_{(k+1)} \approx 1$), running \Cref{cor:cfisher}'s test is like running Fisher's test using just the bottom $k$ p-values and ignoring that any selection took place. In this case, our test statistic will be essentially identical to Fisher's original test statistic, but the critical value required for rejection will be much smaller. On the flip-side, if many of the $p_{(j)}$ for $j \leq k$ are not sufficiently smaller than $p_{(k+1)}$, then \Cref{cor:cfisher}'s test statistic will be small and the test will fail to reject.

Another approach to improving Fisher's combination test is truncation: only use a p-value for inference if it is below some fixed threshold $\tau \in \R$ \citep{Zaykin}. Previous work, however, only establishes the validity of this test when the null p-values have exact $\text{Unif[0, 1]}$ distributions \citep{Zaykin, Zhang}. \Cref{cor:tfisher}, however, gives a version of Fisher's truncated combination test that is still valid whenever the p-values are independent and selectively dominant \footnote{We give a variant that is valid conditional on which p-values are selected. If we just care about rejecting the global null $\cap_{i=1}^n H_{0, i}$, we can give a more powerful test via marginalization, as \cite{Zaykin} do.}.

\begin{corollary}
    \label{cor:tfisher}
    Suppose that $p_i$ are $n$ independent and selectively dominant p-values for the nulls $H_{0, i}$, and fix $n$ thresholds $\tau_i \in [0, 1]$. Letting $j \in J$ denote the random set of indices for which $p_j \leq \tau_j$, rejecting the data-dependent global null $\cap_{j \in J} H_{0, j}$ (and therefore also the global null $\cap_{i=1}^n H_{0, i}$) when 
    \begin{equation*}
        -2 \sum_{j=1}^k \log(p_j/\tau_j) \geq \text{Quantile}(1-\alpha, \chi^2_{2|J|})
    \end{equation*} 
    controls Type I error at level $\alpha$ conditional on $J$ (and therefore also marginally). 
\end{corollary}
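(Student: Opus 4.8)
The plan is to reduce \Cref{cor:tfisher} to the classical stochastic-dominance property of Fisher's combination statistic, applied to the selective p-values $p_j/\tau_j$ in place of the raw $p_j$. First I would handle each coordinate in isolation: for $i \in [n]$, run the framework of \Cref{sec:dominance} with the thresholding selection function $s_i(x) = I(x \le \tau_i)$, so that the selection variable is $S_i = I(p_i \le \tau_i)$ and the event $\{S_i = 1\}$ is exactly $\{i \in J\}$. The computation in \Cref{exm:correction} shows that the selective p-value from \eqref{eq:adjustment} is $p_i/\tau_i$ (read as $\min(p_i/\tau_i, 1)$, which agrees with \eqref{eq:adjustment} a.e.\ on $\{S_i=1\}$). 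Hence, since $p_i$ is selectively dominant, \Cref{thm:adjustment} gives, for every distribution in $H_{0,i}$ and every conditioning vector $Z_i$ witnessing the selective dominance of $p_i$, that $P(p_i/\tau_i \le t \mid Z_i, S_i = 1) \le t$ for all $t \in [0,1]$. The corner cases are harmless: if $\tau_i = 1$ then $p_i/\tau_i = p_i$ and $i$ is always in $J$; if $\tau_i = 0$ then $\{i \in J\}$ is null under $H_{0,i}$ because $p_i$ has a density, and the index may be dropped.

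Next I would upgrade these one-coordinate statements to one conditional on the entire selection pattern $J$. This is where the independence hypothesis is used: because the $p_i$, together with any conditioning vectors $Z_i$ appearing in their selective dominance, are mutually independent, conditioning on $\{J = j_0\}$ --- which merely fixes each $S_i$ --- factorizes over coordinates. Consequently, under the data-dependent null $\cap_{j\in J} H_{0,j}$, the family $\{p_j/\tau_j : j \in J\}$ is, conditionally on $J$ and $(Z_j)_{j\in J}$, independent with each member stochastically dominating $\text{Unif}([0,1])$. It is cleanest to organize this bookkeeping through the multiple-p-value generalization of the framework in \Cref{sec:multiple_p_vals_appdx}, which is built precisely to select and correct several p-values simultaneously.

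Finally I would invoke the standard facts about Fisher's statistic: for any super-uniform $q$, $P(-2\log q \ge s) = P(q \le e^{-s/2}) \le e^{-s/2} = P(\chi^2_2 \ge s)$, so $-2\log q$ is stochastically dominated by $\chi^2_2$; a sum of independent random variables each stochastically dominated by some target is stochastically dominated by the sum of the targets; and a sum of $m$ independent $\chi^2_2$'s is $\chi^2_{2m}$. Since $|J|$ is constant given $J$, the threshold $\text{Quantile}(1-\alpha, \chi^2_{2|J|})$ is deterministic on that event, so combining the last display with the conditional independence established above yields $P(\text{reject} \mid J, (Z_j)_{j\in J}) \le \alpha$. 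Averaging over $(Z_j)_{j\in J}$ gives control conditional on $J$, and a further average over $J$ gives the marginal statement; the ``therefore also'' parentheticals follow since $\cap_{i=1}^n H_{0,i} \subseteq \cap_{j\in J} H_{0,j}$.

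The main obstacle is the conditioning argument in the second step: one must check carefully that conditioning on the \emph{whole} pattern $J$ --- rather than on a single $S_j = 1$, which is all \Cref{thm:adjustment} directly delivers --- preserves both the conditional super-uniformity of each $p_j/\tau_j$ and their mutual conditional independence. This is exactly the maneuver already used in \Cref{exm:winner} and \Cref{cor:cfisher}, and it is where independence of the $p_i$ is indispensable. Everything else --- the closed form $p_j/\tau_j$, the $\chi^2_2$ tail identity, and closure of stochastic dominance under independent sums --- is routine.
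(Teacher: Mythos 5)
Your proposal is correct and follows essentially the same route as the paper: apply the thresholding selection functions $s_j(x) = I(x \le \tau_j)$ within the multiple-selection framework of \Cref{sec:multiple_p_vals_appdx} to get that the selective p-values $p_j/\tau_j$ are conditionally independent and super-uniform given the selection, then combine them via the standard Fisher/$\chi^2_{2|J|}$ dominance argument and pass from conditional to marginal control, with the data-dependent null handled exactly as you describe. The only difference is that you spell out details the paper leaves implicit (the $\chi^2_2$ tail bound and the factorization when conditioning on the full pattern $J$ rather than just $\{S_j=1,\ j\in J\}$), which is fine.
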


If some $p_j$ are substantially lower than their truncation point $\tau_j$ but most are above it, then \Cref{cor:tfisher} will be powerful. In this case, \Cref{cor:tfisher}'s test will have a slightly smaller statistic compared to Fisher's original combination test, but a much smaller critical value. Hence, the truncated Fisher test is most powerful when some p-values come from strong alternatives but many come from conservative nulls. As such, \Cref{cor:tfisher} generalizes the truncated Fisher test to the settings where it is most applicable. On the flip-side, if most of the $p_j$ are below $\tau_j$, the truncated test statistic will pay a penalty due to selection, while the critical value required for rejection will remain essentially unchanged compared to Fisher's original test. 

To illustrate the benefits and drawbacks of these methods, we display their power alongside that of Fisher's original test for a simple $n=10$ dimensional Gaussian problem. We sample $X \sim N(\mu, I_n)$ and use the p-values $p_i = 1 - \Phi(X_i)$ try and detect the existence of a positive mean. For $\ell \in \{3, 5, 10\}$, we vary the strength $\mu_1 = \dots = \mu_{\ell} > 0$ of our signals and set $\mu_{\ell + 1} = \dots = \mu_n = -2$ to be conservative nulls. We do inference using the bottom $k=3$ p-values for the conditional version of Fisher's method and set the truncation $\tau = 0.5$ for the truncated version (i.e., we include $p_i$ for which $X_i > 0$). The results are displayed in \Cref{fig:fisher}.

As expected, the new methods outperform Fisher's original method when conservative nulls are present. When $\ell=3$ and the bottom three p-values are much smaller than the rest, the conditional method does incredibly well. As expected, its performance quickly degrades when $\ell = 5, 10$ and the fourth smallest p-value becomes close to the bottom three. The truncated method is more robust, and still considerably improves power when $\ell=5$. Unsurprisingly, both selective methods perform worse than Fisher's original method when $\ell=10$ and every $\mu_i$ is a signal (i.e., all p-values are sub-uniform).

\section*{Acknowledgements}
I would like to thank John Cherian, Kevin Guo, Yash Nair, Will Hartog, Trevor Hastie, Jonathan Taylor, and James Yang for helpful discussions. I would like to especially thank James Yang for his thoughts regarding some measure theoretic aspects of the framework presented in this paper.

\bibliographystyle{plainnat}
\bibliography{bibliography.bib}

\begin{appendix}

\section{Additional derivations, details, and comments}

\subsection{Permutation test p-value}
\label{sec:perm_test_appdx}

The randomized permutation test from Proposition 3 of \cite{Hemerik} uses the p-value
\begin{equation*}
    p = \frac{\#\{1 \leq j \leq w : T(g_j(X)) > T(X) \}}{w} + U_{aux} \frac{\#\{1 \leq j \leq w : T(g_j(X)) = T(X) \}}{w},
\end{equation*}
where $U_{aux} \sim \text{Unif}([0, 1])$ adds auxiliary randomness that is independent of $X$.

\subsection{$F$-test p-value}
\label{sec:f_test_appdx}

For the matrix $\X \in \R^{n \times d}$, let $\X_q \in \R^{n \times q}$ denote the matrix that just contains the first $q$ columns of $\X$.
Then define 
\begin{equation*}
    \text{RSS}_f = \|Y -  \X (\X^\top \X)^{-1} \X^\top Y\|^2_2
\end{equation*}
\begin{equation*}
    \text{RSS}_r = \|Y -  \X_q (\X_q^\top \X_q)^{-1} \X_q^\top Y\|^2_2
\end{equation*}
Under the null $H_0 : \beta_{q+1} = \dots = \beta_d = 0$, the $F$ statistic 
\begin{equation*}
    F = \frac{(\text{RSS}_r - \text{RSS}_f)/(p-q)}{\text{RSS}_f/(n-p-1)}
\end{equation*}
has an exact $F_{d-q, n-q-1}$ distribution conditional on $\X$. Let $G$ denote the CDF of the $F_{d-q, n-q-1}$ distribution. Under $H_0$, the p-value $p = 1 - G(p)$ for this test has an exact $\text{Unif}([0, 1])$ distribution conditional on $\X$.

\subsection{Selective dominance and independence}
\label{sec:sel_dom_independence}

Consider a p-value $p$ for the null $H_0$ which is selectively dominant. Suppose that under $P_{H_0}$ the p-value $p$ is always independent of some random variable $Z$. We show that $p$ is then selectively dominant given $Z$. 

Fix a distribution $P$ in the null $H_0$ and let $f(x)$ be PDF of $p$. Considering a selection function $s(x)$ such that $\int_0^1 s(x) f(x) > 0$, we have that 
\begin{equation*}
    \frac{\int_0^t s(x) f(x) dx }{\int_0^1 s(x) f(x) dx } \leq  \frac{\int_0^t s(x) dx }{\int_0^1 s(x)dx } \text{ for all } t \in [0, 1]
\end{equation*}
from the selective dominance \eqref{eq:selective_dominance} of $p$. 

Now, consider a new selection function $s(x, z)$. Note that due to independence, $f(x)$ is still the conditional PDF of $p$ given $Z=z$.  For any $z$ such that $\int_0^1 s(x, z) f(x) > 0$, it follows from above that 
\begin{equation*}
    \frac{\int_0^t s(x, z) f(x) dx }{\int_0^1 s(x, z) f(x) dx } \leq  \frac{\int_0^t s(x, z) dx }{\int_0^1 s(x, z)dx } \text{ for all } t \in [0, 1]
\end{equation*}
If under $s(x, z)$, we have $P(S=1) = 0$, then the selective dominance condition \eqref{eq:selective_dominance} trivially holds. On the other hand, if $P(S=1) > 0$, then we have shown in \Cref{sec:adjustment_proof} that $\int_0^1 s(x, Z) f(x) > 0$ a.e. under $P(\cdot | S=1)$. Thus the above guarantees that a.e. under $P(\cdot | S=1)$,
\begin{equation*}
    \frac{\int_0^t s(x, Z) f(x) dx }{\int_0^1 s(x, Z) f(x) dx } \leq  \frac{\int_0^t s(x, Z) dx }{\int_0^1 s(x, Z)dx } \text{ for all } t \in [0, 1].
\end{equation*}
By arguments from \Cref{sec:adjustment_proof}, the left-hand side is equal to the left-hand side of \eqref{eq:selective_dominance} a.e. under $P(\cdot | S=1)$, and the right-hand side equals the right-hand side of \eqref{eq:selective_dominance} a.e. under $P(\cdot | S=1)$, so we have established the claim.

\subsection{Data carving for the file-drawer problem}
\label{sec:carve_appdx}

We have two data samples $X_1 \sim N(\mu, 2)$ and $X_2 \sim N(\mu, 2)$ that are independent and want to test $H_0 : \mu \leq 0$. Suppose we only do inference because we observed that $X_1  > t$ for some threshold $t$. If we consider the p-value $p_{full} = 1 - \Phi((X_1 + X_2)/2  )$, then our selection function is given by 

\begin{align*}
    s(x) &= P( X_1 > t | p_{full} = x ) \\
         &=  P( X_1 > t | \frac{X_1 + X_2}{2} =  \Phi^{-1}(1 -x) )\\
         &= 1 - \Phi(t - \Phi^{-1}(1 - x))
\end{align*}
where we have used that 
\begin{equation*}
    \begin{bmatrix}
    X_1 \\ \frac{X_1 + X_2}{2}
    \end{bmatrix} \sim N \left(\begin{bmatrix}
        \mu \\ \mu
        \end{bmatrix}, \begin{bmatrix}
            2  & 1 \\ 1 & 1
            \end{bmatrix} \right)
\end{equation*}
so 
\begin{equation*}
    X_1 | \frac{X_1 + X_2}{2} = y \sim N(y, 1)
\end{equation*}
Thus our corrected p-value is given by 
\begin{equation*}
    p_{carve} = \frac{\int_0^{p_{full}}  1 - \Phi(t - \Phi^{-1}(1-x) )   dx }{\int_0^1 1 - \Phi(t - \Phi^{-1}(1-x) ) dx}
\end{equation*}
\begin{equation*}
    p_{carve} = \frac{\int_{\bar{X}}^{\infty} \phi(z) (1 - \Phi(t - z))   dz }{\int_{-\infty}^{\infty} \phi(z) (1 - \Phi(t - z))  dz } = \frac{\int_{\bar{X}}^{\infty} \phi(z) (1 - \Phi(t - z))   dz }{1 - \Phi(t/\sqrt{2})   }
\end{equation*}

We now show that $p_{carve}$ is monotone non-decreasing in $t$. Letting $Z$ and $Y$ be independent standard normal random variables and fixing some constant $a$, the selective p-value is given by 
\begin{equation*}
   p_{carve} = \frac{P(Z + Y > t, Z > a)}{P(Z + Y > t)} = P(Z > a | Z + Y > t)
\end{equation*}
for $a = \bar{X}$. Letting $W = Z + Y$ we can write $Z = \frac{1}{2}W +\epsilon$ where $\epsilon$ is independent of $W$. This gives us 
\begin{equation*}
    p_{carve} = P(\frac{1}{2} W + \epsilon > a | W > t) = E[P(W > 2(a - \epsilon)| W > t, \epsilon)| W > t ] = E[P(W > 2(a - \epsilon)| W > t, \epsilon)]
\end{equation*}
Then the fact that $p_{carve}$ is monotone non-decreasing in $t$ follows from the fact that $P(W > c | W > t)$ is monotone non-decreasing in $t$ for every constant $c$:
\begin{equation*}
{P(W > c |W >t)} = \begin{cases} 
\frac{P(W > c)}{P(W > t)} & \text{if } t \leq c, \\
1 & \text{if } t > c.
\end{cases}
\end{equation*}

\subsection{Post selection inference for the LASSO}
\label{sec:lasso_appdx}

In the below example, we freely refer to results from \cite{Lee2016}. 

\begin{example}[Post selection inference for the LASSO]

    Suppose that $p$ is a valid p-value for testing the null $H_0$ and it is selectively dominant given $Z$. Imagine there are known functions $q^+(z) \leq q^-(z)$ and $v(z)$, and we only choose to test $H_0$ with $p$ when $x(Z) > 0$ and $p \in [q^+(Z), q^{-}(Z)]$.  Applying our framework with the selection function $s(x) = I(p \in [q^+(Z), q^{-}(Z)])I(v(Z) > 0)$, \Cref{thm:adjustment} tells us that we control selective Type I error if we reject according to the selective p-value $p_{sel} = \frac{p - q^+(Z)}{q^-(Z) - q^+(Z)}$:
    \begin{equation}
        \label{eq:lasso_error_control}
        P_{H_{0}}\left(\frac{p - q^+(Z)}{q^-(Z) - q^+(Z)} \leq \alpha | S = 1\right)  \leq \alpha.
    \end{equation} 
    
    Now, suppose we observe n-dimensional data $Y \in N(\mu, \sigma^2)$ and a fixed matrix $\X \in \R^{n \times p}$. For a fixed regularization strength $\lambda$, the LASSO solution is composed of a fitted coefficient vector $\hat{\beta} \in \R^p$ and a sign vector $\hat{s} \in \R^p$ that satisfy the following Karush-Kuhn-Tucker conditions:
    \begin{align*}
        &X^{\top}(X \hat{\beta} - Y) + \lambda \hat{s} = 0,\\
        &\hat{s}_i = \text{sign}(\hat{\beta}_j)  &\text{ if } \hat{\beta}_j \neq 0,\\
        &\hat{s}_i \in [-1, 1] &\text{ if } \hat{\beta}_j = 0.
    \end{align*}
    We define the fitted model from the LASSO to be the set 
    \begin{equation*}
        \hat{M} = \{i \in \{1, \dots, p\} : |\hat{s}_i| = 1\}
    \end{equation*}
    For almost every $\lambda$, this is equal to the set of $i \in [p]$ for which $\hat{\beta}_i$ is non-zero.

    Now we establish our inferential target. For a subset $M \subseteq [p]$, define $\X_M \in \R^{n \times |M|}$ to be the matrix that has the columns of $\X$ indexed by $M$. Our goal is to do inference on the population parameters for the model $M$,
    \begin{equation*}
        \beta^M = \argmin_{b \in R^{|M|}} E[ \|Y - \X_M b \|^2_2] = \X^{\top} \mu.
    \end{equation*}
    Particularly, we focus on doing inference on the $j$th index of this parameter vector $\beta^M_j = e_j^{\top} \X_M^{\dagger} \mu$ by testing the null $H_{0, j}^M: \beta^M_j \leq 0$. Letting $\Omega^M = X^{\dagger} (X^{\dagger})^{\top}$, we see that $e_j^{\top}X_M^{\dagger} y \sim N(\beta_j^M, \sigma^2 \Omega^M_{jj} )$. Thus we can test this null using the p-value 
    \begin{equation*}
        p^M_j = 1 - \Phi\left( \frac{e_j^{\top}X_M^{\dagger} y}{\sigma \sqrt{\Omega^M_{jj}}}\right),
    \end{equation*}
    which is selectively dominant by \Cref{exm:mlr}. However, we only use this p-value to test the null when we observe that $\hat{M} = M$. 
    
    Define $Z = (I_n - (X_M^{\dagger})^{\top}e_je_j^{\top} X_M^{\dagger}/\Omega^M_{jj}) Y  $ so that $Y$ and $Z$ are independent. Lemma 5.1 of \cite{Lee2016}, often dubbed the polyhedral lemma, precisely tells us that there are functions 
    \begin{align*}
        &q_j^{M, s, +}(z) = 1 - \Phi\left( \frac{\mathcal{V}^{M, s, +}(z)}{\sigma \sqrt{\Omega^M_{jj}}}\right) \\
        &q_j^{M, s, -}(z) = 1 - \Phi\left( \frac{\mathcal{V}^{M, s, -}(z)}{\sigma \sqrt{\Omega^M_{jj}}}\right) \\
        &\mathcal{V}^{M, s, 0}(z)
    \end{align*}
    of $z$, such that 
    \begin{equation*}
        \{\hat{M} = M, \hat{s}_M = s \} = \{ p^{M}_j \in [q_j^{M, s, +}(Z), q_j^{M, s, -}(Z)],   \mathcal{V}_{M, s}^0(Z)\geq 0\} 
    \end{equation*}
    The functions $\mathcal{V}^{M, s, -}(z)$, $\mathcal{V}^{M, s, +}(z)$, $\mathcal{V}^{M, s, 0}(z)$ match those given in \cite{Lee2016}. 

    Now it is easy to see that rejecting the data-dependent null  $H_{0, j}^{\hat{M}}$ when $ (p^{\hat{M}}_j - q_j^{\hat{M}, \hat{s}, +}(Z))/(q_j^{\hat{M}, \hat{s}, -}(Z) - q_j^{\hat{M}, \hat{s}, +}(Z))$ is at most $\alpha$ controls Type I error conditionally on $\hat{M}$, and therefore also marginally. If $H_{0, j}^{M}$ is false, then trivially $P(\text{falsely reject } H_{0, j}^{\hat{M}} | \hat{M} = M, \hat{s}_M = s) = 0 \leq \alpha $. For the case that $H_{0, j}^M$ is true, the selection event $ \hat{M} = M$ and $\hat{s}_M = s$ is the same as selecting $p^M_j$ for inference in \eqref{eq:lasso_error_control}, so 
    \begin{align*}
        P(\text{falsely reject } H_{0, j}^{\hat{M}}| \hat{M} = M, \hat{s}_M = s) &= P \left( \frac{p^{\hat{M}}_j - q_j^{\hat{M}, \hat{s}, +}(Z)}{q_j^{\hat{M}, \hat{s}, -}(Z) - q_j^{\hat{M}, \hat{s}, +}(Z)} \leq \alpha \;\middle|\; \hat{M} = M, \hat{s}_M = s \right)\\
        &=P \left( \frac{p^{M}_j - q_j^{M, s, +}(Z)}{q_j^{M, s, -}(Z) - q_j^{M, s, +}(Z)} \leq \alpha \;\middle|\; \hat{M} = M, \hat{s}_M = s \right)\\
        &\leq \alpha 
    \end{align*}
    Conditional error control on $\hat{M}$ then follows from the law of total probability,
    \begin{align*}
        P(\text{falsely reject } H_{0, j}^{\hat{M}} | \hat{M} = M) &= \sum_{s \in \{-1, 1 \}^{|M|}} P(\hat{s}_M = s | \hat{M} = M) P(\text{falsely reject } H_{0, j}^{\hat{M}} | \hat{M} = M, \hat{s}_M = s)\\
                                                                   &\leq \alpha \sum_{s \in \{-1, 1 \}^{|M|}} P(\hat{s}_M = s | \hat{M} = M)\\
                                                                   &=\alpha,
    \end{align*}
    as does marginal error control,
    \begin{align*}
        P(\text{falsely reject } H_{0, j}^{\hat{M}}) &= \sum_{M \subseteq [p]} P(\hat{M} = M) P(\text{falsely reject } H_{0, j}^{\hat{M}} | \hat{M} = M)  \\
                                                     &\leq \alpha \sum_{M \subseteq [p]}P(\hat{M} = M)\\
                                                     &= \alpha. 
    \end{align*}
    \end{example}

\subsection{Comparing conditional and simultaneous inference}
\label{sec:beta_dist_appdx}

We consider a setting where we have $n$ independent and selectively dominant p-values $p_1, \dots, p_n$ that are all anti-conservative, i.e.,  $p_j \preceq \text{Unif}([0, 1])$. At worst, these p-values are exact uniforms (e.g., they come from the boundary of the null). 

We will show that, on an event with probability at least $1-\epsilon$, the conditional procedure, which rejects when $p_{(1)} \leq \alpha p_{(2)}$, can only reject if $p_{(1)} \leq C_{\epsilon}/n$ for some constant $C_{\epsilon} > 0$. Hence, without conservative nulls, the conditional approach behaves roughly on the same order as the classical approach (Sidak).   

Letting $U_1, \dots, U_n$ be independent $\text{Unif}([0, 1])$ random variables, two facts are clear. First that $U_{(2)} \sim \text{Beta}(2, n-1)$ has mean $\frac{2}{n + 1} < \frac{2}{n}$ and standard deviation $\sqrt{\frac{2(n-1)}{(n+1)^2(n+2)}} \leq \frac{2}{n}$, and second that $p_{(2)} \preceq U_{(2)}$. 

Fix any $\epsilon > 0$. We have by Chebyshev's inequality that  
\begin{align*}
    P\left(p_{(2)} \leq \frac{2}{n}(1 + \epsilon^{-\frac{1}{2}})\right) &\geq P(p_{(2)} \leq E[U_{(2)}] + \sqrt{\Var(U_{(2)})}/\sqrt{\epsilon})\\
    &\geq P(U_{(2)} \leq E[U_{(2)}] + \sqrt{\Var(U_{(2)})}/\sqrt{\epsilon})\\
    &\geq P(|U_{(2)} - E[U_{(2)}]| < \sqrt{\Var(U_{(2)})}/\sqrt{\epsilon} ) \\
    &\geq 1 - \epsilon
\end{align*}
Define $C_{\epsilon} = 2(1 + \epsilon^{-\frac{1}{2}})\alpha$ and the event $A_{\epsilon} = \{ p_{(2)} \leq \frac{2}{n}(1 + \epsilon^{-\frac{1}{2}})\}$. The event $A_{\epsilon}$ has probability at least $1-\epsilon$. Furthermore, on this event, the conditional procedure never rejects when $p_{(1)} > C_{\epsilon}/n$, completing our claim. 

In other words, the conditional procedure can only reject when $p_{(1)} \leq C_{\epsilon}/n$ (aside for on a small probability event), and hence suffers the same curse of dimensionality as the classical method:
\begin{equation*}
    P(p_{(1)} \leq \alpha p_{(2)} \text{ and } p_{(1)} > C_{\epsilon}/n) \leq P(A_{\epsilon}^c) \leq \epsilon \text{ for some constant } C_{\epsilon} \text{ independent of } n.  
\end{equation*}

\subsection{Conditional inference on winners}
\label{sec:cond_appdx}

We first provide the standard derivation for the conditional LCB and CI for the winning mean in the setting of independent Gaussian data. Then we show that it matches our p-value viewpoint. We then argue that the selective p-value we use for this winner problem is monotone in the null parameter, which allows us to prove the some facts about the more general conditional LCB and CI that applies for all MLR families. Finally, we use our p-value viewpoint to argue that, in the independent Gaussian case, the standardized distance between the winner $X_W$ and conditional LCB is purely a function of the standardized distance between the winner $X_W$ and runner-up $X_R$. 

\subsubsection{Standard derivation}

Suppose we observe independent Gaussian data $X \sim N(\mu, I_n)$ and want to make a LCB for the mean $\mu_W$ of the winner $W = \argmax_{i \in [n]} X_i$. 

To construct the conditional LCB, we follow the framework of \cite{Fithian2017}. Letting $R$ be the index of the runner-up (second largest observation), we note that the deviation of $X_{W}$ from $\mu_{W}$ has a truncated normal distribution once we condition on $W$ and the nuisance statistics $X_{-W}$:
\begin{equation}
    \label{eq:cond_dist}
    X_W - \mu_W \mid W, X_{-W} \sim TN(0, 1, X_{R} - \mu_{W}, \infty).
\end{equation}
Let
\begin{equation}
    \label{eq:cond_quantile}
    q_{1-\alpha}(x_r, \mu_w) = \text{Quantile}_{\mu}(1-\alpha, X_W -  \mu_{W} \mid W=w, X_{-w} = x_{-w})
\end{equation}
denote the $1-\alpha$ quantile of this conditional distribution \eqref{eq:cond_dist}, which is a function of largest entry $x_r$ of $x_{-w}$ and the mean $\mu_w$ at the winning index. It is straightforward to show that
\begin{equation}
\label{eq:cond_lcb}
     LCB_{cond}(X) = \{\mu_0 : \mu_0 > X_{W} - q_{1-\alpha}(X_R, \mu_0)  \} 
\end{equation}
is a $1-\alpha$ confidence region for $\mu_{W}$ that has exact coverage conditional on $W$ and  $X_{-W}$:
\begin{equation*}
    P_{\mu}( \mu_{W} \in LCB_{cond}(X) | W, X_{-W})  = P_{\mu}( X_W - \mu_W < q_{1-\alpha}(X_R,\mu_{W}) |W, X_{-W}) = 1-\alpha.
\end{equation*}
Later, via our p-value viewpoint, we will argue that this confidence region is indeed an LCB. 

Fixing some $0 < \alpha_1, \alpha_2 < \alpha$ such that $\alpha_1 + \alpha_2 = \alpha$, we also see that 
\begin{equation}
    \label{eq:cond_ci}
        CI_{cond}(X) = \{\mu_0 : X_W - q_{\alpha_2}(X_R, \mu_0) > \mu_0 > X_W - q_{1-\alpha_1}(X_R, \mu_0)  \} 
\end{equation}
is a $1-\alpha$ confidence region that has exact coverage conditional on $W$ and $X_{-W}$. Again, later via our p-value viewpoint we will argue that this region is a CI. 

\subsubsection{The p-value viewpoint }

For the same setting as above, we want to use the p-values $p^{\mu_0}_i = 1 - \Phi(X_i - \mu_0)$ to characterize when the conditional LCB does not include $\mu_0 \in \R$. This happens exactly when $\mu_0$ is not included in the set \eqref{eq:cond_lcb}. Examining \eqref{eq:cond_dist}, \eqref{eq:cond_quantile}, and \eqref{eq:cond_lcb}, this happens when $X_W - \mu_0$ is at least as large as the $1-\alpha$ quantile $Q$ of a standard normal truncated to be larger than $X_R - \mu_0$. This quantile satisfies 
    \begin{equation*}
        \alpha = \frac{1 - \Phi(Q) }{1 - \Phi(X_R - \mu_0) }.
    \end{equation*}
Solving for $Q$ gives $Q = \Phi^{-1}(1 - \alpha(1 - \Phi(X_R - \mu_0)))$, meaning we reject exactly when 
\begin{align*}
    X_{W} - \mu_0 \geq \Phi^{-1}(1 - \alpha(1 - \Phi(X_{R} - \mu_0))) &\iff 1 - \Phi(X_{W} - \mu_0) \leq \alpha(1 - \Phi(X_{R} - \mu_0))\\
    &\iff p^{\mu_0}_{(1)} \leq \alpha p^{\mu_0}_{(2)}.
\end{align*}

Now we do the same for the conditional CI \eqref{eq:cond_ci}. We want to characterize when $\mu_0$ is not included in the set \eqref{eq:cond_ci}. Examining \eqref{eq:cond_dist}, \eqref{eq:cond_quantile}, and \eqref{eq:cond_ci}, this happens either when $X_W - \mu_0$ is at least as large as the $1-\alpha_1$ quantile or at most as small as the $\alpha_2$ quantile of the same truncated normal distribution. That is, either 
\begin{align*}
    X_{W} - \mu_0 \geq \Phi^{-1}(1 - \alpha_1(1 - \Phi(X_{R} - \mu_0))) & \iff p^{\mu_0}_{(1)} \leq \alpha_1 p^{\mu_0}_{(2)},
\end{align*}
or 
\begin{align*}
    X_{W} - \mu_0 \leq \Phi^{-1}(1 - (1-\alpha_1)(1 - \Phi(X_{R} - \mu_0))) & \iff p^{\mu_0}_{(1)} \geq (1-\alpha_2) p^{\mu_0}_{(2)},
\end{align*}
This will match the conditional CI we give later, which we will write in terms of p-values.

\subsubsection{Monotonicity of the selective p-value}

Recall the setting where $X_i \sim P_{\theta_i}$ are independent samples from some parametric family $P_{\theta}$ with MLR in $T(x)$, and let $p^{\theta_0}_i$ be the UMP p-values for testing $H_0 : \theta \leq \theta_0$. 

We want to show that the selective p-value $p^{\theta_0}_j/\min_{i \neq j} p^{\theta_0}_j$ is monotone non-decreasing in $\theta_0$. This will imply that $p^{\theta_0}_{(1)}/p^{\theta_0}_{(2)}$ is monotone non-decreasing in $\theta_0$. So long as we can write our selection function in terms of the data with no dependence on $\theta_0$, \Cref{sec:one_sided_monotone_appdx} guarantees that this will be the case. Recall that each p-value $p^{\theta_0}_i$ is a function of $T(X_i)$ and auxiliary uniform random variables $U_{i, aux}$ that are independent of the data and each other. Imagining using our framework with $p=p^{\theta_0}_j$ and $Z=(T(X_{-j}), U_{-j, aux})$, we can write our selection function in terms of the data $T(X_i)$ and $U_{i, aux}$ as 

\begin{equation*}
    \tilde{s}(t_j, u_j, t_{-j}, u_{-j}) =
    \begin{cases} 
    1, & \text{if } t_j > \max_{i \neq j } t_i, \\
    1, & \text{if } t_j = \max_{i \neq j } t_i \text{ and } u_j \leq \max_{i \neq j : t_i = t_j} u_i, \\
    0, & \text{otherwise}.
\end{cases}
\end{equation*}
where $t_{-j}$ and $u_{-j}$ jointly represent $z$. 

Because this selection function does not depend on $\theta_0$, \Cref{sec:one_sided_monotone_appdx} guarantees that the selective p-value resulting from it will be monotone. Note that, to apply \Cref{sec:one_sided_monotone_appdx}, we also needed to ensure that $Z$ did not depend on $\theta_0$, which is was not the case in \Cref{exm:winner}'s original treatment, but is the case in our current treatment. 

\subsubsection{The conditional LCB and CI for MLR families}
Again suppose that $X_i \sim P_{\theta_i}$ are independent samples from some MLR family $P_{\theta}$, and let $p^{\theta_0}_i$ be the UMP p-values for testing $H_0 : \theta \leq \theta_0$. Let $W = \argmin_{i \in [n]} p^{\theta_0}_i$ be the index of the smallest p-value, and define the parameter vector $\Theta = (\theta_1, \dots, \theta_n)$. 

First we claim that 
\begin{equation*}
    \{\theta_0 : p^{\theta_0}_{(1)}/p^{\theta_0}_{(2)} > \alpha\}
\end{equation*}
is a LCB for the winning parameter $\theta_W$ that has exact $1-\alpha$ coverage conditional on $W$. The region being an LCB follows from what we showed earlier: $p^{\theta_0}_{(1)}/p^{\theta_0}_{(2)}$ is monotone non-decreasing in $\theta_0$. We get exact $1-\alpha$ coverage because $p^{\theta_j}_j$ has an exact uniform distribution given $p^{\theta_j}_{-j}$ under $P_{\Theta}$ (see \Cref{lem:uniform}). Thus conditional on $W = j$ and $p^{\theta_j}_{-j}$ (i.e., conditional on selection and $Z$), \Cref{thm:adjustment} tells us that the selective p-value $p^{\theta_j}_{j}/ \min_{i \neq j} p^{\theta_j}_i$ has an exact uniform distribution conditional on selection:
    
\begin{align*}
        P_{\Theta}\left( \theta_W \in \left\{ \theta_0 : \frac{p^{\theta_0}_{(1)}}{p^{\theta_0}_{(2)}} > \alpha \right\} | W=j \right) &= P_{\Theta}\left( \theta_j \in \left\{ \theta_0 : \frac{p^{\theta_0}_{j}}{ \min_{i \neq j} p^{\theta_0}_i} > \alpha \right\} | W=j \right)\\
                                     &= P_{\Theta}\left(  \frac{p^{\theta_j}_{j}}{ \min_{i \neq j} p^{\theta_j}_i} > \alpha  | W=j \right)\\
                                    &= 1-\alpha,
\end{align*} 

Likewise, we see that 
\begin{equation*}
    \{\theta_0 : \alpha_1 < p^{\theta_0}_{(1)}/p^{\theta_0}_{(2)} < 1-\alpha_2 \}
\end{equation*}
is a CI for $\theta_W$ that has exact $1-\alpha$ coverage conditional on $W$. The fact that it is a CI again follows from the monotonicity of $p^{\theta_0}_{(1)}/p^{\theta_0}_{(2)}$, and exact coverage follows from an identical argument to the one above.  

\subsubsection{Distance between winner and conditional LCB }
\label{sec:gap_appdx}

Consider observing Gaussian data $X \sim N(\mu, \sigma^2 I_n)$ and let $W$ and $R$ be the indices of the winner and runner up respectively. We will use our p-value viewpoint to show that that the standardized distance $D = (X_W - \hat{\mu})/\sigma$ between the winner $X_W$ and the conditional LCB $\hat{\mu}$ for $\mu_W$ depends only on the standardized gap $(X_W - X_R)/\sigma$ between the winner and runner-up. Per our earlier discussions, the conditional LCB $\hat{\mu}$ satisfies 
\begin{equation*}
    \frac{p^{\hat{\mu}}(X_W)}{p^{\hat{\mu}}(X_R)} = \alpha \iff \frac{1 - \Phi((X_W - \hat{\mu})/\sigma )}{1 - \Phi((X_R - \hat{\mu})/\sigma)} = \alpha \iff \frac{1 - \Phi(D)}{1 - \Phi(D - (X_W - X_R)/\sigma)} =\alpha.
\end{equation*}
Clearly then, $D$ is a function of $(X_W - X_R)/\sigma$ and does not at all depend on the problem dimension $n$. 

\subsection{Conditional inference on winners for exponentials}
\label{sec:exponential_winner_appdx}

Recall the exponential distribution $X \sim \text{Exp}(\lambda_i)$ which has PDF
\begin{equation*}
    f_{\lambda}(x) = 
\begin{cases} 
\lambda e^{-\lambda x} & x > 0, \\ 
0 & x \leq 0 
\end{cases}.
\end{equation*}
Defining $T(x) = 1/x$, we see for $x > 0$ and $\lambda_2 \geq \lambda_1$, the ratio  
\begin{equation*}
   f_{\lambda_2}(x)/f_{\lambda_1}(x) = \frac{\lambda_2}{\lambda_1} \exp\left( - \frac{\lambda_2 - \lambda_1}{T(x)}\right)
\end{equation*}
is monotone non-decreasing in $T(x)$. Thus this family has an MLR in $T(x)$, and the UMP test for $H_0 : \lambda \leq \lambda_0$ thus rejects when $T(X)$ is large, or correspondingly when $X$ is small. In particular, noting that the CDF of $X$ is given by 
\begin{equation*}
F_X(x) = 
\begin{cases} 
1 - e^{-\lambda x} & x > 0, \\ 
0 & x \leq 0 
\end{cases}
\end{equation*}
it rejects according to the p-value $p^{\lambda^0} = 1 - e^{-\lambda_0 X}$ (see \Cref{sec:one_sided_mlr_appdx} for details about UMP testing in MLR families).

Now consider observing some independent data $X_i \sim \text{Exp}(\lambda_i)$. We know from \Cref{sec:cond_appdx} that the selective p-value $p^{\lambda_0}_{(1)}/p^{\lambda_0}_{(2)}$ is monotone non-decreasing in $\lambda_0$. Using L'Hopital's rule, we can compute that 
\begin{align*}
    \lim_{\lambda_0 \downarrow 0} \frac{p^{\lambda_0}_{(1)}}{p^{\lambda_0}_{(2)}} &=  \lim_{\lambda_0 \downarrow 0}  \frac{1 - e^{-\lambda_0 X_{(1)}} }{1 - e^{-\lambda_0 X_{(2)}}}\\
    &= \lim_{\lambda_0 \downarrow 0} \frac{X_{(1)}e^{-\lambda_0 X_{(1)}}}{X_{(2)}e^{-\lambda_0 X_{(2)}}}\\
    &= \frac{X_{(1)}}{X_{(2)}},
\end{align*}
which is sufficient to imply our claim in the main text.  

\subsection{Hybrid inference on winners}
\label{sec:hybrid_appdx}

We first provide the standard derivation for the hybrid LCB and CI for the winning mean in the setting of independent Gaussian data. Then we show that it matches our p-value viewpoint. We then provide more general hybrid confidence regions that apply for all MLR families, and show that they are indeed an LCB and CI. Finally, we use our p-value viewpoint to argue that, in the independent Gaussian case, the standardized distance between the winner $X_W$ and conditional LCB is purely a function of the standardized distance between the winner $X_W$ and runner-up $X_R$ as well as the problem dimension $n$. 

\subsubsection{Standard derivation}

We want to make a hybrid LCB for the mean $\mu_{W}$ of the winner $W = \argmax_{i \in [n]} X_i$ in the case of independent Gaussian data $X \sim N(\mu, I_n)$ with unknown mean $\mu \in \R^n$. 

The core idea behind hybrid inference is giving a confidence region $C_{hyb}(X)$ that has a very high probability of containing $\mu_W$ on a ``good'' event $G_{\mu}$. Oddly, this good event depends on the unknown parameter. For some $\beta < \alpha$, we need $G_{\mu}$ to happen with probability at least $1-\beta$. Then, if we ensure that $C_{hyb}(X)$ has at least $(1-\alpha)/(1-\beta)$ coverage on the $G_{\mu}$, it will achieve $1-\alpha$ coverage overall:
\begin{align*}
       P_{\mu}(\mu_{W} \in C_{hyb}(X)) &= P_{\mu}(G_{\mu})P_{\mu}(\mu_{W} \in C_{hyb}(X) | G_{\mu}) + P_{\mu}(G_{\mu}^c)P_{\mu}(\mu_{W} \in C_{hyb}(X) | G_{\mu}^c)\\
       &\geq (1-\beta)\left(\frac{1-\alpha}{1-\beta} \right)\\
       &=1-\alpha.
\end{align*}

Considering some $\beta < \alpha$ and defining $\beta_n = 1 - (1-\beta)^{1/n}$, our good event is that the confidence lower bounds $X_i -  z_{1-\beta_n}$ for the means $\mu_i$ all simultaneously hold:
\begin{equation*}
    G_{\mu} = \{X_i < \mu_i + z_{1-\beta_n} \text{ for all } i \in [n] \}.
\end{equation*}
It is not hard to show that this good event happens with probability exactly $1-\beta$.

We note that our treatment differs slightly from \cite{Andrews2023}. \cite{Andrews2023}, who focus on making confidence intervals. Their good event is that there are CIs (not LCBs) that simultaneously cover all the $\mu_i$. The purpose of restricting to this good event is that on this good event, we know all the $X_i$ are not too far from their $\mu_i$, and therefore our inferences on this event should not result in exploding intervals (even if they are of conditional flavor). Because the explosion only happens for the lower endpoint, however, it is most important to ensure that the $X_i$ are not too far above their means. Hence we put all our $\beta$ error budget on the good event into making simultaneous LCBs, rather than two-sided CIs like \cite{Andrews2023}.  

Now, we can make a confidence region that contains the mean with probability at least $(1-\alpha)/(1-\beta)$ on this good event. If we condition on $G_{\mu}$ along with $W$ and $X_{-W}$, the deviation of $X_W$ from $\mu_W$ has a truncated normal distribution like \eqref{eq:cond_dist} that is further truncated from above:
\begin{equation}
    \label{eq:hyb_dist}
    X_W - \mu_W \mid W, X_{-W}, G_{\mu} \sim TN(0, 1, X_{R} - \mu_{W}, z_{1-\beta_n} ).
\end{equation}
On the good event $G_{\mu}$, we always have that $X_R - \mu_{W} < X_W - \mu_{W} < z_{1-\beta_n}$, so the lower truncation is indeed below the upper one. Let
\begin{equation}
\label{eq:hyb_quantile}
    q^{h}_{\frac{1-\alpha}{1-\beta}}(x_r, \mu_w) = \text{Quantile}_{\mu}\left(\frac{1-\alpha}{1-\beta}, X_W -  \mu_W \mid W=w, X_{-w} = x_{-w}, G_{\mu}\right)
\end{equation}
denote the $(1-\alpha)/(1-\beta)$ quantile of the conditional distribution \eqref{eq:hyb_dist}, which is a function of the largest entry $x_r$ of $x_{-w}$ and the winning mean $\mu_{w}$. Per the prior discussion, the function \eqref{eq:hyb_quantile} only makes sense if $x_r - \mu_0$ at most $z_{1-\beta_n}$, and we will take the quantile \eqref{eq:hyb_quantile} to be $-\infty$ if it is not. It is then straightforward to show that
\begin{equation}
\label{eq:hyb_lcb_}
     LCB_{hyb}(X) = \{\mu_0 : \mu_0 > X_{W} - q^{h}_{\frac{1-\alpha}{1-\beta}}(X_R, \mu_0)  \} 
\end{equation}
contains $\mu_{W}$ with high probability conditional on $W$, $X_{-W}$, and the event $G_{\mu}$:
\begin{equation*}
    P_{\mu}( \mu_{W} \in LCB_{hyb}(X) | W, X_{-W}, G_{\mu})  = P_{\mu}( X_W - \mu_W < q^{h}_{\frac{1-\alpha}{1-\beta}}(X_R, \mu_W) |W, X_{-W}, G_{\mu}) = \frac{1-\alpha}{1-\beta}.
\end{equation*}
Based on our earlier discussions, this is sufficient to imply that $LCB_{hyb}(X)$ from \eqref{eq:hyb_lcb_} will contain $\mu_{W}$ with probability at least $1-\alpha$. We argue later (via our p-value viewpoint) that \eqref{eq:hyb_lcb_} is an LCB.

Similarly, fixing some $\gamma_1, \gamma_2 > 0$ such that $\gamma_1 + \gamma_2 = 1 = (1-\alpha)/(1-\beta) = \frac{\alpha -\beta}{1-\beta}$, we can argue that 
\begin{equation}
    \label{eq:hyb_ci_}
         CI_{hyb}(X) = \{\mu_0 : X_W - q^{h}_{\gamma_2}(X_R, \mu_0)  > \mu_0 > X_{W} - q^{h}_{1 - \gamma_1}(X_R, \mu_0)  \} 
\end{equation}
will contain $\mu_W$ with probability at least $1-\alpha$. Again, we will show via our p-value viewpoint that \eqref{eq:hyb_ci_} is a CI. 

\subsubsection{The p-value viewpoint} 

For the same setting as above, we want to use the  p-values $p^{\mu_0}_i = 1 - \Phi(X_i - \mu_0)$ to characterize when the hybrid LCB \eqref{eq:hyb_lcb_} does not contain $\mu_0 \in \R$. Examining \eqref{eq:hyb_dist}, \eqref{eq:hyb_quantile}, and \eqref{eq:hyb_lcb_}, we can consider two cases to figure out when this happens. \newline 

\noindent \textbf{Case One - $X_R - \mu_0 \geq z_{1 - \beta_n}$:} If $X_R - \mu_0 \geq z_{1 - \beta_n}$, then $q^h_{\frac{1-\alpha}{1-\beta}}(W, X_{-W}, \mu_0) = -\infty$, so $\mu_0$ cannot be in \eqref{eq:hyb_lcb_}. This case happens precisely when 
\begin{equation*}
    X_R-\mu_0 \geq z_{1 - \beta_n} \iff 1 - \Phi(X_R - \mu_0) \leq 1 - \Phi(z_{1- \beta_n}) \iff p^{\mu_0}_{(2)} \leq \beta_n.
\end{equation*}

\noindent \textbf{Case Two - $X_R - \mu_0 < z_{1 - \beta_n}$:} If $X_R - \mu_0 < z_{1 - \beta_n}$, then $\mu_0$ is not in \eqref{eq:hyb_lcb_} exactly when $X_W - \mu_0$ is at least as large as the $\frac{1-\alpha}{1-\beta}$ quantile $Q$ of a standard normal truncated to be larger than $X_{R} - \mu_0$ but smaller than $z_{1- \beta_n}$. This quantile satisfies 
\begin{equation*}
    \frac{\alpha - \beta}{1- \beta} = \frac{\Phi(z_{1 - \beta_n})  - \Phi(Q) }{\Phi(z_{1-\beta_n})  - \Phi(X_R - \mu_0) } = \frac{1 - \beta_n  - \Phi(Q) }{1 - \beta_n - \Phi(X_R - \mu_0) } 
\end{equation*}
Solving for $Q$ gives 
\begin{equation*}
    Q = \Phi^{-1} \left(1 - \beta_n - \frac{\alpha - \beta}{1 - \beta}\left(1 - \beta_n - \Phi(X_{R} - \mu_0) \right) \right) = \Phi^{-1} \left( \left(1 - \frac{\alpha - \beta}{1 - \beta}\right) (1 - \beta_n) + \frac{\alpha-\beta}{1-\beta}\Phi(X_{R} - \mu_0) \right),
\end{equation*}
so $\mu_0$ is not in \eqref{eq:hyb_lcb_} exactly when
\begin{align*}
        X_{W} - \mu_0 \geq \Phi^{-1} \left( \left(1 - \frac{\alpha - \beta}{1 - \beta}\right) (1- \beta_n) + \frac{\alpha-\beta}{1-\beta}\Phi(X_{R} - \mu_0) \right) \\
        \iff 1 -\Phi(X_{W} - \mu_0) \leq \left(1 - \frac{\alpha - \beta}{1 - \beta}\right)\beta_n + \frac{\alpha-\beta}{1-\beta}(1-\Phi(X_{R} - \mu_0))  \\
        \iff p^{\mu_0}_{(1)} \leq \frac{\alpha-\beta}{1-\beta}p^{\mu_0}_{(2)}  + \left(1 - \frac{\alpha - \beta}{1 - \beta}\right)\beta_n \\
\end{align*}

It turns out we can combine these two cases. Because $p^{\mu_0}_{(1)} \leq p^{\mu_0}_{(2)}$, the fact that $p^{\mu_0}_{(2)} \leq  \beta_n$ in Case One implies that $p^{\mu_0}_{(1)} \leq \beta_n$ also. Therefore in Case One, $p^{\mu_0}_{(1)}$ must be strictly smaller than a mixture of $p^{\mu_0}_{(2)}$ and $\beta_n$:
\begin{equation}
    \label{eq:hybrid_cuttoff_appdx}
    p^{\mu_0}_{(1)} \leq  \frac{\alpha-\beta}{1-\beta}p^{\mu_0}_{(2)}  + \left(1 - \frac{\alpha - \beta}{1 - \beta}\right)\beta_n. 
\end{equation}
Therefore, if say $\mu_0$ is not included whenever \eqref{eq:hybrid_cuttoff_appdx}, we will always say it is not included in Case One, and we will say it is not included at the appropriate times in Case Two. 

Now, we do the identical analysis for the hybrid CI \eqref{eq:hyb_ci_}. Case One is identical. We redo Case Two below. \newline

\noindent \textbf{Case Two - $X_R - \mu_0 < z_{1 - \beta_n}$:} If $X_R - \mu_0 < z_{1 - \beta_n}$, then $\mu_0$ is not in \eqref{eq:hyb_ci_} exactly when $X_W - \mu_0$ is at least as large as the $1 - \gamma_1$ quantile of the same truncated normal, and at most as small as that truncated normal's $\gamma_2$ quantile. That is, when either 
\begin{align*}
    X_{W} - \mu_0 \geq \Phi^{-1} \left( \gamma_1 (1- \beta_n) + (1-\gamma_1)\Phi(X_{R} - \mu_0) \right) \\
        \iff p^{\mu_0}_{(1)} \leq \gamma_1 p^{\mu_0}_{(2)}  + (1 -\gamma_1)\beta_n \\
\end{align*}
or 
\begin{align*}
    X_{W} - \mu_0 \leq \Phi^{-1} \left( (1 - \gamma_2) (1- \beta_n) + \gamma_2\Phi(X_{R} - \mu_0) \right) \\ 
    \iff p_{(1)}^{\mu_0} \geq (1- \gamma_2)p_{(2)}^{\mu_0} + \gamma_2 \beta_n
\end{align*}

By the same reasoning as above, we can combine the two cases, and say that we do not include $\mu_0$ whenever 

\begin{equation*}
    p^{\mu_0}_{(1)} \leq \gamma_1 p^{\mu_0}_{(2)}  + (1 -\gamma_1)\beta_n  \text{ or } p_{(1)}^{\mu_0} \geq (1- \gamma_2)p_{(2)}^{\mu_0} + \gamma_2 \beta_n.
\end{equation*}

\subsubsection{The hybrid LCB and CI for MLR families}

Recall the setting where $X_i \sim P_{\theta_i}$ are independent samples from some parametric family $P_{\theta}$ with MLR in $T(x)$, and let $p^{\theta_0}_i$ be the UMP p-values for testing $H_0 : \theta \leq \theta_0$. Let $\Theta = (\theta_1, \dots, \theta_n)$ be the parameter vector and let $W \argmin_{i \in [n]} p^{\theta_0}_i$ be the index of the smallest p-value. 

We will present a hybrid LCB and CI for $\theta_W$ and directly prove their validity. Importantly, note that the event $G_{\Theta} = \{p^{\theta_i}_i \geq \beta_n \text{ for all } i \in [n]\}$ happens with probability $1-\beta$ under $P_{\Theta}$. Thus it suffices to show that our confidence regions have $(1-\alpha)/(1-\beta)$ coverage on this event, per our earlier discussion. Also importantly, thanks to $p^{\theta_j}_j$ having a uniform distribution under $P_{\theta_j}$ (see \Cref{lem:uniform}), the conditional distribution of $p^{\theta_j}_j$ given $W=j$, $p^{\theta_j}_{-j}$, and $G_{\Theta}$ is a uniform distribution on $[\beta_n, \min_{i \neq j} p^{\theta_j}_i]$. Using this, we can see that 
\begin{equation*}
    \left\{\theta_0 : p^{\theta_0}_{(1)} > \frac{\alpha-\beta}{1-\beta}p^{\theta_0}_{(2)}  + \left(1 - \frac{\alpha - \beta}{1 - \beta}\right)\beta_n  \right\}
\end{equation*}
is a $1-\alpha$ confidence region for $\theta_W$:
\begin{align*}
    &P_{\Theta}\left( \theta_W \in \left\{\theta_0 : p^{\theta_0}_{(1)} > \frac{\alpha-\beta}{1-\beta}p^{\theta_0}_{(2)}  + \left(1 - \frac{\alpha - \beta}{1 - \beta}\right)\beta_n  \right\} | W= j, p^{\theta_j}_{-j}, B_{\Theta} \right) \\
    &= P_{\Theta}\left( \theta_j \in \left\{\theta_0 : \frac{p^{\theta_0}_j - \beta_n}{\min_{i \neq j}p^{\theta_0}_i  - \beta_n} > \frac{\alpha-\beta}{1-\beta}  \right\} | W= j, p^{\theta_j}_{-j}, B_{\Theta} \right) \\
    &= P_{\Theta}\left( \frac{p^{\theta_0}_j - \beta_n}{\min_{i \neq j}p^{\theta_0}_i  - \beta_n} > \frac{\alpha - \beta}{1-\beta} | W= j, p^{\theta_j}_{-j}, B_{\Theta} \right) \\
    &= \frac{1-\alpha}{1-\beta},
\end{align*} 
To see that the region is actually an LCB, we rewrite it as 
\begin{equation}
    \label{eq:hyb_lcb_monotone}
    \left\{\theta_0 : \frac{p^{\theta_0}_{(1)}}{p^{\theta_0}_{(2)}} > \frac{\alpha-\beta}{1-\beta}  + \left(1 - \frac{\alpha - \beta}{1 - \beta}\right)\beta_n \bigg/p^{\theta_0}_{(2)}  \right\}
\end{equation}
because we know that $p^{\theta_0}_{(2)}$ is monotone non-decreasing in $\theta_0$ and also from \Cref{sec:cond_appdx} that $p^{\theta_0}_{(1)}/p^{\theta_0}_{(2)}$ is monotone non-decreasing in $\theta_0$, once some $\theta_0$ belongs to \eqref{eq:hyb_lcb_monotone} all larger $\theta_0$ will as well. Hence, it is an LCB.  

Considering $\gamma_1, \gamma_2 > 0$ such that $\gamma_1 + \gamma_2 = (\alpha-\beta)/(1-\beta)$, we can do the exact same analysis for the hybrid CI
\begin{equation*}
    \left\{\theta_0 : \gamma_1 p^{\theta_0}_{(2)} + (1-\gamma_1)\beta_n < p^{\theta_0}_{(1)} < (1- \gamma_2)p_{(2)}^{\theta_0} + \gamma_2 \beta_n\right\}
\end{equation*}
to show that it both covers $\theta_W$ with probability at least $1-\alpha$ and that it is indeed a CI. 

Note that, when put in our selective dominance framework, the hybrid selection event actually depends on the null parameter $\theta_0$. Hence we cannot argue monotonicity of the selective p-value via \Cref{sec:one_sided_monotone_appdx} as we did in the conditional case, and instead had to use different arguments to justify that our confidence regions were actually an LCB and CI. 

\subsubsection{Distance between winner and hybrid LCB}
\label{sec:hybrid_gap_appdx}

Consider observing Gaussian data $X \sim N(\mu, \sigma^2 I_n)$ and let $W$ and $R$ be the indices of the winner and runner up respectively. We will use our p-value viewpoint to show that the standardized distance $D = (X_W - X_R)/\sigma$ between the winner $X_W$ and the hybrid LCB $\hat{\mu}$ for $\mu_W$ depends only on the standardized gap $(X_W - X_R)/\sigma$ between the winner and runner-up. Per our earlier discussion the hybrid LCB $\hat{\mu}$ satisfies 
\begin{align*}
    & p^{\hat{\mu}}(X_W) = \frac{\alpha-\beta}{1-\beta} p^{\hat{\mu}}(X_R) + \left(1 - \frac{\alpha-\beta}{1-\beta} \right)\beta_n\\
    & 1 - \Phi\left(\frac{X_W - \hat{\mu}}{\sigma}\right) = \frac{\alpha-\beta}{1-\beta} \left(1 - \Phi\left(\frac{X_R - \hat{\mu}}{\sigma}\right)\right) + \left(1 - \frac{\alpha-\beta}{1-\beta} \right)\beta_n\\ 
    &\iff 1 - \Phi(D) = \frac{\alpha-\beta}{1-\beta} \left(1 - \Phi\left(D - \frac{X_W - X_R}{\sigma}\right) \right) + \left(1 - \frac{\alpha-\beta}{1-\beta} \right)\beta_n.
\end{align*}
Clearly then, $D$ is a function of $(X_W - X_R)/\sigma$ and $n$.

\subsection{Comparing hybrid inference to the union bound}
\label{sec:hybrid_sim_appdx}

\begin{figure}
    \centering
    \hspace{-0.035\textwidth}
    \begin{minipage}{0.32\textwidth}
        \centering
        \includegraphics[width=\textwidth]{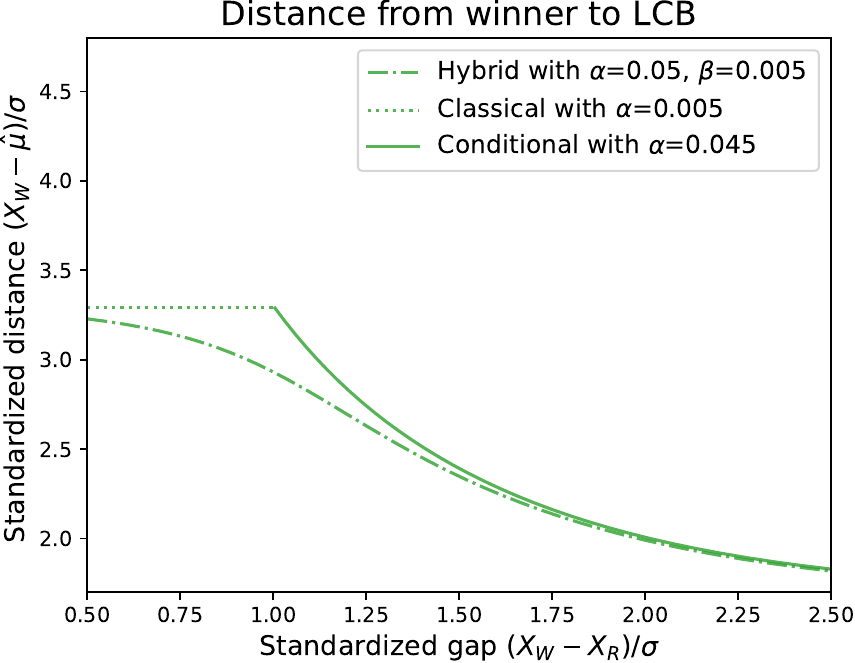}
        \caption*{(a) $n=10$}
    \end{minipage}
    \hfill
    \hspace{0.01\textwidth}
    \begin{minipage}{0.32\textwidth}
        \centering
        \includegraphics[width=\textwidth]{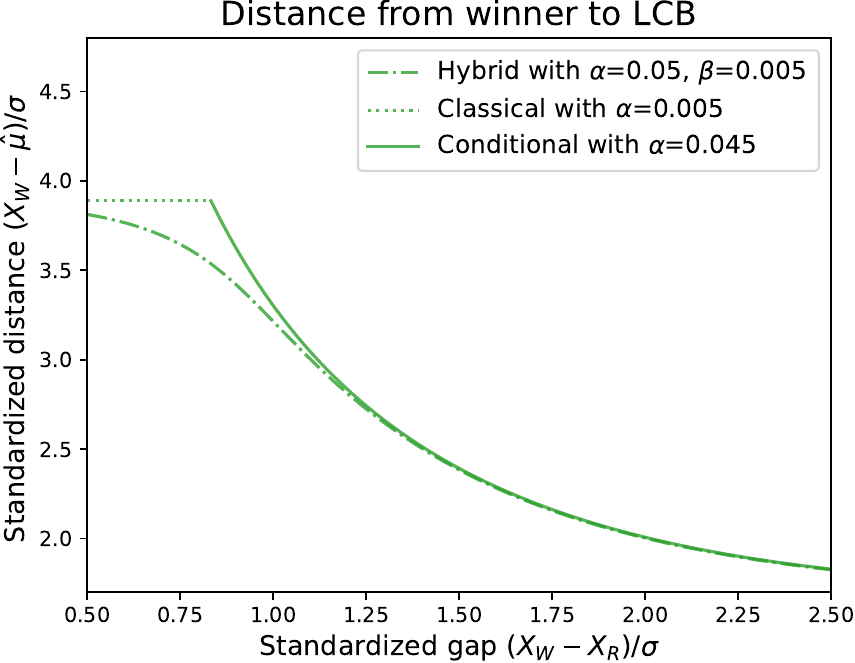}
        \caption*{(b) $n=100$}
    \end{minipage}
    \hfill
    \hspace{0.01\textwidth}
    \begin{minipage}{0.32\textwidth}
        \centering
        \includegraphics[width=\textwidth]{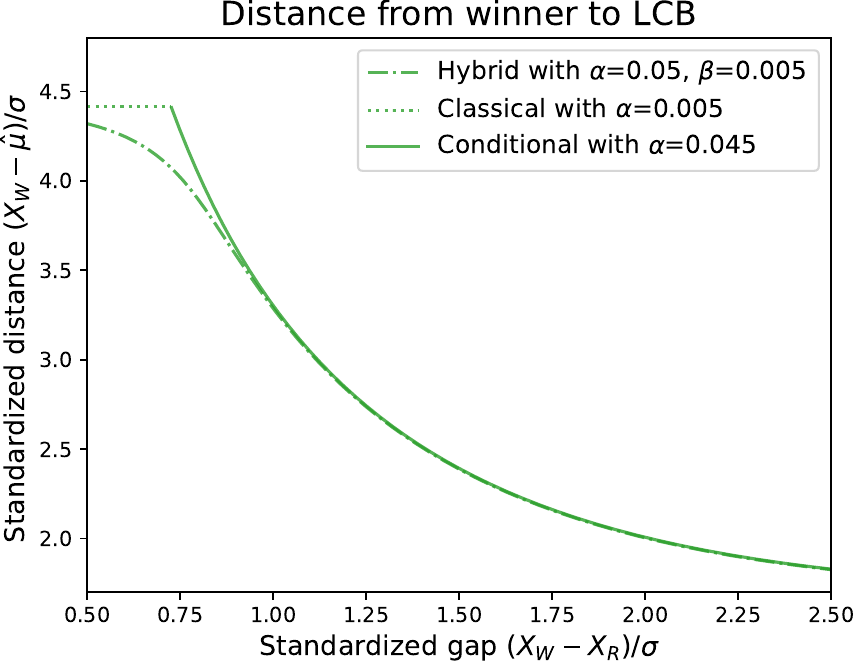}
        \caption*{(c) $n=1000$}
    \end{minipage}
    \caption{ For $n=10$ (left), $n=100$ (middle), and $n=1000$ (right) the distance between the hybrid LCB to the winner (dash-dot line) and union bound LCB to the winner (dotted and solid line) with $\alpha=0.05$ and $\beta=0.005$ plotted as a function of the gap between the winning and runner-up observation.}
    \label{fig:hybrid_union}
\end{figure}

As discussed earlier the hybrid cutoff \eqref{eq:hybrid_cutoff_thm} is strictly larger than the union bound cutoff \eqref{eq:union_bound_cutoff} when $p_{(1)} > \beta_n$. Thus, both procedures reject when $p_{(1)} \leq \beta_n$. When $p_{(1)} > \beta_n$, the hybrid procedure rejects but the union bound does not whenever  
\begin{equation*}
    p_{(1)} \in \bigg((\alpha - \beta)p_{(2)},  \frac{\alpha - \beta}{1-\beta}p_{(2)} + \left(1 -  \frac{\alpha - \beta}{1-\beta}\right)\beta_n \bigg]
\end{equation*}
When $p_{(2)} = 1$ and $n=1$, the length of the interval is $\beta$, which is the largest it can possible be. Thus, in the case where we can have additional rejections, the hybrid cutoff is never more than $\beta$ plus the union bound cutoff. \cite{Andrews2023} suggests taking $\beta=\alpha/10$, so when $\alpha = 0.05$, for example, $\beta=0.005$ is quite small. 

Still, this is not a precise statement about power gain. The computations required to compute the power gain analytically are messy, so instead we gauge the power gain via simulation. We sample data $X \sim N(\mu, I_n)$ for $n=10$ and attempt to reject the winning null $H_W: \mu_W \leq 0$ where $W = \argmin_{i \in [n]} 1 - \Phi(X_i)$ is the index of the smallest p-value $p_i = 1-\Phi(X_i)$.  

We choose $n=10$ because it is a reasonably small dimension size where one may apply hybrid inference (e.g., the main example from \cite{Andrews2023} has $n=13$). Let $R$ denote the index of the runner-up (second smallest p-value). For the dimensions  $n=10, 100, 1000$, \Cref{fig:hybrid_union} 
compares the distance from the winner $X_W$ the hybrid LCB and the union bound LCB. As illustrated in the plot, the benefit of hybrid inference dissipates as the dimension of the problem increases. This is because conditioning on the ``good event'' has less and less of an effect as $n$ grows. 

Our simulation results indicate that hybrid inference typically results in a fairly small power gain. We consider two simulated settings: \newline 

\noindent \textbf{Needle in a haystack: } First, we consider a needle in the haystack setting where $\mu_1 > 0$ and all the other $\mu_i$ for $ i\neq 1$ are set to $\mu_2$. We try $\mu_2 = -2, 0, 2$. The power comparison is ploted in \Cref{fig:hybrid_union_power}. Whenever we truly have a needle in the haystack problem, i.e., $\mu_1 > \mu_2$, hybrid inference results in essentially no power gain. The only setting where we see some gain (up to around $0.05$) is when $\mu_2 > \mu_1$. In this setting  we actually have a dense alternative (many small signals). We expect the top two p-values to be close to each other, so conditional methods should perform poorly. The union bound approach indeed performs essentially identically to the level $\beta$ classical test (not pictured). Hybrid, however, manages to eke out some additional power. Both methods pale in comparison to the level $\alpha$ classical test however, which would achieve power $>0.95$ throughout the whole plot (not pictured). For various values of $\sigma_1$ and $\sigma_2$, which we assume are known, we also tried re-running the experiments with  $X_1 \sim N(\mu_1, \sigma_1^2)$ and $X_i \sim N(\mu_2, \sigma_2^2)$ when $i > 1$. The results were not appreciably different.  \newline 

\noindent \textbf{Two possible signals: } Seeing as the hybrid and union bound approaches both reject based on the winning and running-up p-value, we ran a simulation for all pairs $\mu_1, \mu_2 \in \{-3, -2.9, \dots, 2.9, 3 \}$ with $\mu_1 > \mu_2$ and $\mu_1 > 0$. We forcibly set $\mu_i = -\infty$ for $i > 2$. For each setting we ran $N=10000$ to get an empirical estimate of power for each method. Across the $1492$ simulated settings, the median empirical power increase from hybrid was $\approx 0.003$, the $90$th percentile empirical power increase was $\approx 0.023$ and the maximum empirical power increase was $\approx 0.042$. As the results indicate, the power increases from hybrid were negligible for most settings. We also re-ran the same simulations but sampled $X_1 \sim N(\mu, \sigma_1^2)$ and $X_2 \sim N(\mu, \sigma_2^2)$ for various values of $\sigma_1$ and $\sigma_2$, which we assume are known. The results were not appreciably different, and, if anything, the difference in power was notably smaller for some values of $\sigma_1$ and $\sigma_2$.  

\begin{figure}
    \centering
    \hspace{-0.035\textwidth}
    \begin{minipage}{0.32\textwidth}
        \centering
        \includegraphics[width=\textwidth]{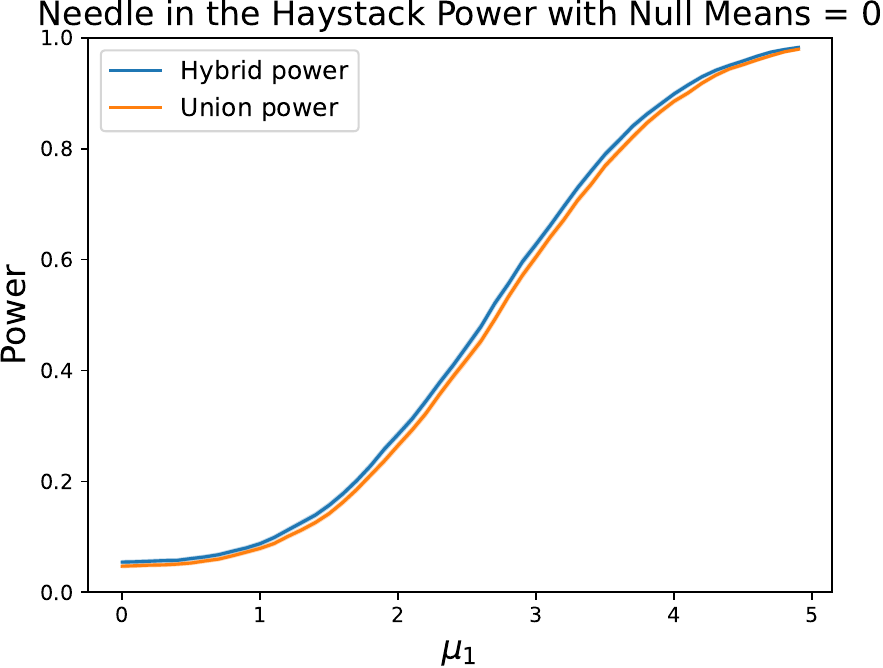}
        \caption*{(a) $\mu_2 = -2$}
    \end{minipage}
    \hfill
    \hspace{0.01\textwidth}
    \begin{minipage}{0.32\textwidth}
        \centering
        \includegraphics[width=\textwidth]{fig/hybrid_vs_union_null=0.pdf}
        \caption*{(b) $\mu_2 = 0$}
    \end{minipage}
    \hfill
    \hspace{0.01\textwidth}
    \begin{minipage}{0.32\textwidth}
        \centering
        \includegraphics[width=\textwidth]{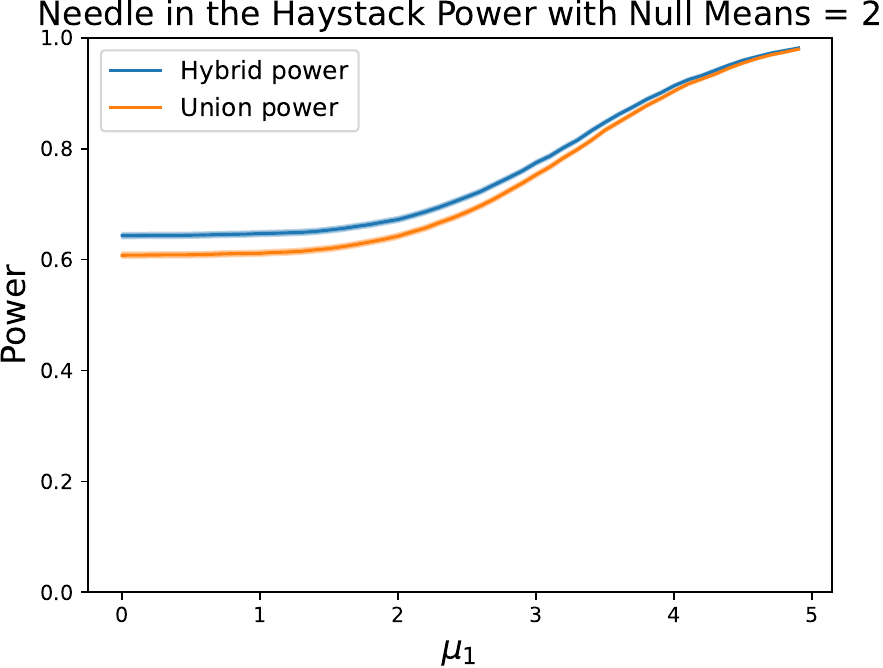}
        \caption*{(c) $\mu_2=2$}
    \end{minipage}
    \caption{ For $\mu_2=0$ (left), $\mu_2=-0.5$ (middle), and $\mu_2=-1$ (right) the empirical power over $N=10^4$ trials of the hybrid inference approach versus the union bound approach for the needle in the haystack alternative. One standard error bands are also plotted. For the most part, they are so small that they are hardly visible.  }
    \label{fig:hybrid_union_power}
\end{figure}

\subsection{Number of ties in rank verification}
\label{sec:ties_appdx}

Considering a random vector $X \in \R^n$ let 
\begin{equation}
    Y_i = \frac{X_i - X_j}{2}, \qquad  Y_j = \frac{X_i + X_j}{2}, \qquad  Y_{\ell} = X_{\ell} \text{ for } \ell \neq i, j,
\end{equation}
Suppose that $X_i \geq X_k$ for all $k \neq i$, and $X_i = X_{\ell}$ for some $\ell \neq i$ (i.e., there is at least one tie). We count the number of ties in for the winner in three different cases.

\begin{itemize}
    \item Suppose $Y_j > \max_{k \neq i, j} Y_k$. If any of the $X_k$ for $k \neq i, j$ were equal to $X_i$, then we would have 
    \begin{equation*}
       Y_k = X_k =  \frac{X_i + X_k }{2} \geq \frac{X_i + X_j}{2} = Y_j
    \end{equation*}
    which would be a contradiction. Thus, if there is a tie, the only possible tie is $X_j$. Since, 
    \begin{equation*}
        1 + |\{ \ell \neq i : Y_{\ell} = \max_{k \neq i} Y_k  \}| = 2
    \end{equation*}
    in this case, we are done. 
    \item Suppose that $Y_j < \max_{k \neq i, j} Y_k$. Then some $X_k$ for $k \neq i, j$ must be strictly than larger $X_j$. Otherwise we would have for every $k \neq i, j$ that  
    \begin{equation*}
        Y_j = \frac{X_i + X_j}{2} \geq  \frac{X_i + X_k}{2} \geq X_k = Y_k,
    \end{equation*}
     which is a contradiction. Thus, if there is a tie, the number of ties is one plus the number of $X_k$ for $k \neq i, j$ that are equal to each other. In this case, this matches
    \begin{equation*}
        1 + |\{ \ell \neq i : Y_{\ell} = \max_{k \neq i} Y_k  \}|,
    \end{equation*}
    so we are done.
    \item Now suppose that $Y_j = \max_{k \neq i, j} Y_k$. In this case we must have $X_j = X_i$. If not, then $X_j < X_i$ and  there must be some $\ell \neq i, j$ such that $X_i = X_{\ell}$, so 
    \begin{equation*}
        Y_j =  \frac{X_i + X_j}{2} < \frac{X_i + X_{\ell}}{2} = X_{\ell} = Y_{\ell},
    \end{equation*}
    which is a contradiction. Thus $Y_j = X_i$ in this case, and the number of ties is therefore clearly 
    \begin{equation*}
        1 + |\{ \ell \neq i : Y_{\ell} = \max_{k \neq i} Y_k  \}|.
    \end{equation*}
\end{itemize}

\subsection{Selective p-value for rank verification in exponential families}
\label{sec:rank_verficiation_adj_appdx}

Suppose $p$ is a p-value for the null $H_0$ that is selectively dominant given $Z$ and we select $p$ to use for inference according to the selection function 
    \begin{equation*}
        s(x, z) = 
        \begin{cases} 
        1 & \text{if } x < q^+(z), \\
        \frac{1}{N(z)} & \text{if } x \in [q^+(z), q(z)], \\
        0 & \text{otherwise},
        \end{cases},
    \end{equation*}
Then the adjusted p-value \eqref{eq:adjustment} is given by 
\begin{equation*}
    p_{sel} = \frac{\int_0^p s(x, Z) dx }{\int_0^1 s(x, Z)dx} = 
        \begin{cases} 
        \frac{p}{q^{+}(Z) + \frac{1}{N(Z)}(q(Z) - q^{+}(Z)) }  & \text{if } p  < q^+(Z), \\
        \frac{q^{+}(Z)+ \frac{1}{N(Z)}(p - q^{+}(Z))}{q^{+}(Z) + \frac{1}{N(Z)}(q(Z) - q^{+}(Z)) } & \text{if } p \in [q^+(Z), q(Z)], \\
        \end{cases},
\end{equation*}
which can be re-written as 
\begin{equation*}
   p_{sel} = \frac{p - (1-\frac{1}{N(Z)})(p - q^+(Z))_+ }{q^+(Z) + \frac{1}{N(Z)}(q^+(Z) - q(Z))}.
\end{equation*}
This is sufficient to imply the claim from the main text.

\subsection{Rank verification warm-up additional details}
\label{sec:rank_verification_warm_up_appdx}

\begin{example}[Rank verification in a simple case]
    \label{exm:rank_verification}
    Suppose that $p$ is a selectively dominant p-value for testing the null $H_0$, but we only choose to test $H_0$ when $p < 1/2$. Applying our framework with the p-value $p$ and selection function $s(x) = I(x < 1/2)$, \Cref{thm:adjustment} tells us that we control selective Type I error if we reject according to the selective p-value $p_{sel} = 2p$:
    \begin{equation}
        \label{eq:rank_verification_error_control}
        P_{H_{0}}\left(p_ \leq \frac{\alpha}{2} | S = 1\right) = P_{H_{0}}\left(2p \leq \alpha | S = 1\right) \leq \alpha.
    \end{equation} 
    
    Now, consider data $X_1 \sim N(\mu_1, 1/\sqrt{2})$ and $X_2 \sim N(\mu_2, 1/\sqrt{2})$. For the one-sided nulls $H_{0, ij}: \mu_i \leq \mu_j$ the UMP p-values $p_{ij} = 1 - \Phi(X_i - X_j)$ are selectively dominant by \Cref{exm:mlr}. Denoting the winner $W = \argmax_{j = 1, 2} X_j$ and runner-up $R = \argmin_{j = 1, 2} X_j$, it is now clear rejecting the data-dependent null $H_{0, WR}: \mu_W \leq \mu_R$ when $p_{WR} \leq \alpha/2$ maintains Type I error control both conditionally on $W$ and marginally. If $H_{0, ij}$ is not true, then trivially $P(\text{falsely reject } H_{0, WR} \mid W = i) = 0 \leq \alpha$. For the case that $H_{0, ij}$ is true, the event $W=i$ is identical to the event $p_{ij} < 1/2$, and hence is the same event as selecting $p_{ij}$ for inference in \eqref{eq:rank_verification_error_control}. Therefore, in this case,
    \begin{align*}
        P(\text{falsely reject } H_{0, WR} \mid W = i) &= P\left(p_{WR} \leq \frac{\alpha}{2}  \mid W = i\right)\\
        &= P\left(p_{ij} \leq \frac{\alpha}{2}  \mid W = i \right) \\
        &\leq \alpha.
    \end{align*}
    This establishes error control conditional on $W$. Marginal error control then follows from the law of total probability:
    \begin{align*}
        P(\text{falsely reject } H_{0, WR}) &= \sum_{i=1, 2} P(W=i) P(\text{falsely reject } H_{0,ij} \mid W = i) \\
                                          &\leq \alpha \sum_{i = 1, 2} P(W=i)\\
                                          &=\alpha. 
    \end{align*}
    If $\mu_1 = \mu_2$ then the inequalities become equalities and our error control is tight. 
    \end{example}

\subsection{Rank verification additional details}
\label{sec:rank_verification_appdx}

\begin{example}[Rank verification in exponential families]
    \label{exm:rank_verification_exp_fam}
    Suppose $p$ is a p-value for the null $H_0$ that is selectively dominant given $Z$. If we select $p$ to use for inference according to the selection function 
    \begin{equation*}
        s(x, z) = 
        \begin{cases} 
        1 & \text{if } x < q^+(z), \\
        \frac{1}{N(z)} & \text{if } x \in [q^+(z), q(z)], \\
        0 & \text{otherwise},
        \end{cases},
    \end{equation*}
    where $N(z)> 1$ and $0 \leq q^{+}(z) \leq q(z) \leq 1$ are known functions of $z$, then the selective p-value from \eqref{eq:adjustment} turns out to be (see \Cref{sec:rank_verficiation_adj_appdx} for computations)
    \begin{equation}
            \label{eq:tie_adj_p_val}
            p_{sel} = \frac{p - (1-\frac{1}{N(Z)})(p - q^+(Z))_+ }{q^+(Z) + \frac{1}{N(Z)}(q(Z) - q^+(Z))}.
     \end{equation}
    Therefore, \Cref{thm:adjustment} tells us that rejecting when \eqref{eq:tie_adj_p_val} is at most $\alpha$ is a selective Type I error controlling procedure:
    \begin{equation}
        \label{eq:tie_tool}
        P_{H_0}\left( \frac{p - (1-\frac{1}{N(Z)})(p - q^+(Z))_+ }{q^+(Z) + \frac{1}{N(Z)}(q(Z) - q^+(Z))}.  \leq \alpha \mid S = 1\right) \leq \alpha.  
    \end{equation} 
    
    Now, suppose we observe $X$ drawn from the exponential family \eqref{eq:exp_fam} and let $W$ be the index of the largest $X_i$ (with ties broken randomly). For $i \neq j$, \Cref{exm:exp_fam} tells us that the UMPU p-value $p = p_{ij}$ from \eqref{eq:umpu_rank_verification} for the null $H_{0, ij}: \theta_i \leq \theta_j $ is selectively dominant given the transformed nuisance statistics $Y_{-i}$ from \eqref{eq:reparam}. Taking $q_{ij}^+(Y_{-i})$, $q_{ij}(Y_{-i})$, $N_i(Y_{-i})$, and $f$ from  \eqref{eq:rank_verification_lower}, \eqref{eq:rank_verification_upper}, \eqref{eq:rank_verification_num_ties}, it is now easy to show that rejecting the data-dependent null $H_{0, Wj}$ for $j \neq W$ when 
    \begin{equation}
      \frac{p_{Wj} - (1-\frac{1}{N_W})(p_{Wj} - q_{Wj}^+)_+ }{q_{Wj}^+ + \frac{1}{N_W}(q_{Wj} - q^+_{Wj})} \leq \alpha
    \end{equation} 
    controls Type I error conditionally on $W$. 
    
    For $i \neq j$, if $H_{0, ij}$ is false, then trivially $P(\text{falsely reject } H_{0, Wj}| W= i)$. For the case that $H_{0, ij}$ is true, the event $W=i$ is the same event as selecting $p_{ij}$ for inference in \eqref{eq:tie_adj_p_val}, so 
    \begin{align*}
        P(\text{falsely reject } H_{0, Wj}| W= i) &= P\left(\frac{p_{Wj} - (1-\frac{1}{N_W})(p_{Wj} - q_{Wj}^+)_+ }{q_{Wj}^+ + \frac{1}{N_W}(q_{Wj} - q^+_{Wj})} \leq \alpha \;\middle|\; W= i\right) \\
        &= P\left(\frac{p_{ij} - (1-\frac{1}{N_i})(p_{ij} - q_{ij}^+)_+ }{q_{ij}^+ + \frac{1}{N_i}(q_{ij} - q^+_{ij})} \leq \alpha \;\middle|\; W = i\right), \\
        &\leq \alpha. 
    \end{align*}
    which is sufficient to imply conditional error control. 

    Now suppose we reject the data-dependent null $\cup_{j \neq W} H_{0, Wj}$ when we reject all of the individual nulls $H_{0, Wj}$ for $j \neq W$. It is straightforward to see that this will control Type I error both conditionally on $W$ and marginally. If $i$ is such that $\theta_i > \theta_j$ for all $j \neq i$, then trivially $ P(\text{falsely reject } \cup_{j \neq W} H_{0, Wj}| W= i) = 0 $. If this is not true, then there exists some $k \neq i$ such that $\theta_k \geq \theta_i$, so 
    \begin{align*}
        P(\text{falsely reject } \cup_{j \neq W} H_{0, Wj}| W= i) \leq P(\text{falsely reject } H_{0, Wk}| W= i) \leq \alpha.
    \end{align*} 
    Again, marginal error control follows from the usual law of total probability argument:
    \begin{align*}
        P(\text{falsely reject } \cup_{j \neq W} H_{0, Wj}) &= \sum_{i=1}^n P(W= i)P(\text{falsely reject } \cup_{j \neq W} H_{0, Wj} |W = i) \\
        &\leq \alpha \sum_{i=1}^n P(W=i)\\
        &=\alpha.
    \end{align*}
\end{example}

\section{Selective dominance and one-sided testing}
\label{sec:one_sided_appdx}

In this appendix, we establish the selective dominance property for UMP p-values in MLR families and UMPU p-values in exponential families. We also show that in these cases, the adjusted p-value from \Cref{thm:adjustment} is monotone in the parameter, under suitable conditions. Throughout the appendix, we draw from the discussion and proof strategy in Appendix B.1 of \cite{Lei}.

\subsection{MLR Families}
\label{sec:one_sided_mlr_appdx}

We consider a parametric family $P_{\theta}$ parameterized by a real parameter $\theta \in R$ such that each $P_{\theta}$ has density $p_{\theta}(x)$ with respect to some carrier measure $\mu$. We will suppose that these densities share support (i.e., for any two $\theta$ and $\theta'$ we have $p_{\theta}(x) > 0 \iff p_{\theta'}(x)> 0$). Further, we suppose that for any $\theta \leq \theta'$, the likelihood ratio $p_{\theta'}(x)/p_{\theta}(x)$ is a monotone non-decreasing function of some real-valued function $T(x)$ on this support (i.e., for any $x_1 \leq x_2$ with $p_{\theta}(x_1), p_{\theta}(x_2) > 0$, we have $p_{\theta'}(x_1)/p_{\theta}(x_1) \leq p_{\theta'}(x_2)/p_{\theta}(x_2)$). 

Recall that for a testing problem, the critical function $\phi(x)$ (see \cite[Section 3.1]{Lehmann}) tells us the probability of rejecting the null having observed data $x$, so $\phi(X) = P(\text{reject } H_0 | X)$. From Theorem 3.4.1 of \cite{Lehmann} we know the test governed by the critical function 
\begin{equation}
    \label{eq:mlr_test}
    \phi(x) = \begin{cases}
        1 &\text{if } T(x) > C  \\
        \gamma &\text{if } T(x) = C  \\
        0 & \text{otherwise }
    \end{cases}
\end{equation}
is UMP for testing $H_0 : \theta \leq \theta_0$ against the alternatives $H_a : \theta > \theta_0 $ so long $C$ and $\gamma$ satisfy
\begin{equation}
    \label{eq:constraint}
    P_{\theta_0}(T(X) > C) + \gamma P_{\theta_0}(T(X) = C) = \alpha.
\end{equation}
Denote the left-continuous survival function of $T(X)$ and its right-hand limit under $P_{\theta_0}$ as
\begin{equation*}
    G(t) = P_{\theta_0}(T(X) \geq t) \qquad G^+(t) = \lim_{u \downarrow t} G(u).
\end{equation*}
Since $G(t)$ is a monotone non-increasing function, we can also define its generalized inverse 

\begin{equation*}
    G^{-1}(z) = \inf \{ t : G(t) \leq z\},
\end{equation*}

\Cref{lem:setting_constants} gives a natural way to set $C$ and $\gamma$ in \Cref{eq:mlr_test} to get an UMP test. 

\begin{lemma}[An UMP test]
    \label{lem:setting_constants} Adopting the convention that $0/0 = 0$, taking $C = G^{-1}(\alpha)$ and $\gamma = (\alpha - G^+(C))/(G(C) - G^+(C))$
    in \Cref{eq:mlr_test} gives an UMP test.
\end{lemma}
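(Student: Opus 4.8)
The plan is to verify the two defining conditions for a UMP test in this setup: first, that the proposed $C$ and $\gamma$ satisfy the size constraint \eqref{eq:constraint}, and second, that the resulting critical function is of the form \eqref{eq:mlr_test}, so that Theorem 3.4.1 of \cite{Lehmann} applies directly. Since the correctness of the form \eqref{eq:mlr_test} is immediate once we plug in $C = G^{-1}(\alpha)$ and this $\gamma$, the real work is checking \eqref{eq:constraint}, i.e. that
\begin{equation*}
    P_{\theta_0}(T(X) > C) + \gamma\, P_{\theta_0}(T(X) = C) = \alpha
\end{equation*}
with $\gamma = (\alpha - G^+(C))/(G(C) - G^+(C))$ under the convention $0/0 = 0$.

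First I would translate the survival-function notation into probabilities of the event being tested. Note $G^+(C) = \lim_{u \downarrow C} P_{\theta_0}(T(X) \geq u) = P_{\theta_0}(T(X) > C)$, since the events $\{T(X) \geq u\}$ for $u > C$ increase to $\{T(X) > C\}$ as $u \downarrow C$. Also $G(C) = P_{\theta_0}(T(X) \geq C)$, so $G(C) - G^+(C) = P_{\theta_0}(T(X) = C)$. Substituting these into the left-hand side of \eqref{eq:constraint} gives, in the case $P_{\theta_0}(T(X) = C) > 0$,
\begin{equation*}
    G^+(C) + \frac{\alpha - G^+(C)}{G(C) - G^+(C)} \cdot \big(G(C) - G^+(C)\big) = G^+(C) + \alpha - G^+(C) = \alpha,
\end{equation*}
so \eqref{eq:constraint} holds exactly. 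The case $P_{\theta_0}(T(X) = C) = 0$ needs the $0/0 = 0$ convention: then $\gamma = 0$, and the left-hand side of \eqref{eq:constraint} is just $G^+(C) = P_{\theta_0}(T(X) > C)$, so I must show this equals $\alpha$.

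The main obstacle, and the one step requiring genuine care, is thus to show that when $C = G^{-1}(\alpha)$ is a continuity point of the distribution of $T(X)$ (equivalently $G(C) = G^+(C)$), we in fact have $G^+(C) = \alpha$ — and more generally, in the atom case, that $\gamma$ as defined lies in $[0,1]$, so that \eqref{eq:mlr_test} is a legitimate critical function. Both facts follow from properties of the generalized inverse $G^{-1}(z) = \inf\{t : G(t) \leq z\}$ of the non-increasing right-continuous-from-the-left... here one must be slightly careful since $G$ is defined as the \emph{left}-continuous survival function; I would use that $G$ is non-increasing with $G^+$ its right limit, and that $G^{-1}(\alpha) = C$ gives $G(C) \geq \alpha \geq G^+(C)$ by the definition of the infimum (for any $t < C$, $G(t) > \alpha$, and letting $t \uparrow C$ with left-continuity gives $G(C) \geq \alpha$; for any $t > C$, $G(t) \leq \alpha$, and letting $t \downarrow C$ gives $G^+(C) \leq \alpha$). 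This sandwich immediately yields $\gamma \in [0,1]$ in the atom case, and in the no-atom case forces $G(C) = G^+(C) = \alpha$, closing the argument. Finally I would invoke Theorem 3.4.1 of \cite{Lehmann} to conclude the test is UMP for $H_0 : \theta \leq \theta_0$ against $H_a : \theta > \theta_0$.
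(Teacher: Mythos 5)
Your proposal is correct and follows essentially the same route as the paper: verify the size constraint \eqref{eq:constraint} and that $\gamma \in [0,1]$, using $G^+(C) \leq \alpha \leq G(C)$ derived from the definition of the generalized inverse and left-continuity of $G$, then invoke Theorem 3.4.1 of \cite{Lehmann}. Your sandwich argument handles the continuity and atom cases uniformly, which is a slightly cleaner organization of the same facts the paper establishes by splitting into cases and arguing the lower bound by contradiction.
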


\begin{proof}
    At continuity points $t$ of $G(\cdot)$, we have $G(G^{-1}(t)) = t$ and also $P(T(X) = t) = 0$. Thus, if $C = G^{-1}(\alpha)$ is a continuity point of $G(\cdot)$, then $G^+(C) = P_{\theta_0}(T(X) > C) = P_{\theta_0}(T(X) \geq C) = G(C) = \alpha$ and $\gamma=0$, so the constraint \eqref{eq:constraint} is immediately satisfied. 

    If $C = G^{-1}(\alpha)$ is not a continuity point of $G(\cdot)$, then $G(C) - G^+(C) > 0$ and the constraint \eqref{eq:constraint} is also immediately satisfied. To ensure that we still have a valid test, however, we need $\gamma \in [0, 1]$. This is true so long as $\alpha \in [G^{+}(C), G(C)]$. We know that $G(t) \leq \alpha$ for any $t > C$, so $G^{+}(C) \leq \alpha$. If $G(C) < \alpha$, then we could find some $t^{-} < C$ such that $G(C) < \alpha$ by left-continuity, but this would contradict that $C = G^{-1}(\alpha)$, finishing the proof. 
\end{proof}

Letting $U_{aux} \sim \text{Unif}([0, 1])$ be auxiliary randomness that is independent of $X$, a simple way to instantiate the test from \Cref{lem:setting_constants} is to reject whenever $T(X) > C$ or when $T(X) = C$ and $U_{aux} \leq \gamma$. \Cref{lem:fuzzy} explains how this is the same as rejecting when the p-value, termed as a fuzzy p-value in \cite{Geyer}, 
\begin{equation}
    \label{eq:mlr_p_val}
    p = G^+(T(X)) + U_{aux}(G(T(X)) - G^+(T(X))) 
\end{equation}
is at most $\alpha$. 

\begin{lemma}[Fuzzy p-value is UMP]
    \label{lem:fuzzy}
    Rejecting $H_0: \theta \leq \theta_0$ when the fuzzy p-value \eqref{eq:mlr_p_val} is at most $\alpha$ instantiates the test from \Cref{lem:setting_constants}, and is therefore UMP. 
\end{lemma}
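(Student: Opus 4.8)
The plan is to show that rejecting when the fuzzy p-value \eqref{eq:mlr_p_val} is at most $\alpha$ produces exactly the same rejection event (up to a null set) as the critical function from \Cref{lem:setting_constants}. First I would unwind the definition: the fuzzy p-value $p = G^+(T(X)) + U_{aux}(G(T(X)) - G^+(T(X)))$ is a convex combination of $G^+(T(X))$ and $G(T(X))$, so $p \in [G^+(T(X)), G(T(X))]$ always, and $p = G^+(T(X))$ whenever $T(X)$ is a continuity point of $G$ (since then $G(T(X)) = G^+(T(X))$). The key monotonicity facts I would record are: $G$ is non-increasing and left-continuous, $G^+$ is its right-continuous version, and $G^{-1}(z) = \inf\{t : G(t) \le z\}$ satisfies the usual Galois-type relation $G^{-1}(z) \le t \iff G(t) \le z$ at points where things behave, together with $G^+(G^{-1}(\alpha)) \le \alpha \le G(G^{-1}(\alpha))$, which was essentially proved inside \Cref{lem:setting_constants}.

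Next I would split into the three regions determined by $C = G^{-1}(\alpha)$. On the event $T(X) > C$, left-continuity and monotonicity of $G$ give $G(T(X)) \le \alpha$ (this is exactly the statement $G(t) \le \alpha$ for $t > C$ used in the proof of \Cref{lem:setting_constants}), hence $p \le G(T(X)) \le \alpha$, so the fuzzy test rejects — matching $\phi = 1$. On the event $T(X) < C$, I would argue $G^+(T(X)) > \alpha$: if $T(X) < C = \inf\{t: G(t)\le\alpha\}$ then $G(u) > \alpha$ for all $u$ with $T(X) < u < C$, and taking $u \downarrow T(X)$ gives $G^+(T(X)) \ge \alpha$; a little more care (using that the infimum is not attained below $C$) upgrades this to a strict inequality except possibly on a $P_{\theta_0}$-null set, so $p \ge G^+(T(X)) > \alpha$ and the fuzzy test does not reject — matching $\phi = 0$. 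On the event $T(X) = C$, $p$ ranges over $[G^+(C), G(C)]$ and is, conditionally on $T(X) = C$, uniform on that interval by independence of $U_{aux}$; hence $P(p \le \alpha \mid T(X) = C) = (\alpha - G^+(C))/(G(C) - G^+(C)) = \gamma$ (with the $0/0 = 0$ convention when $G(C) = G^+(C)$), which is precisely $\phi = \gamma$ on this event.

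Combining the three cases, the critical function $x \mapsto P(p \le \alpha \mid X = x)$ of the fuzzy-p-value test equals $\phi(x)$ of \Cref{eq:mlr_test} with the constants of \Cref{lem:setting_constants}, $P_{\theta_0}$-almost everywhere, so the two tests have the same power function and the fuzzy test is UMP by \Cref{lem:setting_constants} and Theorem 3.4.1 of \cite{Lehmann}. The main obstacle I anticipate is the boundary bookkeeping on $\{T(X) < C\}$ — pinning down the strictness of $G^+(T(X)) > \alpha$ and confirming that any slack lives on a $P_{\theta_0}$-null set — since that is where left-continuity versus right-continuity and the precise meaning of the generalized inverse all have to be handled carefully; everything else is the routine convex-combination and conditional-uniformity computation.
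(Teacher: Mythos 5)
Your proposal is correct and follows essentially the same route as the paper: a case split on $T(X)$ versus $C = G^{-1}(\alpha)$ showing that rejecting when the fuzzy p-value is at most $\alpha$ reproduces the critical function of \Cref{lem:setting_constants} (with the tie case handled by the conditional uniformity of $U_{aux}$). The only refinement worth noting is that on $\{T(X) < C\}$ the strict inequality $G^{+}(T(X)) > \alpha$ holds pointwise, not just up to a null set: for any fixed $t^{+}$ with $T(X) < t^{+} < C$ monotonicity gives $G^{+}(T(X)) \geq G(t^{+}) > \alpha$, so the null-set hedge you flag is unnecessary.
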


\begin{proof} We rewrite 
    \begin{equation*}
        p = (1 - U_{aux})G^+(T(X)) + U_{aux} G(T(X))
    \end{equation*}
    and consider four  cases. 
    \begin{itemize}
        \item If $t < G^{-1}(\alpha)$ then we can find some $t^+ > t$ such that $G(t^+) > \alpha$. Thus $G(t) \geq G^{+}(t) > \alpha$. So $p > \alpha$ whenever $T(X) < G^{-1}(\alpha)$ 
        \item If $t = G^{-1}(\alpha)$ and $G^{-1}(\alpha)$ is a continuity point of $G(\cdot)$, then $G^{+}(t) = G(t) = \alpha$. Thus, in this case $p \leq \alpha$ whenever $T(X) = G^{-1}(\alpha)$ and $U_{aux} \leq  \frac{\alpha - G^+(G^{-1}(\alpha))}{G(G^{-1}(\alpha)) - G^+(G^{-1}(\alpha))} = \infty$. 
        \item If $t = G^{-1}(\alpha)$ and $G^{-1}(\alpha)$ is not a continuity point of $G(\cdot)$, then we must have that $G(t) - G^+(t)  > 0$. Also by right continuity we have $G(t) \geq \alpha$ and by how $G^{-1}(\cdot)$ is defined we have $G^+(t) \leq \alpha$. In this case $p \leq \alpha$ also whenever $T(X) = G^{-1}(\alpha)$ and $U_{aux} \leq  \frac{\alpha - G^+(G^{-1}(\alpha))}{G(G^{-1}(\alpha)) - G^+(G^{-1}(\alpha))}$. 
        \item If $t > G^{-1}(\alpha)$ then $G^+(t) \leq G(t) \leq \alpha$. So $p \leq \alpha$ whenever $T(X) > G^{-1}(\alpha)$. 
    \end{itemize}

    This implies the following set equalities:

    \begin{align*}
        \{p \leq \alpha \} &= \{T(X) > G^{-1}(\alpha) \} \cup \left\{T(X) =  G^{-1}(\alpha), U_{aux} \leq \frac{\alpha - G^+(G^{-1}(\alpha))}{G(G^{-1}(\alpha)) - G^+(G^{-1}(\alpha))} \right\}\\
                           &= \{T(X) > C \} \cup \left\{T(X) =  C, U_{aux} \leq \gamma \right\}\\
    \end{align*}

\end{proof}

\Cref{lem:uniform} shows that $p \sim \text{Unif}([0, 1])$ under $P_{\theta_0}$, which will be useful for us later. 

\begin{lemma}[Fuzzy p-value is uniform at null boundary]
    \label{lem:uniform}
    Under $P_{\theta_0}$, the p-value \eqref{eq:mlr_p_val} has a $\text{Unif}([0, 1])$ distribution.
\end{lemma}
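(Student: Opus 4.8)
The plan is to compute the distribution function of $p$ under $P_{\theta_0}$ directly and check that it is the identity on $[0,1]$. The key observation is that the four-case argument in the proof of \Cref{lem:fuzzy} never used that $\alpha$ is a significance level: it works verbatim for an arbitrary threshold. So the first step is to record that, for every $u \in (0,1)$ (the cases $u=0,1$ are immediate, the former by continuity from above), setting $c := G^{-1}(u)$,
\[
\{p \leq u\} \;=\; \{T(X) > c\}\ \cup\ \Bigl\{T(X) = c,\ U_{aux} \leq \tfrac{u - G^+(c)}{G(c) - G^+(c)}\Bigr\}
\]
up to a $P_{\theta_0}$-null set — this is exactly the displayed set identity at the end of \Cref{lem:fuzzy}'s proof with $u$ in place of $\alpha$, the only gap being when $c$ is a continuity point of $G$, in which case $P_{\theta_0}(T(X)=c)=0$ and the second event is irrelevant anyway.

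The second step is a one-line computation. The two events on the right are disjoint, $U_{aux}$ is independent of $X$, and by definition $P_{\theta_0}(T(X) > c) = G^+(c)$, $P_{\theta_0}(T(X) = c) = G(c) - G^+(c)$. One also checks $u \in [G^+(c), G(c)]$: since $G$ is non-increasing, $G(t)\le u$ for all $t>c$, so $G^+(c)\le u$, while $G(t)> u$ for all $t<c$ together with left-continuity of $G$ gives $G(c)\ge u$. Hence
\[
P_{\theta_0}(p \leq u) \;=\; G^+(c) + \frac{u - G^+(c)}{G(c) - G^+(c)}\,\bigl(G(c)-G^+(c)\bigr) \;=\; u,
\]
where in the degenerate case $G(c) = G^+(c)$ the fraction is $0$ by the convention $0/0=0$ and simultaneously $u = G^+(c)$, so the identity still holds. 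Since $P_{\theta_0}(p\le u)=u$ for every $u\in[0,1]$, we conclude $p \sim \text{Unif}([0,1])$.

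The only place that needs genuine care — and the step I expect to fuss over — is the bookkeeping around atoms of $T(X)$ and the generalized inverse: confirming the set identity in the various sub-cases (continuity point of $G$, $u$ equal to a jump endpoint, $G(c)=G^+(c)$) and confirming $u \in [G^+(c),G(c)]$ with the right conventions. This is the same monotone-survival-function accounting already carried out in \Cref{lem:setting_constants} and \Cref{lem:fuzzy}, so it should go through cleanly. If one prefers to bypass it entirely, an alternative is to observe that $p = 1 - \bigl[\,F(T(X)^-) + (1-U_{aux})\bigl(F(T(X)) - F(T(X)^-)\bigr)\bigr]$, where $F$ is the CDF of $T(X)$, and invoke the standard randomized probability integral transform to conclude $p\sim\text{Unif}([0,1])$.
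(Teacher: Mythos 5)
Your proposal is correct and follows essentially the same route as the paper: it reuses the set identity from the proof of \Cref{lem:fuzzy} with the threshold $\alpha$ replaced by an arbitrary $u\in(0,1)$, then computes the probability via disjointness, independence of $U_{aux}$ from $X$, and the identities $P_{\theta_0}(T(X)>G^{-1}(u))=G^{+}(G^{-1}(u))$ and $P_{\theta_0}(T(X)=G^{-1}(u))=G(G^{-1}(u))-G^{+}(G^{-1}(u))$, exactly as in the paper's proof of \Cref{lem:uniform}. Your extra care with the degenerate case $G(c)=G^{+}(c)$ and the check $u\in[G^{+}(c),G(c)]$ is sound and only makes explicit what the paper leaves to its $0/0=0$ convention.
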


\begin{proof}

Using the set equality from \Cref{lem:fuzzy} but replacing $\alpha$ with $z \in (0, 1)$, we find

\begin{align*}
        &P_{\theta_0}(G^+(T(X)) + U_{aux}(G(T(X)) - G^+(T(X))) \leq z)\\
        &= P_{\theta_0}(T(X) > G^{-1}(z)) + P_{\theta_0}\left(T(X) = G^{-1}(z),  U \leq \frac{z - G^+(G^{-1}(z))}{G(G^{-1}(z)) - G^{+}(G^{-1}(z))}\right)\\
        &= P_{\theta_0}(T(X) > G^{-1}(z)) + P_{\theta_0}(T(X) = G^{-1}(z)) P_{\theta_0}\left(U \leq \frac{z - G^+(G^{-1}(z))}{G(G^{-1}(z)) - G^{+}(G^{-1}(z))}\right)\\
        &=G^+(G^{-1}(z)) + (G(G^{-1}(z)) - G^{+}(G^{-1}(z))) \cdot \frac{z - G^+(G^{-1}(z))}{G(G^{-1}(z)) - G^{+}(G^{-1}(z))}\\
        &= z.
    \end{align*}

\end{proof}

Now we can show that $p$ is a selectively dominant p-value for testing the null $H_0: \theta \leq \theta_0$. In what follows, we consider some fixed $\theta \leq \theta_0$ and prove some facts that allow us to relate the distribution of $T(X)$ under $P_{\theta_0}$ to its distribution under $P_{\theta}$.

\begin{lemma}[Distribution of $T(X)$]
    \label{lem:stan_machine}
    Let $g_{\theta}(T(x))$ be a non-increasing function that equals the likelihood ratio $p_{\theta}(x)/p_{\theta_0}(x)$ on the support and 
    \begin{equation*}
        \nu(A) = \int I(T(x) \in A) p_{\theta_0}(x) \mu(dx)
    \end{equation*}
    be the measure of $T(X)$ under $X \sim P_{\theta_0}$. Then 
    \begin{equation*}
        P_{\theta}(T(X) \in A) = \int_A g_{\theta}(t) \nu(dt)
    \end{equation*}
\end{lemma}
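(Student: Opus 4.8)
The plan is to reduce the claim to the elementary change-of-variables formula for pushforward measures. First I would record why a function $g_\theta$ as in the statement exists. The MLR hypothesis says that for $\theta \le \theta_0$ the ratio $p_{\theta_0}(x)/p_\theta(x)$ is a non-decreasing function of $T(x)$ on the common support, so $p_\theta(x)/p_{\theta_0}(x)$ is a non-increasing function of $T(x)$ there. In particular, whenever $T(x_1)=T(x_2)$ for $x_1,x_2$ in the support, the two ratios agree (apply monotonicity once with $T(x_1)\le T(x_2)$ and once with $T(x_2)\le T(x_1)$), so the ratio is genuinely a function of the value $T(x)$, and it extends to a non-increasing function $g_\theta$ on all of $\R$ (e.g.\ by taking a non-increasing envelope). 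Thus on the support we have $p_\theta(x) = g_\theta(T(x))\,p_{\theta_0}(x)$.

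Next I would unwind the left-hand side. By definition of the density,
\[
P_\theta(T(X)\in A) = \int I(T(x)\in A)\, p_\theta(x)\,\mu(dx).
\]
On the common support this integrand equals $I(T(x)\in A)\,g_\theta(T(x))\,p_{\theta_0}(x)$ by the previous paragraph, and off the support $p_\theta(x)=0=p_{\theta_0}(x)$, so the integrand vanishes there and the substitution introduces nothing spurious. Hence
\[
P_\theta(T(X)\in A) = \int I(T(x)\in A)\,g_\theta(T(x))\,p_{\theta_0}(x)\,\mu(dx).
\]

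Finally I would invoke the standard transformation rule: if $\nu$ denotes the law of $T(X)$ under $X\sim P_{\theta_0}$ --- which is exactly the measure $\nu(A)=\int I(T(x)\in A)\,p_{\theta_0}(x)\,\mu(dx)$ defined in the statement --- then for every nonnegative measurable $h\colon\R\to[0,\infty]$ one has $\int h(T(x))\,p_{\theta_0}(x)\,\mu(dx) = \int h\,d\nu$. Applying this with $h(t)=I(t\in A)\,g_\theta(t)$, which is measurable since $g_\theta$ is monotone and hence Borel, rewrites the last display as $\int_A g_\theta(t)\,\nu(dt)$, which is the asserted identity.

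There is no real obstacle here; the only points requiring a little care are (i) checking that the MLR condition makes the likelihood ratio an honest function of $T(x)$ rather than merely monotone along it, and (ii) being careful on the $\mu$-null set where the densities vanish, so that rewriting $p_\theta$ in terms of $p_{\theta_0}$ and $g_\theta$ is legitimate. Both are routine, and the substantive content is simply the pushforward change-of-variables identity.
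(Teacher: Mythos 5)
Your proof is correct and follows essentially the same route as the paper: both substitute $p_\theta(x)=g_\theta(T(x))\,p_{\theta_0}(x)$ on the common support and then reduce to the change-of-variables identity for the pushforward measure $\nu$, which the paper establishes via the standard machine (indicators, then general non-negative functions) and which you simply invoke as the known pushforward formula. Your extra remarks on why $g_\theta$ is a genuine function of $T(x)$ and on the vanishing of both densities off the support are fine but not a substantive departure.
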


\begin{proof}
We know that 
\begin{align*}
    P_{\theta}(T(X) \in A) = \int I(T(x) \in A) p_{\theta_0}(x)g_{\theta}(T(x)) \mu(dx), 
\end{align*}
so we need to show that 
\begin{equation}
    \label{eq:stan_machine_to_show}
    \int I(T(x) \in A) p_{\theta_0}(x) g_{\theta}(T(x)) \mu(dx) =  \int_A g_{\theta}(t) \nu(dt)
\end{equation}
If $g_{\theta}(T(x)) = I(T(x) \in A')$ happens to be an indicator then \eqref{eq:stan_machine_to_show} holds. Therefore, we can apply the standard machine (see the discussion after Equation 42 in \cite{Lei}) to show that \eqref{eq:stan_machine_to_show} holds for all non-negative functions $g_{\theta}(\cdot)$.  
\end{proof}

\begin{lemma}[Distribution of $(T(X), U_{aux})$]
    \label{lem:pi_lambda}
    If $\omega$ denotes the product measure of $\nu$ and Lebesgue measure $\lambda$ on $[0, 1]$, i.e., the distribution of $(T(X), U_{aux})$ under $P_{\theta_0}$, then 
    \begin{equation}
    \label{eq:joint_dist}
        P_{\theta}((T(X), U) \in B) = \int_{B} g_{\theta}(t) \omega(dt, du)
    \end{equation}
\end{lemma}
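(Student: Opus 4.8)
The plan is to reduce the joint statement to the marginal statement already proved in \Cref{lem:stan_machine}, exploiting that the auxiliary randomness $U_{aux}$ is independent of the data $X$ under every $P_\theta$. First I would record that, since $U_{aux} \sim \text{Unif}([0,1])$ is by construction independent of $X$ (and hence of $T(X)$) regardless of the underlying parameter, the law of $(T(X), U_{aux})$ under $P_\theta$ is the product of the law of $T(X)$ under $P_\theta$ with Lebesgue measure $\lambda$ on $[0,1]$; the same is true under $P_{\theta_0}$, which is exactly the statement that $\omega = \nu \otimes \lambda$. So it suffices to verify \eqref{eq:joint_dist} on measurable rectangles $B = A \times C$ and then extend.

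For a rectangle $B = A \times C$, independence and \Cref{lem:stan_machine} give
\begin{equation*}
    P_{\theta}\big((T(X), U_{aux}) \in A \times C\big) = P_{\theta}(T(X) \in A)\,\lambda(C) = \left(\int_A g_{\theta}(t)\,\nu(dt)\right)\lambda(C) = \int_{A \times C} g_{\theta}(t)\,\omega(dt, du),
\end{equation*}
where the last equality is Tonelli's theorem applied to the non-negative integrand $g_\theta(t)$ against the product measure $\omega = \nu \otimes \lambda$. This establishes \eqref{eq:joint_dist} on the $\pi$-system of measurable rectangles, which generates the product $\sigma$-algebra on $\R \times [0,1]$.

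It remains to upgrade from rectangles to arbitrary measurable $B$. Both sides of \eqref{eq:joint_dist} are finite measures in $B$: the left-hand side is the probability measure $P_\theta \circ (T(X), U_{aux})^{-1}$, and the right-hand side is $B \mapsto \int_B g_\theta\,d\omega$, which is a measure because $g_\theta \geq 0$ and has total mass $\int g_\theta(t)\,\nu(dt) = P_\theta(T(X) \in \R) = 1$ (again by \Cref{lem:stan_machine}), so it is in fact a probability measure. Two probability measures that agree on a generating $\pi$-system agree on the full $\sigma$-algebra by Dynkin's $\pi$--$\lambda$ theorem, which is the ``$\pi\lambda$'' the lemma name alludes to. I do not anticipate a genuine obstacle here; the only point requiring care is the bookkeeping that makes the product structure $\omega = \nu \otimes \lambda$ legitimate, i.e.\ that $U_{aux}$ is independent of the data under $P_\theta$ and not merely under $P_{\theta_0}$, which follows immediately from how the randomized test is constructed.
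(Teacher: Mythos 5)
Your proposal is correct and follows essentially the same route as the paper: verify \eqref{eq:joint_dist} on the $\pi$-system of measurable rectangles using independence of $U_{aux}$ from the data together with \Cref{lem:stan_machine}, then extend to all measurable $B$ via the $\pi$--$\lambda$ theorem. The only difference is cosmetic: where the paper handles the rectangle case by reducing to indicator $g_{\theta}$ and invoking the standard machine, you integrate the non-negative $g_{\theta}$ directly against $\omega = \nu \otimes \lambda$ via Tonelli, which is a slightly cleaner way to do the same step.
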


\begin{proof}
    We will first argue that \eqref{eq:joint_dist} holds for any $B$ which is a product set $A_1 \times A_2$. We can further reduce to the case that $g_{\theta}(T(x)) = I(T(x) \in A_1')$ is an indicator. Then we see using our previous lemma that 
    \begin{align*}
        P_{\theta}((T(X), U_{aux}) \in A_1 \times A_2) &= P_{\theta}(T(X) \in A_1)P(U_{aux} \in A_2)\\
                                          &= \int_{A_1} g_{\theta}(t) \nu(dt) \cdot \int_{A_2} \lambda(du)\\
                                          &= \int_{A_1 \cap A_1'} \nu(dt) \cdot \int_{A_2} \lambda(du)\\
                                          &= \int_{A_1 \cap A_1' \times A_2} \omega(dt, du)\\
                                          &= \int_{A_1 \times A_2} g_{\theta}(t)\omega(dt, du)
    \end{align*}
    To handle the case of general $g_{\theta}(\cdot)$ we can again simply apply the standard machine. 
    
    The full result then follows from an application of the $\pi-\lambda$ theorem: the set of $B$ for which \eqref{eq:joint_dist} holds is a $\lambda$-system, and \eqref{eq:joint_dist} holds for every set in the $\pi$ system of all product sets $B = A_1 \times A_2$. 
\end{proof}

Note that our p-value $p$ is a deterministic function of $T(X)$ and $U_{aux}$:
\begin{equation*}
    p = m(T(X), U_{aux}) \qquad m(t, u) = G^+(t) + u(G(t) - G^+(t)).
\end{equation*}
As such, we sometimes write our selection function as a function of $T(X)$ and $U_{aux}$:
\begin{equation*}
    s(t, u) = s(m(t, u)). 
\end{equation*}
We use this abuse of notation in our next lemma, which characterizes the conditional distribution of $T(X)$ given selection.

\begin{lemma}[Distribution of $(T(X), U_{aux})$ given selection]
    \label{lem:cond_dist}
    For any selection function $s(x)$ under which $p$ has a positive probability of selection under $P_{\theta}$, 
    \begin{equation*}
        P_{\theta}((T(X), U_{aux}) \in B| S = 1 ) = \frac{\int_{B} g_{\theta}(t)s(t, u) \omega(dt, du)}{ \int g_{\theta}(t) s(t, u) \omega(dt, du) }
    \end{equation*}
\end{lemma}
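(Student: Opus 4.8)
The plan is to reduce the statement to the unconditional description of the law of $(T(X),U_{aux})$ already obtained in \Cref{lem:pi_lambda}, together with the fact that the selection variable $S$ depends on the data only through the p-value $p$. Since the hypothesis is that $p$ has positive selection probability under $P_\theta$, we may write, by the definition of conditioning on the event $\{S=1\}$,
\begin{equation*}
    P_{\theta}\bigl((T(X), U_{aux}) \in B \mid S = 1\bigr) = \frac{E_{\theta}\bigl[\,I_B(T(X),U_{aux})\, S\,\bigr]}{E_{\theta}[S]}.
\end{equation*}
So it suffices to evaluate the numerator and denominator separately.

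The key step is to compute $E_{\theta}\bigl[S \mid T(X), U_{aux}\bigr]$. In our framework $S$ is obtained by observing $p$ and then flipping an independent coin with heads probability $s(p)$; in particular $S$ is conditionally independent of $(T(X),U_{aux})$ given $p$. Since $p = m(T(X),U_{aux})$ is a deterministic function of $(T(X),U_{aux})$, the tower property gives
\begin{equation*}
    E_{\theta}\bigl[S \mid T(X), U_{aux}\bigr] = E_{\theta}\bigl[\,E_{\theta}[S \mid p,\, T(X),U_{aux}] \mid T(X),U_{aux}\,\bigr] = s\bigl(m(T(X),U_{aux})\bigr) = s(T(X),U_{aux}),
\end{equation*}
using the abuse of notation $s(t,u) = s(m(t,u))$ introduced just before the statement. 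Consequently $E_{\theta}[\,I_B(T(X),U_{aux})\,S\,] = E_{\theta}[\,I_B(T(X),U_{aux})\,s(T(X),U_{aux})\,]$ and, taking $B$ to be the whole space, $E_{\theta}[S] = E_{\theta}[\,s(T(X),U_{aux})\,]$.

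Finally I would invoke \Cref{lem:pi_lambda}, which says that under $P_\theta$ the law of $(T(X),U_{aux})$ has density $g_{\theta}(t)$ with respect to $\omega$; hence for every nonnegative measurable $h$ one has $E_{\theta}[h(T(X),U_{aux})] = \int h(t,u)\, g_{\theta}(t)\, \omega(dt,du)$ (this extends from indicators of product sets, for which \Cref{lem:pi_lambda} gives it directly, to general $h$ by the standard machine). Applying this with $h(t,u) = I_B(t,u)\, s(t,u)$ for the numerator and with $h(t,u) = s(t,u)$ for the denominator yields exactly
\begin{equation*}
    P_{\theta}\bigl((T(X), U_{aux}) \in B \mid S = 1\bigr) = \frac{\int_{B} g_{\theta}(t)\, s(t,u)\, \omega(dt, du)}{\int g_{\theta}(t)\, s(t,u)\, \omega(dt, du)},
\end{equation*}
where the denominator is positive precisely by the positive-selection-probability hypothesis. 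The only delicate points are the measure-theoretic justification of $E_\theta[S\mid T(X),U_{aux}] = s(m(T(X),U_{aux}))$ — i.e.\ that the selection coin is independent of the data given $p$ — and the extension of the density-transfer identity from \Cref{lem:pi_lambda} to the integrand $I_B s$; everything else is bookkeeping.
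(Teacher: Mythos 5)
Your proposal is correct and follows essentially the same route as the paper: write the conditional probability as the ratio of $E_\theta[I_B(T(X),U_{aux})S]$ to $E_\theta[S]$, use $E_\theta[S\mid T(X),U_{aux}]=s(T(X),U_{aux})$ (the selection coin depending on the data only through $p=m(T(X),U_{aux})$), and then transfer to the measure $\omega$ via \Cref{lem:pi_lambda} together with the standard machine. The only cosmetic difference is that you apply the standard machine to a general nonnegative integrand $h=I_B\,s$ while the paper applies it to the selection function $s$ with $B$ fixed; these are interchangeable.
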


\begin{proof} First note that
    \begin{equation*}
        P_{\theta}( (T(X), U_{aux}) \in B | S= 1 ) = \frac{P_{\theta}((T(X), U_{aux}) \in B, S = 1) }{P_{\theta}(S = 1)}.
    \end{equation*} 
    Thus it suffices to show for any set $B$ that 
    \begin{equation*}
        P_{\theta}( (T(X), U_{aux}) \in B,  S= 1 ) = \int_B g_{\theta}(t) s(t, u) \omega(dt, du). 
    \end{equation*} 
    By the definition of conditional expectation
    \begin{align*}
        P_{\theta}( (T(X), U_{aux}) \in B,  S= 1 ) &= E_{\theta}[E_{\theta}[ I(S=1) \mid  T(X), U_{aux}] I((T(X), U_{aux}) \in B) ]\\
                                                   &= E_{\theta}[ s(T(X), U_{aux}) I((T(X), U_{aux}) \in B) ]
    \end{align*}
    If $s(t, u) = I((t, u) \in B)$ is an indicator function, then the result is implied by our previous lemma. We again get the result for general selection functions $s(t, u)$ by applying the standard machine. 
\end{proof}

With these lemmas under our belt, we can show \Cref{prop:mlr_sel_dom}, the main result of this sub-section. Since $p \sim_{P_{\theta_0}} \text{Unif}([0, 1])$ by \Cref{lem:uniform}, this proposition is sufficient to imply selective dominance. 

\begin{proposition}
    \label{prop:mlr_sel_dom}
    For any selection function $s(x)$ for which $p$ has positive probability of selection under $P_{\theta}$,  
    \begin{equation*}
        P_{\theta}(p \leq z | S = 1)  \leq P_{\theta_0}(p \leq z | S = 1).
    \end{equation*}
\end{proposition}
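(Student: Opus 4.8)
The plan is to feed the conditional-distribution formula of Lemma~\ref{lem:cond_dist} into a one-dimensional correlation inequality, with the monotone likelihood ratio structure entering only at the very end. Write $p = m(T(X),U_{aux})$ with $m(t,u) = (1-u)G^+(t) + uG(t)$, so that $\{p \le z\} = \{(T(X),U_{aux}) \in B_z\}$ for $B_z = \{(t,u) : m(t,u)\le z\}$. Applying Lemma~\ref{lem:cond_dist} both at $\theta$ and at the base point $\theta_0$ (where the likelihood ratio $g_{\theta_0}\equiv 1$), and using $\int s(m(t,u))\,\omega(dt,du) = \int_0^1 s(x)\,dx$ since $p\sim\mathrm{Unif}([0,1])$ under $P_{\theta_0}$ (Lemma~\ref{lem:uniform}), gives
$$P_{\theta_0}(p \le z\mid S=1) = \frac{\int_0^z s(x)\,dx}{\int_0^1 s(x)\,dx}, \qquad P_{\theta}(p \le z\mid S=1) = \frac{\int_{B_z} g_\theta(t)\, s(m(t,u))\,\omega(dt,du)}{\int g_\theta(t)\, s(m(t,u))\,\omega(dt,du)}.$$
Because $s(m(t,u))$ depends on $(t,u)$ only through $p$, and $p$ is uniform under $P_{\theta_0}$, conditioning on $p$ and invoking the tower property converts the $P_\theta$-numerator and denominator into $\int_0^z \psi_0(x)s(x)\,dx$ and $\int_0^1 \psi_0(x)s(x)\,dx$ respectively, where $\psi_0(x) := E_{P_{\theta_0}}[\,g_\theta(T(X))\mid p = x\,]$; the hypothesis $P_\theta(S=1)>0$ makes $\int_0^1\psi_0 s$ strictly positive. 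So the claim reduces to the purely one-dimensional inequality
$$\Big(\int_0^z \psi_0 s\Big)\Big(\int_z^1 s\Big) \;\le\; \Big(\int_0^z s\Big)\Big(\int_z^1 \psi_0 s\Big),$$
which holds for every $z\in[0,1]$ and every non-negative $s$ as soon as $\psi_0$ is non-decreasing (the $s$-weighted average of $\psi_0$ over $[0,z]$ is then at most the $s$-weighted average over $[z,1]$).

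It remains to show $\psi_0$ is non-decreasing, and here the MLR hypothesis is used. By Lemma~\ref{lem:stan_machine} we may take $g_\theta$ to be non-increasing (a version of $p_\theta/p_{\theta_0}$, which is non-increasing in $T$ since $\theta\le\theta_0$). Fix $x_1<x_2$ in $(0,1)$. The set identity established in the proof of Lemma~\ref{lem:fuzzy} (with a general threshold in place of $\alpha$) gives $\{T(X) > G^{-1}(x_1)\}\subseteq\{p\le x_1\}$ and $\{T(X) < G^{-1}(x_1)\}\subseteq\{p> x_1\}$. The second inclusion forces $T(X)\ge G^{-1}(x_1)$ on the event $\{p=x_1\}$, and the first forces $T(X)\le G^{-1}(x_1)$ on $\{p=x_2\}$ (which is disjoint from $\{p\le x_1\}$). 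Since $g_\theta$ is non-increasing,
$$\psi_0(x_1) = E_{P_{\theta_0}}[g_\theta(T(X))\mid p = x_1] \;\le\; g_\theta(G^{-1}(x_1)) \;\le\; E_{P_{\theta_0}}[g_\theta(T(X))\mid p = x_2] = \psi_0(x_2),$$
so $\psi_0$ is non-decreasing and the proof is complete (the case $\theta=\theta_0$ is immediate, with both sides equal to $\int_0^z s/\int_0^1 s$).

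I expect the main obstacle to be the measure-theoretic care in the conditioning-on-$p$ step: making rigorous that the arbitrary, possibly non-monotone weight $s$ drops out upon conditioning on $p$ precisely because $s$ is a function of $p$ (a disintegration of $\omega$ over the uniform law of $p$), and that the conditional expectations $\psi_0(\cdot)$ and the "forcing'' statements about $T(X)$ on $\{p=x_i\}$ are legitimate via regular conditional distributions. Atoms of $T(X)$ and flat stretches of $G$ require no special treatment on this route, since we never invert $G$ on the data — we only use the one-sided set inclusions from Lemma~\ref{lem:fuzzy} — but checking those inclusions in the remaining boundary cases is the routine bookkeeping the sketch defers.
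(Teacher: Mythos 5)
Your proposal is correct and is essentially the paper's own argument: it rests on the same two set inclusions extracted from \Cref{lem:fuzzy} (that $p \leq z$ forces $T(X) \geq G^{-1}(z)$ while $p > z$ forces $T(X) \leq G^{-1}(z)$), the same non-increasing version $g_{\theta}$ of the likelihood ratio, and the same split-at-$z$ comparison against the constant $g_{\theta}(G^{-1}(z))$ — cross-multiplying your displayed inequality is exactly the algebra in the paper's reciprocal computation, and your use of \Cref{lem:cond_dist} and \Cref{lem:uniform} matches the paper's. The only difference is cosmetic: routing the comparison through $\psi_0(x) = E_{\theta_0}[g_{\theta}(T(X)) \mid p = x]$ adds the regular-conditional-distribution bookkeeping you yourself flag (and which can be skipped entirely by bounding the $s(p)$-weighted integrals over $\{p \leq z\}$ and $\{p > z\}$ directly, as the paper does with the measure $\omega$), and the paper additionally verifies $P_{\theta}(S=1) > 0 \implies P_{\theta_0}(S=1) > 0$, a small step you use implicitly when writing $P_{\theta_0}(p \leq z \mid S=1) = \int_0^z s(x)\,dx / \int_0^1 s(x)\,dx$.
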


\begin{proof}

First we show that $P_{\theta}(S= 1) > 0 \implies P_{\theta_0}(S=1) > 0$. We show the contrapositive. If 
\begin{align*}
    0 = P_{\theta_0}(S=1) &= E_{\theta_0}[E_{\theta_0}[S | p]] \\
                          &= E_{\theta_0}[s(p)]\\
                          &= E_{\theta_0}[s(T(X), U_{aux})],
\end{align*} 
then $s(T(X), U_{aux}) = 0$ a.e. under $P_{\theta_0}$. Therefore,
\begin{align*}
    P_{\theta}(S=1) = E_{\theta}[s(T(X), U_{aux})] = E_{\theta_0}[g_{\theta}(T(X))s(T(X), U_{aux})] = 0.
\end{align*}

Now, fix $z \in (0, 1)$. If $z$ is such that $P_{\theta}(p \leq z | S = 1)  = 0$ then the desired inequality is trivial. To handle the non-trivial case, we recall \Cref{lem:fuzzy} and note three facts:
\begin{itemize}
    \item If $(t, u) \in m^{-1}([0, z])$ then $t \geq G^{-1}(z)$,
    \item If $(t, u) \in m^{-1}((z, 1])$ then $t \leq G^{-1}(z)$,
    \item The sets $m^{-1}([0, z])$ and $m^{-1}((z, 1])$ are disjoint.
\end{itemize}
Thus,
\begin{align*}
     \frac{1}{P_{\theta}(p \leq z | S = 1)} &= \frac{\int_{m^{-1}([0, 1])} g_{\theta}(t) s(t, u) \omega(dt, du) }{\int_{m^{-1}([0, z])} g_{\theta}(t) s(t, u) \omega(dt, du) }\\
                                            &= \frac{\int_{m^{-1}([0, z])} g_{\theta}(t) s(t, u) \omega(dt, du) + \int_{m^{-1}((z, 1])} g_{\theta}(t)  s(t, u)\omega(dt, du) }{\int_{m^{-1}([0, z])} g_{\theta}(t) s(t, u) \omega(dt, du) }\\
                                            &= 1 + \frac{\int_{m^{-1}((z, 1])} g_{\theta}(t) s(t, u) \omega(dt, du)}{\int_{m^{-1}([0, z])} g_{\theta}(t) s(t, u) \omega(dt, du)}\\
                                            &\geq 1 + \frac{g_{\theta}(G^{-1}(z))  \int_{m^{-1}((z, 1])}  s(t, u) \omega(dt, du)}{g_{\theta}(G^{-1}(z)) \int_{m^{-1}([0, z])} s(t, u) \omega(dt, du)} \\
                                            &= 1 + \frac{\int_{m^{-1}((z, 1])} s(t, u) \omega(dt, du)}{ \int_{m^{-1}([0, z])} s(t, u) \omega(dt, du)} \\
                                            &= \frac{1}{P_{\theta_0}(p \leq z | S = 1)}, 
\end{align*}
where to finish we have noted that $g_{\theta_0}(t) = 1$ almost everywhere in the measure $\omega$. 
\end{proof}

\subsection{Exponential families}
\label{sec:one_sided_exp_fam_appdx}

Suppose we observe data $X \in \R^m$ from an exponential family $P_{\theta}$ parameterized by $\theta \in \R^n$ i.e., under $P_{\theta}$ the data $X$ has density  
\begin{equation*}
    g_{\theta}(x) = \exp( \theta_1 T_1(x) + \dots + \theta_n T_n(x) - \psi(\theta) ) g(x) 
\end{equation*}
with respect to some carrier measure $\mu$. We consider the problem of testing $H_0: \theta_i \leq \theta_{0, i}$.

The UMPU test for $H_0: \theta_i \leq \theta_{0, i}$ is valid conditional on $T_{-i}(X)$. More specifically, Theorem 4.4.1 of \cite{Lehmann} tells us that any test of the form 

\begin{equation*}
    \label{eq:exp_fam_test}
    \phi(t_i, t_{-i}) = \begin{cases}
        1 &\text{if } t_i > C_0(t_{-i})  \\
        \gamma(t_{-i}) &\text{if } t_i = C_0(t_{-i})   \\
        0 & \text{otherwise }
    \end{cases}
\end{equation*}
where the functions $\gamma(\cdot)$ and $C_0(\cdot)$ satisfy 
\begin{equation*}
    E_{\theta_i = \theta_{0, i}}[\phi(T_i(X), t_{-i}) | T_{-i}(X)] = \alpha \text{ a.e. under } P_{\theta_i = \theta_{0, i}}
\end{equation*}
is UMPU for testing $H_0: \theta_i \leq \theta_{0, i}$. Lemma 2.7.2 of \cite{Lehmann} tells us that the conditional distribution of $T_{i}(X)$ given $T_{-i}(X) = t_{-i}$ admits a density 
\begin{equation*}
    g_{\theta_i, t_{-i}}(t_i) = \exp(\theta_i t_i - \tilde{\psi}(\theta_i))  
\end{equation*}
with respect to some base measure $\mu_{t_{-i}}$. This density has an MLR in $t_i$ (to be specific, we are imagining observing $T_i(X)$ from its conditional distribution $T_{-i}(X)$, and the map $T(\cdot)$ from the previous sub-section is actually the identity). Hence, a concrete UMPU test is to just run our UMP test from the previous section using the conditional distribution given $T_{-i}(X) = t_{-i}$. In particular, our work from the previous section implies that it is UMPU to reject using the p-value

\begin{equation}
    \label{eq:ump_exp_fam}
    p = G^{+}(T_i(X)|T_{-i}(X)) + U_{aux}(G(T_i(X)|T_{-i}(X)) - G^{+}_i(T_i(X)|T_{-i}(X))),
\end{equation}
where $U_{aux}$ is an uniform random variable independent of the data and 
\begin{equation*}
    G(t_i | t_{-i}) = P_{\theta_0}(T_i(X) \geq t | T_{-i}(X) = t_{-i}) \qquad G^{+}(t_i | t_{-i}) = \lim_{u \downarrow t_i} G(u| t_{-i}).
\end{equation*}

We argue that this p-value is selectively dominant given $Z$. Fix a distribution $P$ in the null. If we consider some selection function $s(x, z)$ such that $P(S= 1) = 0$, then \eqref{eq:selective_dominance} holds trivially. Instead consider the case that $P(S= 1) > 0$ and let $f_{z}(x)$ denote the conditional PDF of $p$ given $Z=z$. \Cref{prop:mlr_sel_dom} along with \Cref{lem:uniform} tells us exactly that for $z$ such that $\int_0^1 s(x, z) f_z(x) dx > 0$,
\begin{equation*}
    \frac{\int_0^t s(x, z) f_z(x)}{\int_0^1 s(x, z) f_z(x)} \leq \frac{\int_0^t s(x, z) }{\int_0^1 s(x, z)}  \text{ for all } t \in [0, 1].
\end{equation*}
Arguments in \Cref{sec:adjustment_proof} and \Cref{sec:density_proof} then tell us that the following three facts are true a.e. under $P(\cdot | S=1)$: (1) $\int_0^1 s(x, Z) f_Z(x) dx > 0$ , (2) the left-hand side of the above equals the left-hand side of \eqref{eq:selective_dominance}, and (3) the right-hand side of the above equals the right-hand side of \eqref{eq:selective_dominance}. Combined these facts imply that $p$ is selectively dominant given $Z$.

\subsection{Monotonicity of selective MLR p-values}
\label{sec:one_sided_monotone_appdx}

In this sub-section, we consider data $(X, Z)$ where the conditional distribution $X | Z= z \sim P_{\theta, z}$ is parameterized by $\theta \in \R$ and has an MLR in $T(x)$. Because we do everything conditional on $Z=z$, without loss of generality we can just work with $X$ and understand that the results will hold a.e. over $Z$. Letting
\begin{equation*}
    G^{\theta_0}(t) = P_{\theta_0}(T(X) \geq t) \qquad G^{\theta_0, +}(t) = \lim_{u \downarrow t} G^{\theta_0}(u)
\end{equation*}
we let 
\begin{equation*}
    p^{\theta_0} = G^{\theta_0, +}(T(X)) + U_{aux}(G^{\theta_0}(T(X)) - G^{\theta_0, +}(T(X)))
\end{equation*}
be the UMP p-value for testing $H_{0}: \theta \leq \theta_0$. Again, $U_{aux}$ is a uniform random variable independent of the data. Let 
\begin{equation*}
    m^{\theta_0}(t, u) = G^{\theta_0, +}(t) + u(G^{\theta_0}(t) - G^{\theta_0, +}(t))
\end{equation*}
be the map such that $m^{\theta_0}(T(X), U) = p^{\theta_0}$. 

Considering a class of selection functions $s^{\theta_0}(x)$ such that $\int_0^1 s^{\theta_0}(x) > 0$ (when $Z$ is present we need $\int_0^1 s^{\theta_0}(x, Z) > 0$ almost surely), we want to show that the selective p-values from \Cref{thm:adjustment}, 
\begin{equation*}
    p^{\theta_0}_{sel} = \frac{\int_{0}^{p^{\theta_0}} s^{\theta_0}(x) dx}{\int_{0}^{1} s^{\theta_0}(x) dx},
\end{equation*}
are monotone non-decreasing in $\theta_0$. Specifically, we will show that this is true when the selection function $s^{\theta_0}(x)$ is independent of $\theta_0$ once written in terms of the data (i.e., selection can be stated in terms of the data without reference to the null parameter being tested). Formally, we establish monotonicity whenever there exists $\tilde{s}(\cdot, \cdot)$, independent of $\theta_0$, such that 
\begin{equation*}
    s^{\theta_0}(x) = \tilde{s}(t, u) \text{ for all } t, u \text{ with } x = m^{\theta_0}(t, u), 
\end{equation*}

Like before, we now consider some fixed $\theta \leq \theta_0$ and let $g_{\theta}(T(x))$ be the non-increasing function that equals the likelihood ratio $p_{\theta}(x)/p_{\theta_0}(x)$ on the support. The next lemma allows us to rewrite the selective p-value in a useful way. 

\begin{lemma} Letting $E_{r, s} = \{(t, u) : t > r \text{ or } t = r \text{ and } u \leq s \}$. For any $t$ and $u$ such that $m^{\theta}(t, u) = y$,
\begin{equation*}
    \int_0^y s^{\theta}(x) dx = \int_{E_{t, u}} g_{\theta}(t) \tilde{s}(t, u) \omega(dt, du). 
\end{equation*}
\end{lemma}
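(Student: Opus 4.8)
The plan is to transport the left-hand integral, which is taken against Lebesgue measure on $[0,1]$, into the $(t,u)$-coordinate system in which $p^{\theta} = m^{\theta}(T(X),U_{aux})$ lives, and then to recognize the resulting sub-level event as $E_{t,u}$. First I would apply \Cref{lem:uniform}, with $\theta$ playing the role of the null-boundary parameter there, to conclude that $p^{\theta}\sim\mathrm{Unif}([0,1])$ under $P_{\theta}$. Since $x\mapsto I(x\le y)\,s^{\theta}(x)$ is bounded and measurable, this gives
\begin{equation*}
\int_0^y s^{\theta}(x)\,dx = E_{\theta}\!\left[I(p^{\theta}\le y)\,s^{\theta}(p^{\theta})\right] = E_{\theta}\!\left[I\!\big(m^{\theta}(T(X),U_{aux})\le y\big)\,\tilde s(T(X),U_{aux})\right],
\end{equation*}
where the second equality uses the defining property $s^{\theta}(m^{\theta}(t,u))=\tilde s(t,u)$ together with $p^{\theta}=m^{\theta}(T(X),U_{aux})$.

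The crux is then the set identity: modulo $\omega$-null sets, $\{(t',u'):m^{\theta}(t',u')\le y\}=E_{t,u}$ for every $(t,u)$ with $m^{\theta}(t,u)=y$. This is the same three-case argument already used in the proofs of \Cref{lem:fuzzy} and \Cref{prop:mlr_sel_dom}. Writing $m^{\theta}(t',u')=(1-u')G^{\theta,+}(t')+u'G^{\theta}(t')\in[G^{\theta,+}(t'),G^{\theta}(t')]$, and recalling $G^{\theta}$ is non-increasing with $G^{\theta,+}(t')=P_{\theta}(T(X)>t')$: if $t'>t$ then $m^{\theta}(t',u')\le G^{\theta}(t')\le G^{\theta,+}(t)\le m^{\theta}(t,u)=y$, so $(t',u')$ lies in both sets; if $t'<t$ then $m^{\theta}(t',u')\ge G^{\theta,+}(t')\ge G^{\theta}(t)\ge y$, with strict inequality except on a set of $t'$ that has $\omega$-measure zero (equality would force $G^{\theta,+}$ to be constant on an interval abutting $t$, hence a $\nu$-null interval since $P_{\theta}$ and $P_{\theta_0}$ share support), so $(t',u')$ lies in neither; and on the slice $t'=t$ the map $u'\mapsto m^{\theta}(t,u')$ is non-decreasing, so $m^{\theta}(t,u')\le y\iff u'\le u$ (and that slice is $\nu$-null when $G^{\theta}$ has no atom at $t$). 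Because the $P_{\theta}$-law of $(T(X),U_{aux})$ is absolutely continuous with respect to $\omega$, the identity also holds modulo $P_{\theta}$-null sets, and in particular $E_{t,u}$ does not depend on the chosen representative $(t,u)$.

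Combining the two steps, and invoking \Cref{lem:pi_lambda} extended from indicator sets to bounded measurable integrands by the standard machine (exactly as in the proof of \Cref{lem:cond_dist}), I get
\begin{equation*}
\int_0^y s^{\theta}(x)\,dx = E_{\theta}\!\left[I\!\big((T(X),U_{aux})\in E_{t,u}\big)\,\tilde s(T(X),U_{aux})\right] = \int_{E_{t,u}} g_{\theta}(t')\,\tilde s(t',u')\,\omega(dt',du'),
\end{equation*}
which is the asserted identity (the $t,u$ appearing inside the integral being bound variables, matching the statement's mild abuse of notation). I expect the only delicate point to be the boundary bookkeeping in the middle step — namely checking that $\omega\big(\{m^{\theta}(t',u')=y\}\setminus E_{t,u}\big)=0$ — but this reduces to the continuity-point argument already carried out for \Cref{lem:fuzzy}, using that $g_{\theta}>0$ $\nu$-almost everywhere.
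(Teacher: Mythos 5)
Your proposal is correct and follows essentially the same route as the paper's proof: it rests on \Cref{lem:uniform} (uniformity of $p^{\theta}$ under $P_{\theta}$), the observation that $E_{t,u}$ and $\{(t',u'): m^{\theta}(t',u')\leq y\}$ differ only on a null set, and \Cref{lem:pi_lambda} combined with the standard machine. The only cosmetic difference is where the standard machine is invoked (you extend \Cref{lem:pi_lambda} to bounded integrands, while the paper runs it over indicator selection functions $s^{\theta}=I(\cdot\in B_{\theta})$), and your three-case verification of the null symmetric difference is actually more detailed than the paper's one-line assertion of that fact.
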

\begin{proof}
   First we handle the case that $s^{\theta}(x) = I(x \in B_{\theta})$ is an indicator of membership to some set. Then the left-hand side of the equation is the Lebesgue measure of the set $\{x : x \leq y \}$ intersected with $B_{\theta}$. By \Cref{lem:uniform} this is the same as the probability under $P_{\theta}$ of the p-value $p^{\theta}$ being at most $y$ and in $B_{\theta}$. By our choice of $t$ and $u$, the set difference between $E_{t, u}$ and the set $\{(t, u): m^{\theta_0}(t, u) \leq y \}$ is measure zero under $P_{\theta}$. Thus by \Cref{lem:pi_lambda} and how $\tilde{s}$ is defined, the right-hand side of the equation is also the probability  probability under $P_{\theta}$ of the p-value $p^{\theta}$ being at most $y$ and in $B_{\theta}$. We again get the result for general selection functions $s^{\theta}(x)$ by applying the standard machine. 
\end{proof}

According the the previous lemma, 
\begin{equation*}
    p^{\theta}_{sel} = \frac{\int_{0}^{p^{\theta}} s^{\theta}(x) dx}{\int_{0}^{1} s^{\theta}(x) dx} = \frac{\int_{E_{T(X), U_{aux}}} g_{\theta}(t) \tilde{s}(t, u) \omega(dt, du) }{\int g_{\theta}(t) \tilde{s}(t, u) \omega(dt, du) }
\end{equation*}
Hence it would suffice to show that for all $r$ and $s$, 
\begin{equation*}
  \frac{\int_{E_{r, s}} g_{\theta}(t) \tilde{s}(t, u) \omega(dt, du) }{\int g_{\theta}(t) \tilde{s}(t, u) \omega(dt, du) }
\end{equation*}
is monotone in non-decreasing in $\theta$. This is the subject of our next lemma.  

\begin{lemma}
    \label{prop:monotone_adjustment}
    For any $r$ and $s$, the quantity   
    \begin{equation*}
        \frac{\int_{E_{r, s}} g_{\theta}(t) \tilde{s}(t, u) \omega(dt, du) }{\int g_{\theta}(t) \tilde{s}(t, u) \omega(dt, du) }
      \end{equation*}
    is monotone non-decreasing in $\theta$. 
\end{lemma}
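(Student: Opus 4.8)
The plan is to follow the template of the proof of \Cref{prop:mlr_sel_dom}, cutting the space at the fixed point $r$ instead of at $G^{-1}(z)$. Abbreviate
\[
N_E(\theta) = \int_{E_{r,s}} g_\theta(t)\tilde s(t,u)\,\omega(dt,du), \qquad N_{E^c}(\theta) = \int_{E_{r,s}^c} g_\theta(t)\tilde s(t,u)\,\omega(dt,du),
\]
so that the quantity in question equals $N_E(\theta)/(N_E(\theta)+N_{E^c}(\theta)) = 1/(1+R(\theta))$ with $R(\theta) = N_{E^c}(\theta)/N_E(\theta)$ (the degenerate case $N_E(\theta)=0$ is handled separately below). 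Thus it suffices to show that $R(\theta)$ is non-increasing in $\theta$ over $(-\infty,\theta_0]$, i.e.\ that for $\theta_1 \le \theta_2 \le \theta_0$ we have $R(\theta_2) \le R(\theta_1)$.

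Next I would fix $\theta_1 \le \theta_2 \le \theta_0$ and write $g_{\theta_2}(t) = h(t)\,g_{\theta_1}(t)$ on the common support, where $h(t) = p_{\theta_2}(x)/p_{\theta_1}(x)$ for $T(x)=t$. Since $\theta_1 \le \theta_2$, the MLR assumption makes $h$ a non-decreasing function of $t$ — this is the one place the hypothesis enters, exactly as in \Cref{prop:mlr_sel_dom}. Now $E_{r,s} \subseteq \{t \ge r\}$ and $E_{r,s}^c \subseteq \{t \le r\}$, and the splitting of the line $\{t=r\}$ between the two sets according to whether $u \le s$ is harmless because $h$ depends on $t$ alone, so $h \equiv h(r)$ on $\{t=r\}$. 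Hence $h(t) \ge h(r)$ on $E_{r,s}$ and $h(t) \le h(r)$ on $E_{r,s}^c$ (if $r$ is outside the support of $\nu$, replace $h(r)$ by any constant separating $\{h(t): t<r\}$ from $\{h(t): t>r\}$, which exists by monotonicity of $h$). Because $g_{\theta_1}\tilde s \ge 0$ and $\omega$ is a positive measure, integrating these pointwise bounds gives $N_{E^c}(\theta_2) \le h(r)\,N_{E^c}(\theta_1)$ and $N_E(\theta_2) \ge h(r)\,N_E(\theta_1)$; dividing, the nonnegative factors $h(r)$ cancel and $R(\theta_2) \le R(\theta_1)$, which is the desired inequality.

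Finally I would tidy up the boundary cases: if $N_E(\theta)=0$ the quantity is $0$, and the inequality $N_E(\theta_2) \ge h(r)\,N_E(\theta_1)$ forces $N_E(\theta_1)=0$ whenever $h(r)>0$, so the quantity remains $0$ up to that $\theta$ and monotonicity persists (the remaining subcase $h(r)=0$ is absorbed via the separating-constant remark). The restriction to the common support is harmless because $\omega = \nu \times \lambda$ with $\nu$ the law of $T(X)$ under $P_{\theta_0}$, so $\nu$-a.e.\ $t$ lies in the support and $g_{\theta_1}(t)>0$ there, consistent with the $0/0=0$ conventions used earlier. I expect this bookkeeping of degenerate cases to be the only real obstacle; the core estimate is a one-line consequence of the monotone likelihood ratio. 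Combining this lemma with the preceding one — applied with $(r,s) = (T(X),U_{aux})$ and noting $g_{\theta_0}\equiv 1$, so that the quantity becomes $p^{\theta}_{sel}$ at $\theta$ and $p^{\theta_0}_{sel}$ at $\theta_0$ — then yields that $p^{\theta_0}_{sel}$ is non-decreasing in the null parameter $\theta_0$, which was the goal of the subsection.
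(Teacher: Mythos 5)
Your proof is correct and takes essentially the same route as the paper: split the integral over $E_{r,s}$ and its complement, use that the likelihood-ratio factor is monotone in $t$ (so it is bounded below by its value at $t=r$ on $E_{r,s}$ and above by it on $E_{r,s}^c$), and cancel that constant, with the degenerate zero-numerator case handled separately. The only (harmless) difference is that you compare arbitrary $\theta_1\le\theta_2\le\theta_0$ via the ratio $h=g_{\theta_2}/g_{\theta_1}$, whereas the paper compares each $\theta$ directly with the reference $\theta_0$ using $g_{\theta_0}\equiv 1$; both rest on the identical one-line MLR estimate.
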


\begin{proof}
    The proof strategy is the same as our earlier results. We would like to show that 

    \begin{equation*}
        \frac{\int_{E_{r, s}} g_{\theta}(t) \tilde{s}(t, u) \omega(dt, du) }{\int g_{\theta}(t) \tilde{s}(t, u) \omega(dt, du) } \leq  \frac{\int_{E_{r, s}} g_{\theta_0}(t) \tilde{s}(t, u) \omega(dt, du) }{\int g_{\theta_0}(t) \tilde{s}(t, u) \omega(dt, du) } 
    \end{equation*}

    If the numerator of the left-hand side is zero then the inequality must hold. In the other case we see that 

    \begin{align*}
        \frac{\int g_{\theta}(t) \tilde{s}(t, u) \omega(dt, du)}{\int_{E_{r, s}} g_{\theta}(t) \tilde{s}(t, u) \omega(dt, du) } &= 1 + \frac{\int_{E_{r, s}^c} g_{\theta}(t) \tilde{s}(t, u) \omega(dt, du)}{\int_{E_{r, s}} g_{\theta}(t) \tilde{s}(t, u) \omega(dt, du) } \\
        &\geq 1 + \frac{ g_{\theta}(r) \int_{E_{r, s}^c} \tilde{s}(t, u) \omega(dt, du)}{g_{\theta}(r)\int_{E_{r, s}}  \tilde{s}(t, u) \omega(dt, du) } \\
        &= 1 + \frac{  \int_{E_{r, s}^c} \tilde{s}(t, u) \omega(dt, du)}{\int_{E_{r, s}}  \tilde{s}(t, u) \omega(dt, du) }\\
        &= \frac{\int g_{\theta_0}(t) \tilde{s}(t, u) \omega(dt, du)}{\int_{E_{r, s}} g_{\theta_0}(t) \tilde{s}(t, u) \omega(dt, du) }
    \end{align*}

    where we have noted that if $(t, u) \in E_{r, s}$ then $t \geq r$ and if $(t, u) \in E_{r, s}$ then $t \leq r$, and also that $g_{\theta_0}(t) = 1$ almost everywhere in the measure $\omega$.  
\end{proof}

\section{Selecting multiple p-values for inference}
\label{sec:multiple_p_vals_appdx}

In this appendix, we generalize our selective dominance framework to allow us to select multiple p-values for inference. 

Suppose we have $k$ p-values for the nulls $H_{0, i}$ that are selectively dominant given some common $Z$. We require that these p-values are conditionally independent given $Z$. Define $k$ binary selection random variables $S_j \in \{0, 1\}$, where $S_j = 1$ when $p_j$ is selected. The relationship between $p_j$, $Z$, and $S_j$ is governed by the selection functions \footnote{Again, formally $s(p_j, Z) = E[S | p_j, Z]$.} 
\begin{equation*}
    s_j(x, z) = p(S_j = 1 | p_j = x, Z=z)
\end{equation*} 
Furthermore, we demand that the different selections happen independently, i.e., 
\begin{equation}
    \label{eq:mult_indep}
    P(S_1 = 1, \dots, S_k = 1 \mid p_1, \dots, p_k, Z) = \prod_{j = 1}^k P(S_j = 1 \mid p_j, Z) \text{ a.e. under } P 
\end{equation} 

Define the selective p-values from \Cref{thm:adjustment} $p_{sel, j} = \int_0^{p_j} s_j(x, Z)dx/\int_0^1 s_j(x, Z) dx$. Our goal is to show that these selective p-values remain independent and are also valid p-values after selection:

\begin{equation}
    \label{eq:mult_error_control}
    P_{\cap_{i=1}^k H_{0, i}}(p_{sel, 1} \leq t_1, \dots, p_{sel, j} \leq t_1 | S_1=1, \dots, S_j=1) \leq \prod_{j=1}^k t_j \text{ for all } t \in [0,1 ]^k 
\end{equation}

Fix a distribution $P$ in the global null. As this statement is vacuously true when $P( S_1=1, \dots, S_j=1) = 0$, it suffices to suppose that  $P( S_1=1, \dots, S_j=1) > 0$, in which case we can define the conditional probability measures 
\begin{equation*}
    Q(A) = P(A| S_1=1, \dots, S_k= 1) = \frac{P(A,  S_1=1, \dots, S_k=1)}{P( S_1=1, \dots, S_k=1)}
\end{equation*}
and 
\begin{equation*}
    Q_j(A) = P(A| S_j=1) = \frac{P(A, S_j=1)}{P( S_j=1)}
\end{equation*}
Note that $Q$ is absolutely continuous with respect to all the $Q_j$ and also $P$, so any statement which is true a.e. under any $Q_j$ or a.e under $P$ is also true a.e. under $Q$.

By taking expectations of \eqref{eq:mult_indep} conditional on $Z$ with respect to both sides, we find that the $S_j$ are independent given $Z$ a.e. under $P$ (and thus also a.e. under $Q$). The following equalities hold a.e. under $P$. 
\begin{align*}
    P(S_1 = 1, \dots, S_k = 1 \mid Z) &= E[P(S_1 = 1, \dots, S_k = 1 | p_1, \dots, p_k, Z) \mid Z] \\
    &= E \left[\prod_{j = 1}^k P(S_j = 1 | p_j, Z) \;\middle|\; Z\right] & \text{(\Cref{eq:mult_indep}) }\\
    &= \prod_{j=1}^k E[P(S_j = 1 | p_j, Z)] & \text{ ($p_j$ conditionally independent given $Z$)  } \\
    &= \prod_{j=1}^k P(S_j = 1 | Z)
\end{align*}
where we have used that the $p_j$ are conditionally independent given $Z$ to move the expectation inside the product. 

Now, we recall from \Cref{sec:adjustment_proof} that the selective p-value $p_{sel, j}$ is a continuous function of $p_j$ for a.e. $Z$ under $Q_j$, and therefore this is true for a.e. $Z$ under $Q$. Namely, we have for all $t \in [0, 1]^k$ that 

\begin{align*}
    Q(p_{sel, 1} \leq t_1,  \dots, p_{sel, k} \leq t_k| Z) &=  Q(p_1 \in A_{1, Z, t}, \dots, p_k \in A_{k, Z, t} |Z) \text{ a.e. under } Q
\end{align*}
To be specific, for $z$ such that $\int_0^1 s_j(x, z)dx > 0 $, the $A_{j, z, t}$ are Borel sets such that $\int_0^y s_j(x, z)/\int_0^1 s_j(x, z) \leq t_j \iff y \in A_z$. Now, the following equalities hold for all $t \in [0, 1]^k$ a.e. under $Q$:

\begin{align*}
    &E_Q[ I(p_1 \in A_{1, Z, t}) \dots I(p_k \in A_{k, Z, t}) \mid Z] \\
    &= \frac{E_P[ I(p_1 \in A_{1, Z, t})S_1 \dots I(p_k \in A_{k, Z, t})S_k \mid Z] }{E_P[ S_1\dots S_k \mid Z]} & \text{(\Cref{lem:ce})}\\
    &= \frac{E_P[I(p_1 \in A_{1, Z, t}) \dots I(p_k \in A_{k, Z, t}) E_P[ S_1\dots S_k \mid p_1, \dots, p_k, Z] | \mid Z ]  }{E_P[ S_1\dots S_k \mid Z]}  & \text{(LoTE)}\\
    &= \frac{E_P[I(p_1 \in A_{1, Z, t}) E_P[S_1= 1 | p_k, Z] \dots I(p_k \in A_{k, Z, t}) E_P[S_k= 1 | p_k, Z]  | \mid Z ]  }{E_P[ S_1\dots S_k \mid Z]} & \text{(\Cref{eq:mult_indep})}\\
    &= \prod_{i=1}^k \frac{E_P[ I(p_i \in A_{i, Z, t}) E_P[S_i= 1 | p_i, Z]| Z]}{E_P[S_{i, Z} | Z]} & \text{(earlier result)}\\\\
    &= \prod_{i=1}^k \frac{E_P[ I(p_i \in A_{i, Z, t})S_i| Z]}{E_P[S_i | Z]} & \text{(LoTE)} \\
    &=  \prod_{i=1}^k E_{Q_i}[I(p_i \in A_{i, Z, t})| Z] & \text{(\Cref{lem:ce})}\\
    &= \prod_{i=1}^k E_{Q_i}[I(p_{sel, i} \leq t_i)| Z] & \text{(earlier reasoning)}\\
    &\leq \prod_{i=1}^k t_i & \text{(\Cref{thm:adjustment})}
\end{align*}
Of course, the fact that 
\begin{equation*}
    P(p_{sel, 1} \leq t_1, \dots, p_{sel, k} \leq t_k | Z, S_1=1, \dots, S_j=1) \leq \prod_{j=1}^k t_j \text{ for all } t \in [0,1 ]^k  \text{ a.e. under } Q 
\end{equation*}
is sufficient to imply \eqref{eq:mult_error_control}. 

\subsection{Proof of \Cref{cor:cfisher}}

Suppose we have $n$ independent and selectively dominant p-values $p_i$ for the nulls $H_{0, i}$. We consider a subset $p_{i_1}, \dots, p_{i_k}$ of them to use for inference. Denoting the remaining as $Z = (p_{i_{k+1}}, \dots, p_{i_n})$, we have that $p_{i_1}, \dots, p_{i_k}$ are conditionally independent given $Z$. We select the $p_{i_j}$ for $j \leq k$ to use for inference when they are smaller than all the p-values in $Z$, i.e., we use the selection functions $s_{j}(x, z) = I(x \leq \min_k z_k)$. For these selection functions, the condition \Cref{eq:mult_indep} is obviously satisfied, and \Cref{eq:mult_error_control} tells us that, under the global null $\cap_{j=1}^k H_{0, i_j} $, the selective p-values $p_{i_j}/\min_{\ell \not \in \{i_1, \dots, i_k \} } p_{\ell}$ are independent and valid p-values after selection. In particular, if we denote the $1-\alpha$ quantile of the $\chi^2_{d}$ distribution as $Q_d$, a consequence of \Cref{eq:mult_error_control} is that 
\begin{equation}
    \label{eq:cfisher_error_control}
    P_{\cap_{j=1}^k H_{0, i_j} }(-2 \sum_{j=1}^k \log(p_{i_j}/ \min_{\ell \not \in \{i_1, \dots, i_k\}} p_{\ell} ) \geq Q_{2k}|S_{i_1} = 1, \dots,  S_{i_k} = 1)
\end{equation}

To show conditional error control, we first note that if $ \cap_{j = 1}^k H_{0, i_j}$ is false, then trivially
\begin{equation*}
    P(\text{falsely reject } \cap_{i=1}^k H_{0, (i)} |S_{i_1} = 1, \dots,  S_{i_k} = 1) = 0 \leq \alpha.
\end{equation*} 
For the case that $\cap_{j = 1}^k H_{0, i_j}$ is true, then the event $S_{i_1} = 1, \dots,  S_{i_k} = 1$ is the same as selecting the $p_{i_1}, \dots, p_{i_k}$ for inference in \eqref{eq:cfisher_error_control}, so 
\begin{align*}
    &P(\text{falsely reject } \cap_{i=1}^k H_{0, (i)} |S_{i_1} = 1, \dots,  S_{i_k} = 1)\\
    &=P(-2 \sum_{j=1}^k \log(p_{(j)}/p_{(k+1)}) \geq Q_{2k} |S_{i_1} = 1, \dots,  S_{i_k} = 1)\\
    &=  P(-2 \sum_{j=1}^k \log(p_{i_j}/ \min_{\ell \not \in \{i_1, \dots, i_k\}} p_{\ell} ) \geq Q_{2k}|S_{i_1} = 1, \dots,  S_{i_k} = 1)\\
    &\leq \alpha 
\end{align*}
Now, letting $J$ denote the set of indices of the bottom $k$ p-values, It follows from the definition of conditional expectation that $P(A| J) = P(A | S_j = 1 \text{ for all } j \in J)$ a.e. under $P$, and the above therefore implies that $P(\text{falsely reject } \cap_{i=1}^k H_{0, (i)} | J)  \leq \alpha$ a.e. under $P$. Marginal error control then follows from the law of total expectation.

To see that the same applies for the global null repeat the same argument and replace any mentions of $\cap_{j = 1}^k H_{0, i_j}$ and $\cap_{j=1}^k H_{0, (j)}$ with $\cap_{i=1}^n H_{0, i}$. 

\subsection{Proof of \Cref{cor:tfisher}}

Suppose we have $n$ independent and selectively dominant p-values $p_i$ for the nulls $H_{0, i}$. We select $p_j$ to use for inference when it is smaller than some pre-specified threshold $\tau_j$, i.e., we use the selection functions $s_{j}(x) = I(x \leq \tau_j)$. Let $S_j$ be the indicator that the $j$th p-value is selected and $J$ be the random set of indices of the selected p-values. For these selection functions, the condition \eqref{eq:mult_indep} is obviously satisfied, and \eqref{eq:mult_error_control} tells us that, under the global null $\cap_{j=1}^k H_{0, i_j} $, the selective p-values $p_{i_j}/\tau_j$ are independent and valid p-values after selection. In particular, if we denote the $1-\alpha$ quantile of the $\chi^2_{d}$ distribution as $Q_d$, a consequence of \Cref{eq:mult_error_control} is that 
\begin{equation*}
    \label{eq:tfisher_error_control}
    P_{\cap_{j=1}^k H_{0, i_j} }(-2 \sum_{j=1}^k \log(p_{i_j}/ \tau_j ) \geq Q_{2k}|S_{i_1} = 1, \dots,  S_{i_k} = 1)
\end{equation*}

To show conditional error control, we first note that if $ \cap_{j = 1}^k H_{0, i_j}$ is false, then trivially
\begin{equation*}
    P(\text{falsely reject } \cap_{j \in J}^k H_{0, j} |S_{i_1} = 1, \dots,  S_{i_k} = 1) = 0 \leq \alpha.
\end{equation*} 
For the case that $\cap_{j = 1}^k H_{0, i_j}$ is true, then the event $S_{i_1} = 1, \dots,  S_{i_k} = 1$ is the same as selecting the $p_{i_1}, \dots, p_{i_k}$ for inference in \eqref{eq:tfisher_error_control}, so 
\begin{align*}
    &P(\text{falsely reject } \cap_{j \in J}^k H_{0, j} |S_{i_1} = 1, \dots,  S_{i_k} = 1)\\
    &=P(-2 \sum_{j\in J} \log(p_{j}/ \tau_j ) \geq Q_{2k} |S_{i_1} = 1, \dots,  S_{i_k} = 1)\\
    &=  P(-2 \sum_{j=1}^k \log(p_{i_j}/ \tau_j ) \geq Q_{2k}|S_{i_1} = 1, \dots,  S_{i_k} = 1)\\
    &\leq \alpha 
\end{align*}
It follows from the definition of conditional expectation that $P(A| J) = P(A | S_j = 1 \text{ for all } j \in J)$ a.e. under $P$, and the above therefore implies that $P(\text{falsely reject } \cap_{j \in J} H_{0, j} | J)  \leq \alpha$ a.e. under $P$. Marginal error control then follows from the law of total expectation.

To see that the same applies for the global null repeat the same argument and replace any mentions of $\cap_{j = 1}^k H_{0, i_j}$ and $\cap_{j \in J} H_{0, j}$ with $\cap_{i=1}^n H_{0, i}$. 

\section{Proofs}
\label{sec:proofs_appdx}

\subsection{Proof of \Cref{thm:adjustment}}
\label{sec:adjustment_proof}

First we prove a useful lemma.

\begin{lemma}
    \label{lem:ce}
    For a measure $P$ and an event $B$ with $P(B)> 0$ define the conditional measure $Q(E) = P(E| B) = P(E \cap B)/P(B)$. Then, for a random variable $X$, 
    \begin{equation*}
        Q(A | X) = \frac{P(A, B | X) }{ P(B | X) } \text{ a.e. under } Q
    \end{equation*} 
\end{lemma}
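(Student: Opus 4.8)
The statement is the standard ``disintegration respects conditioning'' identity, and the natural route is to verify that the right-hand side $\phi(X) := P(A, B \mid X)/P(B\mid X)$ satisfies the defining property of the conditional expectation $Q(A\mid X) = E_Q[I_A \mid X]$. Concretely, $\phi(X)$ is manifestly $\sigma(X)$-measurable, so it suffices to check that $\int_C I_A\, dQ = \int_C \phi(X)\, dQ$ for every $C \in \sigma(X)$; uniqueness of conditional expectation (under $Q$) then gives $Q(A\mid X) = \phi(X)$ a.e.\ under $Q$. By a monotone-class / $\pi$-$\lambda$ argument it is enough to take $C = \{X \in D\}$ for Borel $D$.

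\textbf{Key steps.} First I would dispose of the well-definedness issue: on the event $\{P(B\mid X) = 0\}$ the ratio $\phi$ is a $0/0$ expression, but this event is $Q$-null, since
\[
Q\big(P(B\mid X) = 0\big) = \frac{1}{P(B)}\,E_P\big[I_B\, I(P(B\mid X)=0)\big] = \frac{1}{P(B)}\,E_P\big[I(P(B\mid X)=0)\,P(B\mid X)\big] = 0,
\]
using the tower property in the middle equality. So $\phi(X)$ is defined $Q$-a.e.\ and we may use any convention on the exceptional set. Second, for $C = \{X \in D\}$ I would compute the left side directly from the definition of $Q$ and the tower property under $P$:
\[
\int_C I_A\, dQ = \frac{P(A \cap B \cap C)}{P(B)} = \frac{1}{P(B)}\,E_P\big[I_C\, P(A,B\mid X)\big].
\]
Third, I would compute the right side the same way, pulling the $\sigma(X)$-measurable factor $\phi(X) I_C$ out of the inner conditional expectation:
\[
\int_C \phi(X)\, dQ = \frac{1}{P(B)}\,E_P\big[I_B\, I_C\, \phi(X)\big] = \frac{1}{P(B)}\,E_P\big[I_C\, \phi(X)\, P(B\mid X)\big].
\]
Finally I would observe that $\phi(X)\,P(B\mid X) = P(A,B\mid X)$ holds $P$-a.e.: on $\{P(B\mid X)>0\}$ this is the definition of $\phi$, and on $\{P(B\mid X)=0\}$ we have $0 \le P(A,B\mid X) \le P(B\mid X) = 0$, so both sides vanish there. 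Substituting this identity shows the two integrals agree for all $C \in \sigma(X)$, which completes the proof.

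\textbf{Main obstacle.} The only real subtlety is bookkeeping around the null set $\{P(B\mid X) = 0\}$ and keeping track of whether a given a.e.\ statement is meant under $P$ or under $Q$; since $Q \ll P$, any $P$-a.e.\ statement is also $Q$-a.e., which is what lets the argument close cleanly. Everything else is a routine application of the tower property and the fact that $\sigma(X)$-measurable functions factor through conditional expectations given $X$.
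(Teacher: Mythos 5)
Your proposal is correct and follows essentially the same route as the paper's proof: verify that the ratio satisfies the defining property of $E_Q[I_A\mid X]$ by passing from $Q$ to $P$ via the density $I_B/P(B)$, applying the tower property, and noting that $\{P(B\mid X)=0\}$ is $Q$-null. The only difference is cosmetic bookkeeping — you handle the exceptional set with a convention and the $P$-a.e.\ identity $\phi(X)P(B\mid X)=P(A,B\mid X)$, while the paper carries the indicator $I(P(B\mid X)>0)$ through the computation.
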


\begin{proof}

    First note that $E_P[I( P(B| X) = 0 )  I(B)] = E_P[I( P(B| X) = 0 )  E_P[ I(B) | X] ] = 0$, so $P(B | X) > 0$ a.e. on the set $I(B)$, Thus it must be the case that $P(B | X) > 0 $ a.e. under $Q$. Now we can compute for any Borel $C$ that  
    \begin{align*}
    &E_Q\left[\frac{P(A, B | X) }{ P(B | X) } I(P(B|X) > 0) I(X \in C) \right] \\
    &= E_P\left[\frac{P(A, B | X) }{ P(B | X) } I(P(B| X) > 0) I(X \in C) I(B)\right] / P(B)  & \text{(standard machine)}\\
    &= E_P\left[\frac{P(A, B | X) }{ P(B | X) } I(P(B| X) > 0) I(X \in C) P(B \mid X)\right] / P(B) & \text{(tower property)}  \\
    &= E_P[P(A, B \mid X) I(P(B|X) > 0) I(X \in C)] / P(B) & \\
    &= P(A, B, P(B|X) > 0, X \in C) / P(B) & \text{(definition of conditional expectation)}\\
    &= Q(A, P(B|X) > 0, X \in C) & \text{ (definiton of $Q$)}\\
    &= E_Q[I(A) I(X \in C)] & \text{ ($P(B|X)$ positive $Q$ a.e.)}\\ 
    &= E_Q[Q(A \mid X) I(X \in C)]. & \text{(definition of conditional expectation)}
    \end{align*}

    Thus $\frac{P(A, B | X) }{ P(B | X) } I(P(B|X) > 0)$ satisfies the definition of the conditional expectation of $I(A)$ given $X$ under $Q$. Since $I(P(B|X) > 0) = 1$ a.e. under $Q$, we are done.  

\end{proof}

Equipped with this lemma, we can provide a proof. 

Fix a probability distribution $P$ in $H_0$ and let $f_z(x)$ be the conditional PDF of $p$ given $Z=z$. If $P(S=1) = 0$, then our convention is that $P(p_{sel} \leq \alpha | Z, S=1)$ and $P(p_{sel} \leq \alpha | S=1)$ are identically zero, so the result is immediate. 

Now consider the non-trivial case, where $P(S=1) > 0$ and the conditional probability measure $Q(A)= P(A | S=1) = P(A, S=1)/P(S=1)$ is well defined. We first show that \eqref{eq:adjusted_error_control_cond} holds. 

We start by showing that $P(S'=1) > 0$ also. First, we compute 
\begin{equation*}
    E_P[S | Z] = E_P[ E_P[S|p, Z]| Z] = E_P[s(p, Z)| Z] = \int_0^1  s(x, Z)f_Z(x) dx 
\end{equation*}
and likewise
\begin{equation*}
    E_P[S' | Z] = E_P[ E_P[S|U, Z]| Z] = E_P[s(U, Z)| Z] = \int_0^1 s(x, Z) dx. 
\end{equation*}
Whenever $\int_0^1 f_z(x) s(x, z) dx > 0$, it must be the case this $z$ that the function $s(x, z)$ is not zero Lebesgue almost everywhere in $x$. Therefore $\int_0^1 s(x, z) dx > 0 \implies \int_0^1 f_z(x) s(x, z) dx > 0 $. Thus we have both that 
\begin{equation*}
    E_P[S|Z] > 0 \implies E_P[S'| Z] > 0 \text{ everywhere } 
\end{equation*}
and the contrapositive
\begin{equation*}
    E_P[S'|Z] = 0 \implies E_P[S| Z] = 0 \text{ everywhere } P 
\end{equation*}

Now, since $P(S = 1) = E_P[E_P[S| Z]]$, it cannot be the case that $E_P[S|Z] = 0$ a.e. under $P$.  By the first implication above we must then also have $P(S' = 1) = E_P[E_P[S'| Z]] > 0$ as well. Thus we can also define the conditional probability measure $Q'(A) = P(A | S'=1) = P(A, S'=1)/P(S'=1)$. Now, with both these probability measures at our disposal, we prove two lemmas. 

\begin{lemma}
    \label{lem:pos_selection}
    $E_P[S|Z]> 0$ a.e. under $Q$ and $E_P[S'|Z] > 0$ a.e. under $Q'$.
\end{lemma}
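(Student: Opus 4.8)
The plan is to mimic the opening observation in the proof of \Cref{lem:ce}, which already handled the analogous statement $P(B\mid X) > 0$ a.e.\ under the conditioned measure: a nonnegative random variable cannot place positive mass on the event where its conditional expectation vanishes.

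Concretely, I would first apply the tower property to compute
\begin{equation*}
    E_P\bigl[I(E_P[S\mid Z] = 0)\,S\bigr] = E_P\bigl[I(E_P[S\mid Z] = 0)\,E_P[S\mid Z]\bigr] = 0,
\end{equation*}
since the integrand on the right is identically zero. Because $S \geq 0$, this forces $I(E_P[S\mid Z] = 0)\,S = 0$ a.e.\ under $P$, i.e.\ the event $\{S = 1\}\cap\{E_P[S\mid Z] = 0\}$ is $P$-null. Since $Q(\cdot) = P(\,\cdot \mid S=1)$ is absolutely continuous with respect to $P$ and assigns all of its mass to $\{S=1\}$, it follows that $Q(E_P[S\mid Z] = 0) = 0$, which is exactly the assertion that $E_P[S\mid Z] > 0$ a.e.\ under $Q$.

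The second claim is then obtained by running the identical argument with $S$ replaced by $S'$ (with $Z$ unchanged) and $Q$ replaced by $Q'(\cdot) = P(\,\cdot\mid S'=1)$; note that $P(S'=1) > 0$ was established in the paragraph preceding the lemma, so $Q'$ is well defined, and $Q'$ is again absolutely continuous with respect to $P$. I do not anticipate a genuine obstacle here: the only point requiring a moment's care is the absolute continuity of $Q$ and $Q'$ with respect to $P$, which is immediate from their definitions as conditional measures, ensuring that $P$-null sets remain null under $Q$ and $Q'$.
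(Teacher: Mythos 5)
Your proposal is correct and follows essentially the same route as the paper: the paper likewise computes $P(E_P[S\mid Z]=0,\,S=1)=E_P[I(E_P[S\mid Z]=0)\,S]=E_P[I(E_P[S\mid Z]=0)\,E_P[S\mid Z]]=0$ and then concludes via the definition of $Q$, handling $S'$ and $Q'$ by the identical argument. Your slight detour through nonnegativity and absolute continuity is equivalent to the paper's direct use of $Q(A)=P(A,S=1)/P(S=1)$.
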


\begin{proof}
    The fact that $E_P[S|Z]> 0$ a.e. under $Q$ follows from the fact that 
    \begin{align*}
        P(E_P[S| Z] = 0, S=1) &= E_P[ I(E_P[S| Z] = 0) S] \\
                            &= E_P[ I(E_P[S| Z] = 0) E_P[S |Z]] \\
                            &= 0
    \end{align*}
    The fact that $E_P[S'|Z] > 0$ a.e. under $Q'$ can be shown by an identical proof. 
    
\end{proof}

\begin{lemma}
    \label{lem:switch}
    $Q'(Z \in A) = 0 \implies Q(Z \in A) = 0$, and as a consequence  $Q'(Z \in A) = 1 \implies Q(Z \in A) = 1$
\end{lemma}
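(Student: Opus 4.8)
The plan is to push both the hypothesis and the conclusion through the definitions of $Q$ and $Q'$ and the tower property, reducing everything to statements about the $P$-conditional expectations $E_P[S\mid Z]$ and $E_P[S'\mid Z]$, and then to quote the pointwise implication $E_P[S'\mid Z]=0 \implies E_P[S\mid Z]=0$ established just above the lemma. Concretely, I would first unfold
\[
Q'(Z\in A) = \frac{P(Z\in A,\, S'=1)}{P(S'=1)} = \frac{E_P[I(Z\in A)\,S']}{P(S'=1)} = \frac{E_P\big[I(Z\in A)\,E_P[S'\mid Z]\big]}{P(S'=1)},
\]
and similarly $Q(Z\in A)$ with $S,S'$ interchanged.

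Next, assume $Q'(Z\in A)=0$. Then $E_P\big[I(Z\in A)\,E_P[S'\mid Z]\big]=0$. Since $E_P[S'\mid Z]\ge 0$, the integrand is nonnegative, so it must vanish $P$-a.e.; that is, $E_P[S'\mid Z]=0$ holds $P$-almost everywhere on the event $\{Z\in A\}$. Now I would invoke the already-proven implication (the contrapositive form $E_P[S'\mid Z]=0 \implies E_P[S\mid Z]=0$, which holds everywhere under $P$) to conclude that $E_P[S\mid Z]=0$ as well, $P$-a.e. on $\{Z\in A\}$. Hence $I(Z\in A)\,E_P[S\mid Z]=0$ $P$-a.e., so $E_P\big[I(Z\in A)\,E_P[S\mid Z]\big]=0$, which is exactly $P(Z\in A, S=1)=0$, i.e. $Q(Z\in A)=0$. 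For the stated consequence, apply the first part to the complement: $Q'(Z\in A)=1$ gives $Q'(Z\in A^c)=0$, hence $Q(Z\in A^c)=0$, hence $Q(Z\in A)=1$.

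This is essentially a bookkeeping argument, so I do not expect a genuine obstacle; the one point requiring care is keeping the ``a.e.'' qualifiers straight, in particular that the implication $E_P[S'\mid Z]=0 \implies E_P[S\mid Z]=0$ was established to hold \emph{everywhere} under $P$ (not merely a.e.), which is what lets us restrict it to the event $\{Z\in A\}$ without worrying about null-set clashes. Everything else is just nonnegativity of conditional expectations plus the definitions of $Q$ and $Q'$.
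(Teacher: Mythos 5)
Your proposal is correct and follows essentially the same route as the paper: unfold $Q'(Z\in A)$ via the tower property, use nonnegativity to get $E_P[S'\mid Z]=0$ a.e. on $\{Z\in A\}$, invoke the previously established pointwise implication $E_P[S'\mid Z]=0 \implies E_P[S\mid Z]=0$ (for the explicit integral versions of these conditional expectations), and conclude $P(Z\in A, S=1)=0$, with the consequence handled by passing to the complement exactly as the paper does.
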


\begin{proof} 
    Since $Q'(Z \in A) = P(Z \in A, S'=1)/P(S'=1)$, we have
    \begin{align*}
        &Q'(Z \in A) = 0 \\
        &\iff P(Z \in A, S'=1) = E_P[ I(Z \in A) E_P[ S' |Z]] = 0\\
        &\iff I(Z \in A) E_P[ S' |Z] = 0 \text{ a.e. under } P \\
        &\iff P(Z \in A, E_P[S'|Z]>0) = 0 
    \end{align*} 
    By our earlier results, however,  $P(Z \in A, E_P[S|Z]>0) \leq P(Z \in A, E_P[S'|Z]>0) = 0$. Thus  $P(Z \in A, E_P[S|Z]>0) =0$, and the same set of equivalences tell us that $Q(Z \in A) = 0$. 

    The consequence follows from the fact that $Q'(Z \in A) = 1 \implies Q'(Z \in A^c) = 0 \implies Q(Z \in A^c) = 0 \implies Q(Z \in A) = 1$. 
\end{proof}

Now, define the function 
\begin{equation*}
    F_{U|z, S'=1}(t) = \frac{\int_0^t s(x, z) dx }{\int_0^1 s(x, z) dx}
\end{equation*}
so that $p_{sel} = F_{U| Z, S'}(p)$. For $z$ such that $\int_0^1 s(x, z) dx > 0$, the above is a monotone non-decreasing and continuous function of $t$. For such $z$, we can define its generalized inverse
\begin{equation*}
    F^{-1}_{U |z, S'=1}(x)  = \inf \{t : F_{U |z, S' = 1}(t) > x  \}
\end{equation*}
which satisfies $F_{U|z, S' =1}(F^{-1}_{U |z, S'=1}(x)) = x$ and $F_{U | z, S' = 1}(t) \leq x \iff t \leq  F^{-1}_{U | z, S' = 1}(x)$. By \Cref{lem:pos_selection} and the fact that $E_P[S' | Z] > E_P[S|Z]$ everywhere, $F_{U|Z S'=1}(t)$ will be a continuous function of $x$ with probability one both under $Q$ and $Q'$. The following string of equalities and inequalities then hold almost surely under $Q$:

\begin{align*}
    &Q(F_{U | Z, S' = 1}(p) \leq t | Z)\\
     &= Q(p \leq F_{U | Z, S' = 1}^{-1}(t) | Z) & \text{($F_{U|Z S'=1}(t)$ continuous a.e. under $Q$) } \\
    &\leq Q'(U \leq F_{U |Z, S' = 1}^{-1}(t) |Z) & \text{(selective dominance) } \\
    &= t & \text{ (\Cref{lem:ce}) }
\end{align*}
We justify the last equality. First we use \Cref{lem:ce} to note that a.e. under $Q'$ we have 
\begin{align*}
    Q'(U \leq t |Z) &= \frac{P(U \leq t, S' = 1 | Z )}{P( S' = 1 | Z)}\\
                    &= \frac{E_P[I(U \leq t)P(S' = 1 | U, Z ) | Z ] }{E_P[P(S' = 1 | U, Z)| Z]}\\
                    &= \frac{\int_0^t s(x, Z) dx }{\int_0^1 s(x, Z) dx}\\
                    &= F_{U| Z, S'=1}(t)
\end{align*}
Thus a.e. under $Q'$ it holds that
\begin{equation*}
    Q'(U \leq F_{U |Z, S' = 1}^{-1}(t) |Z) = F_{U| Z, S'=1}(F^{-1}_{U| Z, S'=1}(t)) = t
\end{equation*}
Writing the event $Q'(U \leq F_{U |Z, S' = 1}^{-1}(t) |Z) = t$ as $Z \in A$ for some Borel set $A$, \Cref{lem:switch} tells us that $Q'(Z \in A) = 1 \implies Q(Z \in A) = 1$, which establishes the claim. Finally, we \eqref{eq:adjusted_error_control_marg} from the law of total expectation.  

In the case that $Q(p \leq t | Z) = t$ a.e. under $Q$, then by the same computations as earlier we can directly compute that $Q(p \leq t | Z) = F_{U| Z, S'=1}(t)$ a.e. under $Q$, and we can directly compute that $Q(p \leq F^{-1}_{U | Z, S' = 1}(t) | Z) =  F_{U| Z, S'=1}(F^{-1}_{U | Z, S' = 1}(t)) = t $ a.e. under $Q$, i.e., we get exact equality a.e. under $Q$ as claimed in \eqref{eq:adjusted_error_control_equality}.

\subsection{Proof of \Cref{thm:density}}
\label{sec:density_proof}

Fix a probability distribution $P$ in $H_0$ and let $f_z(x)$ be the conditional PDF of $p$ given $Z=z$. If $P(S=1) = 0$, then our convention is that $P(p \leq \alpha | Z, S=1) = 0$, so \eqref{eq:selective_dominance} holds automatically. Thus we can consider only selection functions $s(x, z)$ for which $P(S=1) > 0$. We argued in \Cref{sec:adjustment_proof} that when $P(S=1) > 0$ we also have $P(S' = 1)> 0$. Correspondingly, we define the conditional measures $Q(A)= P(A | S=1) = P(A, S=1)/P(S=1)$ and $Q'(A)= P(A | S=1) = P(A, S=1)/P(S=1)$. 

For now, consider fixed $z$ for which $\int_0^1 s(x, z) f_z(x) dx >0$. For such $z$, we argued in \Cref{sec:adjustment_proof} that 
 $\int_0^1 s(x, z) f_z(x) dx >0$ also. For such $z$, we argue that the inequality 
 \begin{equation*}
    \frac{\int_{0}^{t} s(x, z) f_z(x) dx }{\int_{0}^{1} s(x, z) f_z(x) dx } \leq \frac{\int_{0}^{t} s(x, z) dx}{\int_{0}^{1} s(x, z) dx } \\
 \end{equation*}
 holds for all $t\in[0, 1]$. If $\int_{0}^{t} s(x, z) f_z(x) dx =0$ then the inequality trivially holds. Otherwise, we see that 
 \begin{align*}
    \frac{ \int_{0}^{1} s(x, z) f_z(x) dx }{\int_{0}^{t} s(x, z) f_z(x) dx } &= 1 + \frac{\int_{t}^{1} s(x, z) f_z(x) dx }{\int_{0}^{t} s(x, z) f_z(x) dx }\\
    &\geq 1 + \frac{f_z(t)\int_{t}^{1} s(x ,z) dx }{f_z(t)\int_{0}^{t} s(x, z)  dx }\\
    &= 1 + \frac{\int_{t}^{1} s(x, z) dx }{\int_{0}^{t} s(x, z)  dx }\\
    &= \frac{ \int_{0}^{1} s(x, z) dx }{\int_{0}^{t} s(x, z)  dx },
\end{align*}
which is sufficient to imply the claim.

To finish the argument, we note by \Cref{lem:pos_selection} that $E[S|Z] = \int_0^1 s(x, Z) f_Z(x) dx >0 $ a.e. under $Q$. Applying \Cref{lem:ce}, we find that a.e. under $Q$
\begin{align*}
    Q(p \leq t |Z) &= \frac{P(p \leq t, S = 1 | Z )}{P(S = 1 | Z)}\\
                    &= \frac{E_P[I(p \leq t)P(S = 1 | p, Z ) | Z ] }{E_P[I(p \leq t)P( S = 1 | p, Z)| Z]}\\
                    &= \frac{\int_0^t s(x, Z) f_Z(x) dx }{\int_0^1 s(x, Z) f_Z(x) dx}\\
\end{align*}
and a.e. under $Q'$ 
\begin{align*}
    Q'(U \leq t |Z) &= \frac{P(U \leq t, S' = 1 | Z )}{P(S' = 1 | Z)}\\
                    &= \frac{E_P[I(U \leq t) P( S' = 1 | U, Z ) | Z ] }{E_P[P(S' = 1 | U, Z)| Z]}\\
                    &= \frac{\int_0^t s(x, Z) dx }{\int_0^1 s(x, Z) dx}\\
\end{align*}
But, we can write the event that $Q'(U \leq t |Z) = \frac{\int_0^t s(x, Z) dx }{\int_0^1 s(x, Z) dx}$ as $Z \in A$ for some Borel set $A$, and \Cref{lem:switch} tells us that since $Z \in A$ is measure one under $Q'$ it is also measure one under $Q$. That is, we have a.e. under $Q$ that 
\begin{equation*}
    Q'(U \leq t |Z) = \frac{\int_0^t s(x, Z) dx }{\int_0^1 s(x, Z) dx}
\end{equation*}
Combined, we have a.e. under $Q$ that $ Q(p \leq t |Z) \leq Q'(U \leq t |Z)$, which is the desired claim. 

Now we show the converse part of the statement. Again fix a probability distribution $P$ in $H_0$, and suppose there is a set $B$ such that $P(Z \in B)> 0$ and for all $z \in B$ the conditional density $f_{z}$ is everywhere continuous and not non-decreasing. More specifically, all we require is that for all $z\in E$, there are two points $y_1(z) < y_2(z)$ such that $f_{z}(x)$ is strictly larger in a neighborhood around $y_1(z)$ than in a neighborhood around $y_2(z)$, and these neighborhoods are disjoint. In particular for $\epsilon > 0$ let $N_{\epsilon}(y) = (y - \epsilon, y + \epsilon )$ be a ball around $y$. Then we need there to be $y_1(z)$, $y_2(z)$, $\epsilon(z) > 0$, and some $\eta(z) > 0$ such that, for all $w_1 \in N_{\epsilon(z)}(y_1(z))$ and $w_2 \in N_{\epsilon(z)}(y_2(z))$, $w_1  < w_2$ but $f_{z}(w_2) + \eta(z) < f_{z}(w_1) $. If $f_z$ is everywhere continuous and not non-decreasing, this will be true. Define $H(z) = \inf \{f_{z}(w_1): w_1 \in N_{\epsilon(z)}(y_1(z))\}$ and  $L(z) = \sup \{f_{z}(w_2): w_2 \in N_{\epsilon(z)}(y_2(z))\}$ so $H(z)  > L(z)$. Then consider the selection function 
\begin{equation*}
s(x, z)= \begin{cases}
1 &\text{if } x \in N_{\epsilon(z)}(y_1(z)) \cup N_{\epsilon(z)}(y_2(z)) \text{ and } z \in B \\
0 &\text{otherwise }.
\end{cases}
\end{equation*}
We show that, for this selection function, selection happens with positive probability:
\begin{align*}
    P(S = 1) &= E_P[E_P[ P(S = 1 | p, Z) | Z ]] \\
             &= E_P[\int_0^1 s(x, Z)f_Z(x) dx]\\
             &\geq E_P[\int_0^1 s(x, Z) f_Z(x)dx I(Z \in B)]\\
             & \geq E_P[2\epsilon(Z)\eta(Z)I(Z \in B)] > 0
\end{align*}
Thus the conditional probability measure $Q$ and $Q'$ are well defined. Next we show that $Q(Z \in E) = 1$, so we know that $Z \in E$ a.e. under $Q$:
\begin{align*}
    Q(Z \in E) &= \frac{P(Z \in E, S=1)}{P(S=1)}\\
               &= \frac{E_P[I(Z \in E) E_P[ P(S=1 | p, Z)| Z]] }{E_P[E_P[P(S=1| p, Z) | Z] ]}\\
               &= \frac{E_P[ I(Z \in E) \int_0^1 s(x, Z) f_Z(x) dx ] }{E_P[\int_0^1 s(x, Z) f_Z(x) dx ]}\\
               &= \frac{E_P[ I(Z \in E) \int_0^1 s(x, Z) f_Z(x) dx ] }{E_P[ I(Z \in E) \int_0^1 s(x, Z) f_Z(x) dx ]}\\
               &= 1,
\end{align*}
where we have noted that $s(x, z)$ is zero everywhere if $z \not \in E$. 

Finally, for $z \in E$, define $t(z)$ to be a value such that $t(z) > w_1$ for all $w_1 \in N_{\epsilon(z)}(y_1(z))$ and $t(z) < w_2$ for all $w_2 \in  N_{\epsilon(z)}(y_2(z))$. We know from earlier that a.e. under $Q$ 
\begin{equation*}
    Q(p \leq t(Z) |Z) = \frac{\int_0^{t(Z)} s(x, Z) f_Z(x) dx }{\int_0^1 s(x, Z) f_Z(x) dx},
\end{equation*}
where the numerator is at least $2\epsilon(Z)\eta(Z) >0$, so also a.e. under $Q$
so, 
\begin{align*}
    \frac{1}{Q(p \leq t(Z) |Z)} &= 1 + \frac{\int_{t(Z)}^1 s(x, Z) f_Z(x) dx }{\int_0^{t(Z)} s(x, Z) f_Z(x) dx}\\
                                &\leq 1 + \frac{2 \epsilon(Z) L(Z)}{2 \epsilon(Z)H(Z)}\\
                                &< 2, \\
\end{align*}
and 
\begin{align*}
    Q(p \leq t(Z) |Z) > 1/2.
\end{align*}
On the other hand, we also have from above that a.e. under $Q$ 
\begin{align*}
    Q'(U \leq t(Z) |Z) &= \frac{\int_0^{t(Z)} s(x, Z) dx }{\int_0^1 s(x, Z) dx}\\
                       &= \frac{2\epsilon(Z)}{4 \epsilon(Z)}\\
                       &= 1/2.
\end{align*}
Thus we have shown that, a.e. under $Q$, there exists some $t \in [0, 1]$ such that $Q(p \leq t | Z) > Q'(U \leq t)$, which means that $p$ is not selectively dominant given $Z$.

\subsection{Proof of \Cref{cor:hyb}}
\label{sec:hyb_proof_appdx}

Suppose we have $n$ independent and selectively dominant p-values $p_i$ for the null hypotheses $H_{0, i}$. Suppose we use $p_j$ to test $H_{0, j}$ only when we observe that $p_j$ is strictly larger than $\beta_n$ but still the smallest of all the p-values. We can apply \Cref{sec:dominance}'s framework with $p=p_j$, $Z = p_{-j}$ and the selection function $s(x, z) = I(\beta_n < x < \min_{k} z_k)$. It is straightforward to see that \Cref{thm:adjustment}'s selective p-value is $p_{sel}$ is $(p_j - \beta_n)/(\min_{i \neq j} p_i - \beta_n)$, and \Cref{thm:adjustment} therefore tells us that 
\begin{equation*}
    P_{H_{0, j}}\left( \frac{p_j - \beta_n}{\min_{i \neq j} p_i - \beta_n} \leq \frac{\alpha -\beta}{1-\beta}  \;\middle|\;  S=1 \right) \leq \frac{\alpha - \beta}{1-\beta}.
\end{equation*}
Re-arranging things we get 
\begin{equation}
    \label{eq:hybrid_tool}
    P_{H_{0, j}}\left( p_j  \leq \frac{\alpha -\beta}{1-\beta}\min_{i \neq j} p_i  + \left( 1 - \frac{\alpha -\beta}{1-\beta}\right) \beta_n  \;\middle|\; S=1  \right) \leq \frac{\alpha - \beta}{1-\beta}.
\end{equation}

Letting $W$ be the index of the smallest p-value (with ties broken randomly), we can now prove the claim that rejecting $H_{0, W}$ when 
\begin{equation*}
    p_{(1)} \leq \frac{\alpha-\beta}{1-\beta} p_{(2)} + \left(1 - \frac{\alpha-\beta}{1-\beta} \right) \beta_n 
\end{equation*}
controls Type I error at level $\alpha$. Considering any distribution $P$, let $B$ be the event that a p-value corresponding to a true null is at most $\beta_n$.  Then $P(B) \leq \beta$ and $P(B^c) \geq 1-\beta$. If $H_{0, j}$ is false, we trivially have have $P(\text{falsely reject } H_{0, W} | W = j, B^c) = 0 \leq (\alpha-\beta)/(1-\beta)$. If $H_{0, j}$ is true, the event $W=j \text{and} B^c$ is the same event as selecting $p_j$ for inference in \eqref{eq:hybrid_tool}, so 
\begin{align*}
    P(\text{falsely reject } H_{0, W} | W = j, B^c) &= P\left(p_{(1)} \leq \frac{\alpha-\beta}{1-\beta} p_{(2)} + \left(1 - \frac{\alpha-\beta}{1-\beta} \right) \beta_n  \;\middle|\; W = j, B^c\right)\\
    &=  P\left(p_j \leq \frac{\alpha-\beta}{1-\beta} \min_{i \neq j}p_i + \left(1 - \frac{\alpha-\beta}{1-\beta} \right) \beta_n  \;\middle|\; W = j, B^c\right)\\
    &\leq \frac{\alpha - \beta}{1-\beta}.
\end{align*}
Our result then follows from law of total probability:
\begin{align*}
    &P(\text{falsely reject } H_{0, W} )\\
    &=P(\text{falsely reject } H_{0, W}, B ) + P(\text{falsely reject } H_{0, W}, B^c )\\
    &\leq P(B)  + P(B^c) P(\text{falsely reject } H_{0, W}| B^c )\\
    &\leq P(B)  + P(B^c)  \sum_{j=1}^n P(\text{falsely reject } H_{0, W}|W =j, B^c ) P(W=j | B^c)\\
    &= P(B)  + (1 - P(B)) \frac{\alpha-\beta}{1-\beta}  \sum_{j=1}^n  P(W=j | B^c)\\
    &\frac{\alpha -\beta}{1-\beta} + \frac{1-\alpha}{1-\beta} P(B)\\
    &\leq \frac{\alpha -\beta}{1-\beta} + \frac{1-\alpha}{1-\beta} \beta\\
    &= \alpha. 
\end{align*}

\subsection{Proof of \Cref{cor:hyb_closed} and  \Cref{cor:cond_closed}}

It suffices to argue that closing our hybrid global null testing procedure rejects $H_{0, (k)}$ if and only if 
\begin{equation*}
p_{(j)} \leq \frac{\alpha - \beta}{1-\beta} p_{(j + 1)} + \left(1 - \frac{\alpha - \beta}{1-\beta} \right)\beta_{n - j + 1} 
\end{equation*}
for all $j \leq k$. We will define $p_{(n+1)} = 1$ so that the right-hand side of the above equals $\alpha$ when $j = n$. Correspondingly, for subsets $I \subseteq [p]$ of size one, we define the hybrid procedure to reject the global null $H_{I, 0}$ when the lone p-value is at most $\alpha$. For subsets $I$ of size strictly more than one, supposing that the smallest p-value in $I$ is the $\ell$th smallest p-value and the second smallest p-value in $I$ is the $m$th smallest p-value, the hybrid procedure rejects the global null $H_{I, 0}$ when
\begin{equation*}
    p_{(\ell)} \leq \frac{\alpha-\beta}{1-\beta} p_{(m)} + \left(1 - \frac{\alpha - \beta}{1-\beta} \right)\beta_{|I|}.
\end{equation*}

\noindent \textbf{Necessity: } For $1 \leq j \leq k$, let $I_{n-j + 1}$ be the size $n - j + 1$ subset that excludes the $j - 1$ smallest p-values (when $j = 1$ then $I = [p]$). This subset includes the index of the $k$th smallest p-value, so we must reject $H_{0, I}$ to reject $H_{0, (k)}$. It rejects exactly when
\begin{equation*}
    p_{(j)} \leq \frac{\alpha - \beta}{1-\beta} p_{(j + 1)} + \left(1 - \frac{\alpha - \beta}{1-\beta} \right)\beta_{n - j + 1}
\end{equation*}
so our conditions are necessary. \newline 

\noindent \textbf{Sufficiency: } Consider a subset $I$ that contains the index of the $k$th smallest p-value. If it is size-one, then we reject because 
\begin{equation*}
    p_{(k)} \leq \frac{\alpha - \beta}{1-\beta}p_{(k + 1)} + \left(1 - \frac{\alpha - \beta}{1-\beta} \right) \beta_{n - k + 1} \leq \frac{\alpha - \beta}{1-\beta} + \left(1 - \frac{\alpha - \beta}{1-\beta} \right)\beta = \alpha. 
\end{equation*}
Now suppose that $I$ is size $n-j+1$ for some $j < n$.  Its smallest p-value is the $\ell$th smallest p-value for some $\ell \leq k$ and $\ell \leq j$, and its second smallest p-value is the $m$th smallest p-value for some $m > \ell$. We reject because 
\begin{align*}
    p_{(\ell)} &\leq \frac{\alpha-\beta}{1-\beta} p_{(\ell + 1)} + \left(1 - \frac{\alpha - \beta}{1-\beta} \right)\beta_{ n - \ell + 1}\\
    &\leq \frac{\alpha-\beta}{1-\beta} p_{(m)} + \left(1 - \frac{\alpha - \beta}{1-\beta} \right)\beta_{ n - j + 1} \\
    &= \frac{\alpha-\beta}{1-\beta} p_{(m)} + \left(1 - \frac{\alpha - \beta}{1-\beta} \right)\beta_{|I|}.
\end{align*}

\end{appendix}

\end{document}